\newtheorem{theorem}{Theorem}[section]
\newtheorem{axiom}[theorem]{Axiom}
\newtheorem{claim}[theorem]{Claim}
\newtheorem{conjecture}[theorem]{Conjecture}
\newtheorem{corollary}[theorem]{Corollary}
\newtheorem{definition}[theorem]{Definition}
\newtheorem{example}[theorem]{Example}
\newtheorem{exercise}[theorem]{Exercise}
\newtheorem{lemma}[theorem]{Lemma}
\newtheorem{proposition}[theorem]{Proposition}
\newtheorem{remark}[theorem]{Remark}
\numberwithin{equation}{section}
\numberwithin{figure}{section}
\let\pdfoutput=\undefined\fi
\chardef\@x10\chardef\@xv60
\def\tcitime{
\def\@time{%
  \@minute\time\@hour\@minute\divide\@hour\@xv
  \ifnum\@hour<\@x 0\fi\the\@hour:%
  \multiply\@hour\@xv\advance\@minute-\@hour
  \ifnum\@minute<\@x 0\fi\the\@minute
  }}%
\def\x@hyperref#1#2#3{%
   \catcode`\~ = 12
   \catcode`\$ = 12
   \catcode`\_ = 12
   \catcode`\# = 12
   \catcode`\& = 12
   \y@hyperref{#1}{#2}{#3}%
}
\def\y@hyperref#1#2#3#4{%
   #2\ref{#4}#3
   \catcode`\~ = 13
   \catcode`\$ = 3
   \catcode`\_ = 8
   \catcode`\# = 6
   \catcode`\& = 4
}
\def\QCTOpt[#1]#2{%
  \def\QCTOptB{#1}
  \def\QCTOptA{#2}
}
\def\QCTNOpt#1{%
  \def\QCTOptA{#1}
  \let\QCTOptB\empty
}
\def\Qct{%
  \@ifnextchar[{%
    \QCTOpt}{\QCTNOpt}
}
\def\QCBOpt[#1]#2{%
  \def\QCBOptB{#1}%
  \def\QCBOptA{#2}%
}
\def\QCBNOpt#1{%
  \def\QCBOptA{#1}%
  \let\QCBOptB\empty
}
\def\Qcb{%
  \@ifnextchar[{%
    \QCBOpt}{\QCBNOpt}%
}
\def\PrepCapArgs{%
  \ifx\QCBOptA\empty
    \ifx\QCTOptA\empty
      {}%
    \else
      \ifx\QCTOptB\empty
        {\QCTOptA}%
      \else
        [\QCTOptB]{\QCTOptA}%
      \fi
    \fi
  \else
    \ifx\QCBOptA\empty
      {}%
    \else
      \ifx\QCBOptB\empty
        {\QCBOptA}%
      \else
        [\QCBOptB]{\QCBOptA}%
      \fi
    \fi
  \fi
}
\def\GRAPHICSPS#1{%
 \ifcase\GRAPHICSTYPE
   \special{ps: #1}%
 \or
   \special{language "PS", include "#1"}%
 \fi
}%
\def\graffile#1#2#3#4{%
    \bgroup
	   \@inlabelfalse
       \leavevmode
       \@ifundefined{bbl@deactivate}{\def~{\string~}}{\activesoff}%
        \raise -#4 \BOXTHEFRAME{%
           \hbox to #2{\raise #3\hbox to #2{\null #1\hfil}}}%
    \egroup
}%
\def\draftbox#1#2#3#4{%
 \leavevmode\raise -#4 \hbox{%
  \frame{\rlap{\protect\tiny #1}\hbox to #2%
   {\vrule height#3 width\z@ depth\z@\hfil}%
  }%
 }%
}%
\let\nographics=\@msidraft
\newif\ifwasdraft
\def\GRAPHIC#1#2#3#4#5{%
   \ifnum\@msidraft=\@ne\draftbox{#2}{#3}{#4}{#5}%
   \else\graffile{#1}{#3}{#4}{#5}%
   \fi
}
\def\addtoLaTeXparams#1{%
    \edef\LaTeXparams{\LaTeXparams #1}}%
\newif\ifBoxFrame \BoxFramefalse
\newif\ifOverFrame \OverFramefalse
\newif\ifUnderFrame \UnderFramefalse
\def\BOXTHEFRAME#1{%
   \hbox{%
      \ifBoxFrame
         \frame{#1}%
      \else
         {#1}%
      \fi
   }%
}
\def\doFRAMEparams#1{\BoxFramefalse\OverFramefalse\UnderFramefalse\readFRAMEparams#1\end}%
\def\readFRAMEparams#1{%
 \ifx#1\end%
  \let\next=\relax
  \else
  \ifx#1i\dispkind=\z@\fi
  \ifx#1d\dispkind=\@ne\fi
  \ifx#1f\dispkind=\tw@\fi
  \ifx#1t\addtoLaTeXparams{t}\fi
  \ifx#1b\addtoLaTeXparams{b}\fi
  \ifx#1p\addtoLaTeXparams{p}\fi
  \ifx#1h\addtoLaTeXparams{h}\fi
  \ifx#1X\BoxFrametrue\fi
  \ifx#1O\OverFrametrue\fi
  \ifx#1U\UnderFrametrue\fi
  \ifx#1w
    \ifnum\@msidraft=1\wasdrafttrue\else\wasdraftfalse\fi
    \@msidraft=\@ne
  \fi
  \let\next=\readFRAMEparams
  \fi
 \next
 }%
\def\IFRAME#1#2#3#4#5#6{%
      \bgroup
      \let\QCTOptA\empty
      \let\QCTOptB\empty
      \let\QCBOptA\empty
      \let\QCBOptB\empty
      #6%
      \parindent=0pt
      \leftskip=0pt
      \rightskip=0pt
      \setbox0=\hbox{\QCBOptA}%
      \@tempdima=#1\relax
      \ifOverFrame
          \typeout{This is not implemented yet}%
          \show\HELP
      \else
         \ifdim\wd0>\@tempdima
            \advance\@tempdima by \@tempdima
            \ifdim\wd0 >\@tempdima
               \setbox1 =\vbox{%
                  \unskip\hbox to \@tempdima{\hfill\GRAPHIC{#5}{#4}{#1}{#2}{#3}\hfill}%
                  \unskip\hbox to \@tempdima{\parbox[b]{\@tempdima}{\QCBOptA}}%
               }%
               \wd1=\@tempdima
            \else
               \textwidth=\wd0
               \setbox1 =\vbox{%
                 \noindent\hbox to \wd0{\hfill\GRAPHIC{#5}{#4}{#1}{#2}{#3}\hfill}\\%
                 \noindent\hbox{\QCBOptA}%
               }%
               \wd1=\wd0
            \fi
         \else
            \ifdim\wd0>0pt
              \hsize=\@tempdima
              \setbox1=\vbox{%
                \unskip\GRAPHIC{#5}{#4}{#1}{#2}{0pt}%
                \break
                \unskip\hbox to \@tempdima{\hfill \QCBOptA\hfill}%
              }%
              \wd1=\@tempdima
           \else
              \hsize=\@tempdima
              \setbox1=\vbox{%
                \unskip\GRAPHIC{#5}{#4}{#1}{#2}{0pt}%
              }%
              \wd1=\@tempdima
           \fi
         \fi
         \@tempdimb=\ht1
         \advance\@tempdimb by -#2
         \advance\@tempdimb by #3
         \leavevmode
         \raise -\@tempdimb \hbox{\box1}%
      \fi
      \egroup%
}%
\def\DFRAME#1#2#3#4#5{%
  \vspace\topsep
  \hfil\break
  \bgroup
     \leftskip\@flushglue
	 \rightskip\@flushglue
	 \parindent\z@
	 \parfillskip\z@skip
     \let\QCTOptA\empty
     \let\QCTOptB\empty
     \let\QCBOptA\empty
     \let\QCBOptB\empty
	 \vbox\bgroup
        \ifOverFrame 
           #5\QCTOptA\par
        \fi
        \GRAPHIC{#4}{#3}{#1}{#2}{\z@}%
        \ifUnderFrame 
           \break#5\QCBOptA
        \fi
	 \egroup
  \egroup
  \vspace\topsep
  \break
}%
\def\FFRAME#1#2#3#4#5#6#7{%
  \@ifundefined{floatstyle}
    {
     \begin{figure}[#1]%
    }
    {
	 \ifx#1h
      \begin{figure}[H]%
	 \else
      \begin{figure}[#1]%
	 \fi
	}
  \let\QCTOptA\empty
  \let\QCTOptB\empty
  \let\QCBOptA\empty
  \let\QCBOptB\empty
  \ifOverFrame
    #4
    \ifx\QCTOptA\empty
    \else
      \ifx\QCTOptB\empty
        \caption{\QCTOptA}%
      \else
        \caption[\QCTOptB]{\QCTOptA}%
      \fi
    \fi
    \ifUnderFrame\else
      \label{#5}%
    \fi
  \else
    \UnderFrametrue%
  \fi
  \begin{center}\GRAPHIC{#7}{#6}{#2}{#3}{\z@}\end{center}%
  \ifUnderFrame
    #4
    \ifx\QCBOptA\empty
      \caption{}%
    \else
      \ifx\QCBOptB\empty
        \caption{\QCBOptA}%
      \else
        \caption[\QCBOptB]{\QCBOptA}%
      \fi
    \fi
    \label{#5}%
  \fi
  \end{figure}%
 }%
\def\makeactives{
  \catcode`\"=\active
  \catcode`\;=\active
  \catcode`\:=\active
  \catcode`\'=\active
  \catcode`\~=\active
}
   \gdef\activesoff{%
      \def"{\string"}%
      \def;{\string;}%
      \def:{\string:}%
      \def'{\string'}%
      \def~{\string~}%
    }
\def\FRAME#1#2#3#4#5#6#7#8{%
 \bgroup
 \ifnum\@msidraft=\@ne
   \wasdrafttrue
 \else
   \wasdraftfalse%
 \fi
 \def\LaTeXparams{}%
 \dispkind=\z@
 \def\LaTeXparams{}%
 \doFRAMEparams{#1}%
 \ifnum\dispkind=\z@\IFRAME{#2}{#3}{#4}{#7}{#8}{#5}\else
  \ifnum\dispkind=\@ne\DFRAME{#2}{#3}{#7}{#8}{#5}\else
   \ifnum\dispkind=\tw@
    \edef\@tempa{\noexpand\FFRAME{\LaTeXparams}}%
    \@tempa{#2}{#3}{#5}{#6}{#7}{#8}%
    \fi
   \fi
  \fi
  \ifwasdraft\@msidraft=1\else\@msidraft=0\fi{}%
  \egroup
 }%
\def\TEXUX#1{"texux"}
\def\limfunc#1{\mathop{\rm #1}}%
\def\func#1{\mathop{\rm #1}\nolimits}%
\long\def\QQQ#1#2{%
     \long\expandafter\def\csname#1\endcsname{#2}}%
\long\def\QQA#1#2{}%
\def\QTR#1#2{{\csname#1\endcsname {#2}}}%
\def\EXPAND#1[#2]#3{}%
\def\NOEXPAND#1[#2]#3{}%
\def\LaTeXparent#1{}%
\def\ChildStyles#1{}%
\def\ChildDefaults#1{}%
\def\QTagDef#1#2#3{}%
  \providecommand{\UNICODE}[2][]{\protect\rule{.1in}{.1in}}
  \providecommand{\U}[1]{\protect\rule{.1in}{.1in}}
\def\QQfnmark#1{\footnotemark}
 \def\abstract{%
  \if@twocolumn
   \section*{Abstract (Not appropriate in this style!)}%
   \else \small 
   \begin{center}{\bf Abstract\vspace{-.5em}\vspace{\z@}}\end{center}%
   \quotation 
   \fi
  }%
   \def\registered{\relax\ifmmode{}\r@gistered
                    \else$\m@th\r@gistered$\fi}%
 \def\r@gistered{^{\ooalign
  {\hfil\raise.07ex\hbox{$\scriptstyle\rm\text{R}$}\hfil\crcr
  \mathhexbox20D}}}}{}%
\newdimen\theight
\def\newfmtname{LaTeX2e}
  \DeclareOldFontCommand{\rm}{\normalfont\rmfamily}{\mathrm}
  \DeclareOldFontCommand{\sf}{\normalfont\sffamily}{\mathsf}
  \DeclareOldFontCommand{\tt}{\normalfont\ttfamily}{\mathtt}
  \DeclareOldFontCommand{\bf}{\normalfont\bfseries}{\mathbf}
  \DeclareOldFontCommand{\it}{\normalfont\itshape}{\mathit}
  \DeclareOldFontCommand{\sl}{\normalfont\slshape}{\@nomath\sl}
  \DeclareOldFontCommand{\sc}{\normalfont\scshape}{\@nomath\sc}
\def\alpha{{\Greekmath 010B}}%
\def\beta{{\Greekmath 010C}}%
\def\gamma{{\Greekmath 010D}}%
\def\delta{{\Greekmath 010E}}%
\def\epsilon{{\Greekmath 010F}}%
\def\zeta{{\Greekmath 0110}}%
\def\eta{{\Greekmath 0111}}%
\def\theta{{\Greekmath 0112}}%
\def\iota{{\Greekmath 0113}}%
\def\kappa{{\Greekmath 0114}}%
\def\lambda{{\Greekmath 0115}}%
\def\mu{{\Greekmath 0116}}%
\def\nu{{\Greekmath 0117}}%
\def\xi{{\Greekmath 0118}}%
\def\pi{{\Greekmath 0119}}%
\def\rho{{\Greekmath 011A}}%
\def\sigma{{\Greekmath 011B}}%
\def\tau{{\Greekmath 011C}}%
\def\upsilon{{\Greekmath 011D}}%
\def\phi{{\Greekmath 011E}}%
\def\chi{{\Greekmath 011F}}%
\def\psi{{\Greekmath 0120}}%
\def\omega{{\Greekmath 0121}}%
\def\varepsilon{{\Greekmath 0122}}%
\def\vartheta{{\Greekmath 0123}}%
\def\varpi{{\Greekmath 0124}}%
\def\varrho{{\Greekmath 0125}}%
\def\varsigma{{\Greekmath 0126}}%
\def\varphi{{\Greekmath 0127}}%
\def\nabla{{\Greekmath 0272}}
\def\FindBoldGroup{%
   {\setbox0=\hbox{$\mathbf{x\global\edef\theboldgroup{\the\mathgroup}}$}}%
}
\def\Greekmath#1#2#3#4{%
    \if@compatibility
        \ifnum\mathgroup=\symbold
           \mathchoice{\mbox{\boldmath$\displaystyle\mathchar"#1#2#3#4$}}%
                      {\mbox{\boldmath$\textstyle\mathchar"#1#2#3#4$}}%
                      {\mbox{\boldmath$\scriptstyle\mathchar"#1#2#3#4$}}%
                      {\mbox{\boldmath$\scriptscriptstyle\mathchar"#1#2#3#4$}}%
        \else
           \mathchar"#1#2#3#4%
        \fi 
    \else 
        \FindBoldGroup
        \ifnum\mathgroup=\theboldgroup 
           \mathchoice{\mbox{\boldmath$\displaystyle\mathchar"#1#2#3#4$}}%
                      {\mbox{\boldmath$\textstyle\mathchar"#1#2#3#4$}}%
                      {\mbox{\boldmath$\scriptstyle\mathchar"#1#2#3#4$}}%
                      {\mbox{\boldmath$\scriptscriptstyle\mathchar"#1#2#3#4$}}%
        \else
           \mathchar"#1#2#3#4%
        \fi     	    
	  \fi}
\newif\ifGreekBold  \GreekBoldfalse
\let\SAVEPBF=\pbf
\def\pbf{\GreekBoldtrue\SAVEPBF}%
  \newcounter{equationnumber}  
  \def\mathletters{%
     \addtocounter{equation}{1}
     \edef\@currentlabel{\theequation}%
     \setcounter{equationnumber}{\c@equation}
     \setcounter{equation}{0}%
     \edef\theequation{\@currentlabel\noexpand\alph{equation}}%
  }
    \def\BibTeX{{\rm B\kern-.05em{\sc i\kern-.025em b}\kern-.08em
                 T\kern-.1667em\lower.7ex\hbox{E}\kern-.125emX}}}{}%
\def\AmS{{\protect\usefont{OMS}{cmsy}{m}{n}%
                A\kern-.1667em\lower.5ex\hbox{M}\kern-.125emS}}}{}%
\def\@@eqncr{\let\@tempa\relax
    \ifcase\@eqcnt \def\@tempa{& & &}\or \def\@tempa{& &}%
      \else \def\@tempa{&}\fi
     \@tempa
     \if@eqnsw
        \iftag@
           \@taggnum
        \else
           \@eqnnum\stepcounter{equation}%
        \fi
     \fi
     \global\tag@false
     \global\@eqnswtrue
     \global\@eqcnt\z@\cr}
\def\TCItag{\@ifnextchar*{\@TCItagstar}{\@TCItag}}
\def\@TCItag#1{%
    \global\tag@true
    \global\def\@taggnum{(#1)}}
\def\@TCItagstar*#1{%
    \global\tag@true
    \global\def\@taggnum{#1}}
\def\tsum{\mathop{\textstyle \sum }}%
\def\tbigoplus{\mathop{\textstyle \bigoplus }}%
\def\ExitTCILatex{\makeatother }
\if@compatibility\message{amsmath already loaded}\fi\aftergroup\ExitTCILatex}
\if@compatibility\message{amstex already loaded}\fi\aftergroup\ExitTCILatex}
\if@compatibility\message{amsgen already loaded}\fi\aftergroup\ExitTCILatex}
\let\DOTSI\relax
\def\RIfM@{\relax\ifmmode}%
\def\FN@{\futurelet\next}%
\def\iint{\DOTSI\intno@\tw@\FN@\ints@}%
\def\iiint{\DOTSI\intno@\thr@@\FN@\ints@}%
\def\iiiint{\DOTSI\intno@4 \FN@\ints@}%
\def\idotsint{\DOTSI\intno@\z@\FN@\ints@}%
\def\ints@{\findlimits@\ints@@}%
\newif\iflimtoken@
\newif\iflimits@
\def\findlimits@{\limtoken@true\ifx\next\limits\limits@true
 \else\ifx\next\nolimits\limits@false\else
 \limtoken@false\ifx\ilimits@\nolimits\limits@false\else
 \ifinner\limits@false\else\limits@true\fi\fi\fi\fi}%
\def\multint@{\int\ifnum\intno@=\z@\intdots@                          
 \else\intkern@\fi                                                    
 \ifnum\intno@>\tw@\int\intkern@\fi                                   
 \ifnum\intno@>\thr@@\int\intkern@\fi                                 
 \int}
\def\multintlimits@{\intop\ifnum\intno@=\z@\intdots@\else\intkern@\fi
 \ifnum\intno@>\tw@\intop\intkern@\fi
 \ifnum\intno@>\thr@@\intop\intkern@\fi\intop}%
\def\intic@{%
    \mathchoice{\hskip.5em}{\hskip.4em}{\hskip.4em}{\hskip.4em}}%
\def\negintic@{\mathchoice
 {\hskip-.5em}{\hskip-.4em}{\hskip-.4em}{\hskip-.4em}}%
\def\ints@@{\iflimtoken@                                              
 \def\ints@@@{\iflimits@\negintic@
   \mathop{\intic@\multintlimits@}\limits                             
  \else\multint@\nolimits\fi                                          
  \eat@}
 \else                                                                
 \def\ints@@@{\iflimits@\negintic@
  \mathop{\intic@\multintlimits@}\limits\else
  \multint@\nolimits\fi}\fi\ints@@@}%
\def\intkern@{\mathchoice{\!\!\!}{\!\!}{\!\!}{\!\!}}%
\def\plaincdots@{\mathinner{\cdotp\cdotp\cdotp}}%
\def\intdots@{\mathchoice{\plaincdots@}%
 {{\cdotp}\mkern1.5mu{\cdotp}\mkern1.5mu{\cdotp}}%
 {{\cdotp}\mkern1mu{\cdotp}\mkern1mu{\cdotp}}%
 {{\cdotp}\mkern1mu{\cdotp}\mkern1mu{\cdotp}}}%
\def\RIfM@{\relax\protect\ifmmode}
\def\text{\RIfM@\expandafter\text@\else\expandafter\mbox\fi}
\let\nfss@text\text
\def\text@#1{\mathchoice
   {\textdef@\displaystyle\f@size{#1}}%
   {\textdef@\textstyle\tf@size{\firstchoice@false #1}}%
   {\textdef@\textstyle\sf@size{\firstchoice@false #1}}%
   {\textdef@\textstyle \ssf@size{\firstchoice@false #1}}%
   \glb@settings}
\def\textdef@#1#2#3{\hbox{{%
                    \everymath{#1}%
                    \let\f@size#2\selectfont
                    #3}}}
\newif\iffirstchoice@
\def\Let@{\relax\iffalse{\fi\let\\=\cr\iffalse}\fi}%
\def\vspace@{\def\vspace##1{\crcr\noalign{\vskip##1\relax}}}%
\def\multilimits@{\bgroup\vspace@\Let@
 \baselineskip\fontdimen10 \scriptfont\tw@
 \advance\baselineskip\fontdimen12 \scriptfont\tw@
 \lineskip\thr@@\fontdimen8 \scriptfont\thr@@
 \lineskiplimit\lineskip
 \vbox\bgroup\ialign\bgroup\hfil$\m@th\scriptstyle{##}$\hfil\crcr}%
\def\Sb{_\multilimits@}%
\def\endSb{\crcr\egroup\egroup\egroup}%
\def\Sp{^\multilimits@}%
\newdimen\ex@
\def\rightarrowfill@#1{$#1\m@th\mathord-\mkern-6mu\cleaders
 \hbox{$#1\mkern-2mu\mathord-\mkern-2mu$}\hfill
 \mkern-6mu\mathord\rightarrow$}%
\def\leftarrowfill@#1{$#1\m@th\mathord\leftarrow\mkern-6mu\cleaders
 \hbox{$#1\mkern-2mu\mathord-\mkern-2mu$}\hfill\mkern-6mu\mathord-$}%
\def\leftrightarrowfill@#1{$#1\m@th\mathord\leftarrow
\mkern-6mu\cleaders
 \hbox{$#1\mkern-2mu\mathord-\mkern-2mu$}\hfill
 \mkern-6mu\mathord\rightarrow$}%
\def\overrightarrow{\mathpalette\overrightarrow@}%
\def\overrightarrow@#1#2{\vbox{\ialign{##\crcr\rightarrowfill@#1\crcr
 \noalign{\kern-\ex@\nointerlineskip}$\m@th\hfil#1#2\hfil$\crcr}}}%
\def\overleftarrow{\mathpalette\overleftarrow@}%
\def\overleftarrow@#1#2{\vbox{\ialign{##\crcr\leftarrowfill@#1\crcr
 \noalign{\kern-\ex@\nointerlineskip}$\m@th\hfil#1#2\hfil$\crcr}}}%
\def\overleftrightarrow{\mathpalette\overleftrightarrow@}%
\def\overleftrightarrow@#1#2{\vbox{\ialign{##\crcr
   \leftrightarrowfill@#1\crcr
 \noalign{\kern-\ex@\nointerlineskip}$\m@th\hfil#1#2\hfil$\crcr}}}%
\def\underrightarrow{\mathpalette\underrightarrow@}%
\def\underrightarrow@#1#2{\vtop{\ialign{##\crcr$\m@th\hfil#1#2\hfil
  $\crcr\noalign{\nointerlineskip}\rightarrowfill@#1\crcr}}}%
\def\underleftarrow{\mathpalette\underleftarrow@}%
\def\underleftarrow@#1#2{\vtop{\ialign{##\crcr$\m@th\hfil#1#2\hfil
  $\crcr\noalign{\nointerlineskip}\leftarrowfill@#1\crcr}}}%
\def\underleftrightarrow{\mathpalette\underleftrightarrow@}%
\def\underleftrightarrow@#1#2{\vtop{\ialign{##\crcr$\m@th
  \hfil#1#2\hfil$\crcr
 \noalign{\nointerlineskip}\leftrightarrowfill@#1\crcr}}}%
\def\qopnamewl@#1{\mathop{\operator@font#1}\nlimits@}
\let\nlimits@\displaylimits
\def\setboxz@h{\setbox\z@\hbox}
\def\varlim@#1#2{\mathop{\vtop{\ialign{##\crcr
 \hfil$#1\m@th\operator@font lim$\hfil\crcr
 \noalign{\nointerlineskip}#2#1\crcr
 \noalign{\nointerlineskip\kern-\ex@}\crcr}}}}
 \def\rightarrowfill@#1{\m@th\setboxz@h{$#1-$}\ht\z@\z@
  $#1\copy\z@\mkern-6mu\cleaders
  \hbox{$#1\mkern-2mu\box\z@\mkern-2mu$}\hfill
  \mkern-6mu\mathord\rightarrow$}
\def\leftarrowfill@#1{\m@th\setboxz@h{$#1-$}\ht\z@\z@
  $#1\mathord\leftarrow\mkern-6mu\cleaders
  \hbox{$#1\mkern-2mu\copy\z@\mkern-2mu$}\hfill
  \mkern-6mu\box\z@$}
\def\projlim{\qopnamewl@{proj\,lim}}
\def\injlim{\qopnamewl@{inj\,lim}}
\def\varinjlim{\mathpalette\varlim@\rightarrowfill@}
\def\varprojlim{\mathpalette\varlim@\leftarrowfill@}
\def\varliminf{\mathpalette\varliminf@{}}
\def\varliminf@#1{\mathop{\underline{\vrule\@depth.2\ex@\@width\z@
   \hbox{$#1\m@th\operator@font lim$}}}}
\def\varlimsup{\mathpalette\varlimsup@{}}
\def\varlimsup@#1{\mathop{\overline
  {\hbox{$#1\m@th\operator@font lim$}}}}
\def\align{\@verbatim \frenchspacing\@vobeyspaces \@alignverbatim
You are using the "align" environment in a style in which it is not defined.}
\let\csname endalign*\endcsname =\endtrivlist
\def\alignat{\@verbatim \frenchspacing\@vobeyspaces \@alignatverbatim
You are using the "alignat" environment in a style in which it is not defined.}
\let\csname endalignat*\endcsname =\endtrivlist
\def\xalignat{\@verbatim \frenchspacing\@vobeyspaces \@xalignatverbatim
You are using the "xalignat" environment in a style in which it is not defined.}
\let\csname endxalignat*\endcsname =\endtrivlist
\def\gather{\@verbatim \frenchspacing\@vobeyspaces \@gatherverbatim
You are using the "gather" environment in a style in which it is not defined.}
\let\csname endgather*\endcsname =\endtrivlist
\def\multiline{\@verbatim \frenchspacing\@vobeyspaces \@multilineverbatim
You are using the "multiline" environment in a style in which it is not defined.}
\let\csname endmultiline*\endcsname =\endtrivlist
\def\arrax{\@verbatim \frenchspacing\@vobeyspaces \@arraxverbatim
You are using a type of "array" construct that is only allowed in AmS-LaTeX.}
\def\tabulax{\@verbatim \frenchspacing\@vobeyspaces \@tabulaxverbatim
You are using a type of "tabular" construct that is only allowed in AmS-LaTeX.}
\let\csname endarrax*\endcsname =\endtrivlist
\let\csname endtabulax*\endcsname =\endtrivlist
 \def\endequation{%
     \ifmmode\ifinner 
      \iftag@
        \addtocounter{equation}{-1} 
        $\hfil
           \displaywidth\linewidth\@taggnum\egroup \endtrivlist
        \global\tag@false
        \global\@ignoretrue   
      \else
        $\hfil
           \displaywidth\linewidth\@eqnnum\egroup \endtrivlist
        \global\tag@false
        \global\@ignoretrue 
      \fi
     \else   
      \iftag@
        \addtocounter{equation}{-1} 
        \eqno \hbox{\@taggnum}
        \global\tag@false%
        $$\global\@ignoretrue
      \else
        \eqno \hbox{\@eqnnum}
        $$\global\@ignoretrue
      \fi
     \fi\fi
 } 
 \newif\iftag@ \tag@false
 \def\TCItag{\@ifnextchar*{\@TCItagstar}{\@TCItag}}
 \def\@TCItag#1{%
     \global\tag@true
     \global\def\@taggnum{(#1)}}
 \def\@TCItagstar*#1{%
     \global\tag@true
     \global\def\@taggnum{#1}}
     \def\tag{\@ifnextchar*{\@tagstar}{\@tag}}
     \def\@tag#1{%
         \global\tag@true
         \global\def\@taggnum{(#1)}}
     \def\@tagstar*#1{%
         \global\tag@true
         \global\def\@taggnum{#1}}
\def\tfrac#1#2{{\textstyle {#1 \over #2}}}%
\begin{document}
\title[Quantum cohomology via vicious and osculating walkers]{Quantum
cohomology via vicious and osculating walkers}
\author{Christian Korff }
\address{School of Mathematics and Statistics, University of Glasgow,\\
15 University Gardens, Glasgow G12 8QW, Scotland, UK }
\email{christian.korff@glasgow.ac.uk}

\begin{abstract}
We relate the counting of rational curves intersecting Schubert varieties of
the Grassmannian to the counting of certain non-intersecting lattice paths
on the cylinder, so-called vicious and osculating walkers. These lattice
paths form exactly solvable statistical mechanics models and are obtained
from solutions to the Yang-Baxter equation. The eigenvectors of the transfer
matrices of these models yield the idempotents of the Verlinde algebra of
the gauged $\mathfrak{\hat{u}}(n)_{k}$-WZNW model. The latter is known to be
closely related to the small quantum cohomology ring of the Grassmannian. We
establish further that the partition functions of the vicious and osculating
walker model are given in terms of Postnikov's toric Schur functions and can
be interpreted as generating functions for Gromov-Witten invariants. We
reveal an underlying quantum group structure in terms of Yang-Baxter
algebras and use it to give a generating formula for toric Schur functions
in terms of divided difference operators which appear in known
representations of the nil-Hecke algebra.
\end{abstract}

\maketitle

\smallskip \noindent \textbf{2010 Mathematics Subject Classification.}
14N35,05E05,05A15,05A19,82B23.\smallskip\newline
\noindent \textbf{Keywords.} Gromov-Witten invariants, quantum cohomology,
enumerative combinatorics, exactly solvable models, Bethe ansatz.






\section{Introduction}

Let $\func{Gr}\nolimits_{n,n+k}$ be the Grassmannian of $n$-planes in $%
\mathbb{C}^{n+k}$ and consider its small quantum cohomology ring $qH^{\ast }(%
\func{Gr}\nolimits_{n,n+k})$. The latter has the following presentation \cite%
{SiebertTian} 
\begin{equation}
qH^{\ast }(\func{Gr}\nolimits_{n,n+k})\cong \mathbb{Z}[q][e_{1},\ldots
,e_{n},h_{1},\ldots ,h_{k}]/\mathcal{I},  \label{qH*}
\end{equation}%
where the two-sided ideal $\mathcal{I}$ is generated by the coefficients of
the following polynomial in the auxiliary variable $x$,%
\begin{equation}
\left( \tsum_{i=0}^{n}e_{i}x^{i}\right) \left(
\tsum_{j=0}^{k}h_{j}x^{j}\right) =1+(-1)^{n}qx^{N}\;.  \label{Ipoly0}
\end{equation}%
Denoting by $(n,k)$ the set of all partitions whose Young diagram fits into
the $n\times k$ bounding box, a vector space basis of $qH^{\ast }(\func{Gr}%
\nolimits_{n,n+k})$ is given by the finite set $\{s_{\lambda }\}_{\lambda
\in (n,k)}$ of Schubert classes $s_{\lambda }=\det (h_{\lambda
_{i}-i+j})_{1\leq i,j\leq n}=\det (e_{\lambda _{i}^{\prime }-i+j})_{1\leq
i,j\leq k}$, where $\lambda ^{\prime }$ is the conjugate partition of $%
\lambda $. Note that this definition includes the special cases $%
s_{(r)}=h_{r}$ and $s_{(1^{r})}=e_{r}$.

\subsection{Quantum Kostka numbers}

Quantum Kostka numbers were originally introduced in \cite{BCF} as the
coefficients in the following product expansion in (\ref{qH*}),%
\begin{equation}
s_{\mu }\ast s_{\lambda _{1}}\ast \cdots \ast s_{\lambda _{r}}=\sum_{d\geq
0,\nu \in (n,k)}q^{d}s_{\nu }K_{\nu /d/\mu ,\lambda },  \label{qKostkadef}
\end{equation}%
where $\mu \in (n,k)$ and $\lambda =(\lambda _{1},\ldots ,\lambda _{r})$ are
some non-negative integers $\leq k$. As explained in \cite{BCF} and \cite%
{Postnikov} the quantum Kostka numbers count certain cylindric skew
tableaux, a combinatorial notion first introduced by Gessel and
Krattenthaler \cite{GesselKrattenthaler}. Analogously, one can define
conjugate quantum Kostka numbers by considering the product expansion%
\begin{equation}
s_{\mu }\ast s_{(1^{\lambda _{1}})}\ast \cdots \ast s_{(1^{\lambda
_{r}})}=\sum_{d\geq 0,\nu \in (n,k)}q^{d}s_{\nu }K_{\nu ^{\prime }/d/\mu
^{\prime },\lambda },  \label{qKostkadef2}
\end{equation}%
where $0\leq \lambda _{i}\leq n$. Due to level-rank duality, $qH^{\ast }(%
\func{Gr}\nolimits_{n,n+k})\cong qH^{\ast }(\func{Gr}\nolimits_{k,n+k})$,
one has $K_{\nu ^{\prime }/d/\mu ^{\prime },\lambda }=K_{\nu /d/\mu ,\lambda
}$. Nevertheless, both quantum product expansions give rise to different
combinatorics, so we will consider them separately.

Exploiting the quantum Giambelli formula \cite{Bertram}, $s _{\lambda }=\det
( s _{\lambda _{i}-i+j})_{1\leq i,j\leq n}$ with $\lambda \in (n,k)$, one
can then compute products between arbitrary Schubert classes%
\begin{equation}
s _{\mu }\ast s _{\lambda }=\sum_{d\geq 0,\nu \in (n,k)}q^{d}C_{\lambda \mu
}^{\nu ,d} s _{\nu },  \label{GWinv}
\end{equation}%
where the expansion coefficients are the 3-point, genus 0 Gromov-Witten
invariants. The latter count rational curves of degree $d=(|\lambda |+|\mu
|-|\nu |)/N$ intersecting three Schubert varieties in general position; see
e.g. \cite{Bertram}, \cite{AW}, \cite{BCF}, \cite{Buch}, \cite{Buchetal}, 
\cite{FW} for details.

\subsection{Toric Schur polynomials and Frobenius structures}

Quantum cohomology had its origin in mathematical physics and appeared first
in works of Gepner \cite{Gepner}, Intriligator \cite{Intriligator}, Vafa 
\cite{Vafa} and Witten \cite{Witten} in connection with the fusion ring $%
\mathcal{F}_{n,k}$\ of the gauged $\widehat{\mathfrak{u}}(n)_{k}$
Wess-Zumino-Novikov-Witten (WZNW) model. It has apparently been proved in
the no longer publicly available work \cite{Agnihotri} that $\mathcal{F}%
_{n,k}\cong qH^{\ast }(\func{Gr}\nolimits_{n,n+k})/\langle q-1\rangle $. $%
\mathcal{F}_{n,k}^\mathbb{C} =\mathcal{F}_{n,k}\otimes_{\mathbb{Z}} \mathbb{C%
}$ is referred to as \emph{Verlinde algebra} in the literature.

Denote by $\lambda ^{\vee }$ the partition obtained by taking the complement
of the Young diagram of $\lambda $ in the $n\times k$ bounding box and
recall the intersection pairing of Schubert classes, $\eta( s _{\lambda }, s
_{\mu }):= \int_{\func{Gr}\nolimits_{n,n+k}} s_\lambda\cdot s_\mu=\delta
_{\lambda ^{\vee }\mu }$.

\begin{proposition}
\label{Frob2toric} $\mathcal{F}_{n,k}^{\mathbb{C}}$ with bilinear form $\eta 
$ is a commutative Frobenius algebra and its coproduct is given by%
\begin{equation}
\Delta _{n,k}s_{\nu }=\sum_{d\geq 0,\mu \in (n,k)}q^{d}s_{\nu /d/\mu
}\otimes s_{\mu },\quad s_{\nu /d/\mu }:=\sum_{\lambda \in (n,k)}C_{\lambda
\mu }^{\nu ,d}s_{\lambda }\;.  \label{cop}
\end{equation}
\end{proposition}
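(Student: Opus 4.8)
The plan is to verify directly the three defining properties of a commutative Frobenius algebra and then to read off the comultiplication as the adjoint of the product with respect to $\eta$. Recall that such an algebra is simply a finite-dimensional commutative, associative, unital algebra together with a nondegenerate symmetric bilinear form that is \emph{invariant}, $\eta(a\ast b,c)=\eta(a,b\ast c)$; once these are established the coproduct is uniquely determined, and it only remains to recognize it as \eqref{cop}.

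First I would dispose of the algebra structure: commutativity, associativity and the unit $s_{\emptyset}$ are inherited from the presentation \eqref{qH*} (equivalently from the fusion ring), so nothing new is needed there. The symmetry and nondegeneracy of $\eta$ are then immediate from the fact that complementation $\lambda\mapsto\lambda^{\vee}$ in the $n\times k$ box is an involution: the Gram matrix $[\eta(s_{\lambda},s_{\mu})]=[\delta_{\lambda^{\vee}\mu}]$ is the permutation matrix of this involution, hence symmetric and invertible, and $\{s_{\mu^{\vee}}\}$ is precisely the basis dual to $\{s_{\mu}\}$.

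The one substantive axiom is invariance, and this is the crux of the first half. Here I would use that, by the discussion around \eqref{GWinv}, the structure constants are the genus-$0$ three-point Gromov--Witten invariants, so that
\[
\eta(s_{\kappa}\ast s_{\lambda},s_{\mu})=\sum_{d\ge 0}q^{d}\,C^{\mu^{\vee},d}_{\kappa\lambda}=\sum_{d\ge 0}q^{d}\langle s_{\kappa},s_{\lambda},s_{\mu}\rangle_{d},
\]
where $\langle\,\cdot\,,\cdot\,,\cdot\,\rangle_{d}$ is the totally symmetric three-point invariant. Total symmetry of the right-hand side in $\kappa,\lambda,\mu$ yields $\eta(s_{\kappa}\ast s_{\lambda},s_{\mu})=\eta(s_{\kappa},s_{\lambda}\ast s_{\mu})$ at once, so $\mathcal{F}_{n,k}^{\mathbb{C}}$ is a commutative Frobenius algebra.

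For the coproduct I would start from the canonical expression $\Delta_{n,k}s_{\nu}=\sum_{\mu}(s_{\nu}\ast s_{\mu})\otimes s_{\mu^{\vee}}$ attached to the copairing $\sum_{\mu}s_{\mu}\otimes s_{\mu^{\vee}}$ dual to $\eta$, expand the product via \eqref{GWinv}, and relabel the summation $\mu\mapsto\mu^{\vee}$ so the second tensor factor runs over $\{s_{\mu}\}$; collecting the coefficient of $q^{d}$ in the first factor then yields Postnikov's toric Schur function $s_{\nu/d/\mu}=\sum_{\lambda}C^{\nu,d}_{\lambda\mu}s_{\lambda}$, which is exactly \eqref{cop}. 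The hard part will be this final bookkeeping: matching the copairing expression to \eqref{cop} requires rewriting $C^{\lambda,d}_{\nu\mu^{\vee}}$ as $C^{\nu,d}_{\lambda\mu}$, i.e. the charge-conjugation (complementation) symmetry of the fusion coefficients, where the involution $\vee$, the total symmetry of the Gromov--Witten invariants (together with level-rank duality, already noted in the text) and the specialization defining the Verlinde algebra all have to be combined with care; an equivalent route, which I would use as a cross-check, is to verify the Frobenius compatibility $\Delta_{n,k}\circ\ast=(\ast\otimes\mathrm{id})\circ(\mathrm{id}\otimes\Delta_{n,k})$ for \eqref{cop} directly from associativity of $\ast$, with coassociativity and cocommutativity then following from associativity and commutativity of the quantum product.
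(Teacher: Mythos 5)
Your proposal is correct and follows essentially the same route as the paper: the Frobenius property is obtained from the $S_3$-symmetry of the Gromov--Witten invariants, and the coproduct formula \eqref{cop} is read off from the copairing dual to $\eta$ (the explicit form of the paper's commutative-diagram definition via $\mathfrak{m}^{\ast}$ and $\Phi$), with the final identification resting on exactly the Poincar\'e-duality symmetry $C_{\lambda^{\vee}\mu^{\vee}\nu^{\vee}}=C_{\lambda\mu\nu}$ that the paper invokes. (Level-rank duality, which you mention as possibly needed, plays no role here.)
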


Setting $e_{r}=\sum_{1\leq i_{1}<\cdots <i_{r}\leq n}x_{i_{1}}\cdots
x_{i_{r}}$ and $h_{r}=\sum_{1\leq i_{1}\leq \cdots \leq i_{r}\leq
n}x_{i_{1}}\cdots x_{i_{r}}$ where $x=(x_{1},\ldots ,x_{n})$ are some
commuting indeterminates the Schubert classes can be identified with \emph{%
Schur functions}. In light of (\ref{cop}) it is then natural to consider
so-called \emph{toric Schur functions} \cite{Postnikov}%
\begin{equation}
s_{\nu /d/\mu }(x)=\sum_{\lambda \in (n,k)}C_{\lambda \mu }^{\nu
,d}s_{\lambda }(x)=\sum_{\lambda \in (n,k)}K_{\nu /d/\mu ,\lambda
}m_{\lambda }(x)\;.  \label{toricSchur0}
\end{equation}%
The last identity can be seen as a combinatorial definition of $s_{\nu
/d/\mu }$ in terms of monomial symmetric functions; it is the weighted sum
over all toric skew tableaux, which is the subset of the cylindric skew
tableaux having at most $k$ boxes in each row. This generalises the notion
of an ordinary skew Schur function, $s_{\nu /\mu }=\sum_{\lambda }K_{\nu
/\mu ,\lambda }s_{\lambda }$, where the sum is over all skew (semi-standard)
tableaux and $K_{\nu /\mu ,\lambda }$ is the ordinary Kostka number; see
e.g. \cite{Macdonald}. Since for vanishing degree $d$ the Gromov-Witten
invariants equal the Littlewood-Richardson coefficients, one has $s_{\nu
/0/\mu }=s_{\nu /\mu }$ when $d=0$. In the case of infinitely many variables
one obtains cylindric Schur functions which have been investigated in \cite%
{GesselKrattenthaler}, \cite{McNamara} and in \cite{Lam} as special case of
affine Stanley symmetric functions; see also \cite{Borodin} for a
formulation of a random process on cylindric partitions. For a
generalisation of cylindric or toric Schur functions to special cases of
Macdonald functions in the context of the $\widehat{\mathfrak{su}}(n)_{k}$
fusion ring, see \cite{Korffmac}.

\subsection{Exactly solvable lattice models: vicious and osculating walkers}

\begin{figure}[tbp]
\begin{equation*}
\includegraphics[scale=0.26]{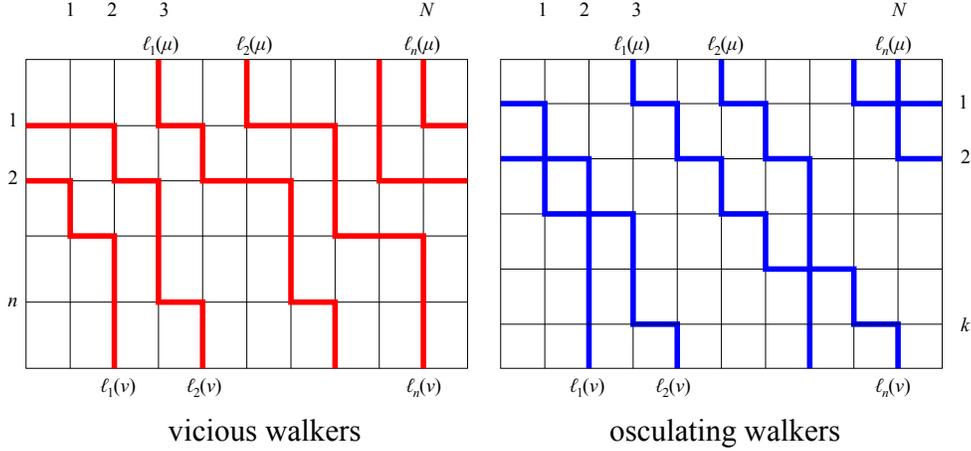}
\end{equation*}%
\caption{Examples of vicious and osculating walks on a square lattice with $%
N=n+k=4+5=9$. The paths of two osculating walkers can touch but they do not
intersect.}
\label{fig:walks}
\end{figure}

In this article we identify the toric Schur polynomial (\ref{toricSchur0})
with the partition function of exactly solvable lattice models in
statistical mechanics, the lock-step vicious and osculating walker models
which have appeared in connection with problems such as percolation in
physical systems \cite{Fisher} and the counting of alternating sign matrices 
\cite{Brak}. Both models can be formulated in terms of special
non-intersecting paths on a square lattice which in the present context we
choose to have dimensions $n\times (n+k)$ and $k\times (n+k)$; see Figure %
\ref{fig:walks} for examples. Fixing start and end positions of the walkers
in terms of partitions $\mu ,\nu \in (n,k)$ and identifying the left with
the right lattice edge, we show that there is a bijection between toric
tableaux of shape $\nu /d/\mu $ and non-intersecting paths of the mentioned
models on the cylinder. The degree $d\in\mathbb{Z}_{\geq 0}$ is the number
of walkers crossing the boundary, where the horizontal strip is glued
together to obtain the cylinder. The weight $\lambda$ of a toric tableau
fixes the number of horizontal path edges in each lattice row.

Denote by$\ H=(H_{\nu ,\mu })$ and $E=(E_{\nu ,\mu })$ the transfer matrices
whose elements are the partition functions of a single lattice row for the
vicious and osculating walker models on the cylinder with fixed start and
end configurations $\mu ,\nu \in (n,k)$. Then the matrix elements of the
powers $H^{n}$ and $E^{k}$ give the partition functions for the lattices
with $n$ and $k$ rows mentioned above.

Given a square $s=\langle i,j \rangle$ in the Young diagram of a partition $%
\lambda $ denote by $c(s)=j-i$ its content and by $h(s)=\lambda_i+\lambda^{%
\prime}_j-i-j+1$ its hook length.

\begin{theorem}
\label{sumrule} For fixed start and end points $\mu ,\nu \in (n,k)$, the
number of vicious and osculating walker configurations on the cylinder is
given by%
\begin{equation}
(H^{n})_{\nu ,\mu }=\sum_{d,~\lambda }K_{\nu /d/\mu ,\lambda }=(E^{n})_{\nu
^{\prime },\mu ^{\prime }}  \label{sumrule00}
\end{equation}%
or alternatively%
\begin{equation}
(H^{n})_{\nu ,\mu }=\sum_{d,~\lambda }C_{\lambda \mu }^{\nu ,d}\prod_{s\in
\lambda }\frac{n+c(s)}{h(s)}=(E^{n})_{\nu ^{\prime },\mu ^{\prime }}\;,
\label{sumrule0}
\end{equation}%
where the sums run over all integers $d\geq 0$ and $\lambda \in (n,k)$ such
that $|\lambda |+|\mu |-|\nu |=d(n+k)$.
\end{theorem}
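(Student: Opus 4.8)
The plan is to reduce the purely combinatorial count $(H^{n})_{\nu,\mu}$ to a principal specialization of a toric Schur function, and then read off the two right-hand sides from the two expansions of $s_{\nu/d/\mu}$ recorded in (\ref{toricSchur0}). First I would introduce a fugacity $x_{i}$ weighting every horizontal path-edge in the $i$-th lattice row, so that the single-row transfer matrix becomes $H(x_{i})$ and $H=H(1)$. Invoking the bijection between vicious walker configurations on the $n\times(n+k)$ cylinder with boundary $\mu,\nu$ and toric tableaux of shape $\nu/d/\mu$ — under which the number of entries equal to $i$ in a tableau is the number of horizontal edges in row $i$ — the weighted $n$-row partition function becomes the weight generating function of toric tableaux,
\begin{equation*}
(H(x_{1})\cdots H(x_{n}))_{\nu,\mu}=\sum_{d\geq 0}s_{\nu/d/\mu}(x_{1},\dots,x_{n}).
\end{equation*}
Setting all $x_{i}=1$ gives $(H^{n})_{\nu,\mu}=\sum_{d}s_{\nu/d/\mu}(1^{n})$, where $H=H(1)$. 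The admissible degrees are exactly those with $|\lambda|+|\mu|-|\nu|=d(n+k)$, since a toric tableau of shape $\nu/d/\mu$ has $|\nu|+dN-|\mu|$ boxes with $N=n+k$, and this must match the number $|\lambda|$ of entries.

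The two identities then follow from the two ways of expanding $s_{\nu/d/\mu}$. Reading the number of toric tableaux directly off their weights $\lambda$ gives $s_{\nu/d/\mu}(1^{n})=\sum_{\lambda}K_{\nu/d/\mu,\lambda}$ (summed over all admissible weights), which after summing over $d$ is (\ref{sumrule00}). Alternatively, expanding in ordinary Schur functions via (\ref{toricSchur0}) and applying the hook-content formula for the principal specialization, $s_{\lambda}(1^{n})=\prod_{s\in\lambda}\frac{n+c(s)}{h(s)}$, yields $s_{\nu/d/\mu}(1^{n})=\sum_{\lambda}C_{\lambda\mu}^{\nu,d}\prod_{s\in\lambda}\frac{n+c(s)}{h(s)}$, which is (\ref{sumrule0}). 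The definitions of the content $c(s)$ and hook length $h(s)$ recorded just before the theorem are precisely what this specialization formula needs.

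For the osculating side, I would run the entirely analogous argument for the $E$-model, whose bijection produces conjugate toric tableaux and hence the conjugate quantum Kostka numbers $K_{\nu^{\prime}/d/\mu^{\prime},\lambda}$ of (\ref{qKostkadef2}). The level-rank duality $K_{\nu^{\prime}/d/\mu^{\prime},\lambda}=K_{\nu/d/\mu,\lambda}$ recorded in the excerpt then identifies $(E^{n})_{\nu^{\prime},\mu^{\prime}}$ with $\sum_{d,\lambda}K_{\nu/d/\mu,\lambda}$, giving the two outer equalities simultaneously.

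The main obstacle is the bijection itself rather than the algebra. One must show that gluing the horizontal strip into a cylinder carries vicious (non-intersecting) and osculating (touching but non-crossing) configurations exactly onto semistandard toric tableaux, with the number of walkers crossing the seam matching the degree $d$ and the lattice width $n+k$ forcing the toric row-length bound $\leq k$. Verifying that this correspondence is genuinely weight-preserving and bijective — and that the osculating variant lands on conjugate shapes $\nu^{\prime}/d/\mu^{\prime}$ — is where the real work lies; once the bijection is in hand, the specialization and the $C/K$-expansions are routine.
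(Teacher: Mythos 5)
Your proposal follows essentially the same route as the paper: the paper likewise establishes the walk--to--toric-tableau bijection (Proposition~\ref{prop:bijection}), identifies the row-weighted partition functions with $\sum_{d}q^{d}s_{\nu/d/\mu}(x_{1},\ldots,x_{n})$, and then sets $x_{i}=1$, obtaining the Kostka form from the monomial expansion and the Gromov--Witten form from the Schur expansion combined with the hook-content formula, with the degree constraint $|\lambda|+|\mu|-|\nu|=dN$ and level-rank duality handling the osculating side exactly as you describe. The only difference is one of sourcing rather than substance: you invoke Postnikov's identity (\ref{toricSchur0}) for the expansion $s_{\nu/d/\mu}=\sum_{\lambda}C_{\lambda\mu}^{\nu,d}s_{\lambda}$, whereas the paper re-derives it internally (Corollary~\ref{cor:toric}) from the Bethe ansatz and the Jacobi--Trudi/inverse-Kostka argument.
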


Note that (\ref{sumrule0}) provides a linear system of equattions for
Gromov-Witten invariants.

It is not obvious from their definition, but we will show that the
row-to-row transfer matrices $E$ and $H$ commute and possess a common
eigenbasis $\{\mathfrak{\hat{e}}_{\lambda }\}_{\lambda \in (n,k)}$, the
so-called Bethe vectors. They yield the idempotents of the Verlinde algebra $%
\mathcal{F}_{n,k}^{\mathbb{C}}$ of the gauged $\mathfrak{\hat{u}}(n)_{k}$%
-WZNW model.

\begin{theorem}[\textbf{idempotents of the Verlinde algebra}]
Let $V_{n,k}$ be the complex linear span of the Bethe vectors $\{\mathfrak{%
\hat e}_\lambda\}_{\lambda\in(n,k)}$. The generalised matrix algebra $%
(V_{n,k},\star)$ obtained by setting $\mathfrak{\hat e}_\lambda \star%
\mathfrak{\hat e}_\mu=\delta_{\lambda\mu}\mathfrak{\hat e}_\lambda$ is
isomorphic to $\mathcal{F}_{n,k}^\mathbb{C}$.
\end{theorem}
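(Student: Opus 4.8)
The plan is to realise $\mathcal{F}_{n,k}^{\mathbb{C}}$ through its regular representation and to identify the transfer matrices $H,E$ with operators of quantum multiplication, so that the Bethe vectors become the primitive idempotents. First I would record that, by Proposition \ref{Frob2toric}, $\mathcal{F}_{n,k}^{\mathbb{C}}$ is a commutative Frobenius algebra of dimension $N=\binom{n+k}{n}$ with Schubert basis $\{s_\nu\}_{\nu\in(n,k)}$ and product $s_\mu\ast s_\nu=\sum_{d,\lambda}C^{\nu,d}_{\lambda\mu}s_\lambda$ at $q=1$. I would then argue that this algebra is semisimple: setting $q=1$ in the defining relation (\ref{Ipoly0}), the generating polynomial factors into $N$ distinct linear factors over $\mathbb{C}$, so $\operatorname{Spec}\mathcal{F}_{n,k}^{\mathbb{C}}$ consists of $N$ distinct points and the algebra is isomorphic to $\mathbb{C}^{N}$ with pointwise product. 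A commutative semisimple $\mathbb{C}$-algebra carries a unique complete system of primitive orthogonal idempotents $\{P_\lambda\}_{\lambda\in(n,k)}$ with $P_\lambda\ast P_\mu=\delta_{\lambda\mu}P_\lambda$, and these are characterised intrinsically as the common eigenvectors of all multiplication operators $a\mapsto b\ast a$.

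Next I would connect the integrable model to this regular representation. On $V_{n,k}$, identified with $\mathcal{F}_{n,k}^{\mathbb{C}}$ via $|\mu\rangle\leftrightarrow s_\mu$, I would show that the transfer matrices $H$ and $E$ are precisely the operators of quantum multiplication by the generators $h_r=s_{(r)}$ and $e_r=s_{(1^r)}$, packaged through a spectral parameter. Concretely, the single-row partition function computed in Theorem \ref{sumrule} is governed by a quantum Pieri rule, so the matrix elements $H_{\nu,\mu}$ and $E_{\nu,\mu}$ reproduce the structure constants of multiplication by the relevant symmetric functions. Because $\{h_r\}_{r=1}^{k}$ and $\{e_r\}_{r=1}^{n}$ generate $\mathcal{F}_{n,k}^{\mathbb{C}}$ by the presentation (\ref{qH*}), the algebra generated by $H$ and $E$ coincides with the image of the regular representation. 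The commutativity of $H$ and $E$ then reflects the commutativity of $\mathcal{F}_{n,k}^{\mathbb{C}}$, and their common eigenbasis, the Bethe vectors $\{\mathfrak{\hat e}_\lambda\}$, must coincide up to normalisation with the primitive idempotents $\{P_\lambda\}$.

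Finally I would assemble the isomorphism. Since each $\mathfrak{\hat e}_\lambda$ is proportional to $P_\lambda$, the map $\Phi:\mathfrak{\hat e}_\lambda\mapsto P_\lambda$ is a linear bijection $V_{n,k}\to\mathcal{F}_{n,k}^{\mathbb{C}}$, and by definition $\Phi(\mathfrak{\hat e}_\lambda\star\mathfrak{\hat e}_\mu)=\delta_{\lambda\mu}P_\lambda=P_\lambda\ast P_\mu=\Phi(\mathfrak{\hat e}_\lambda)\ast\Phi(\mathfrak{\hat e}_\mu)$, so $\Phi$ is an algebra isomorphism. The main obstacle is the middle step: proving that the lattice transfer matrices really are the quantum-multiplication operators. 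This is where the Yang--Baxter structure is essential --- one must show that the Bethe eigenvalues of $H(x)$ and $E(x)$ are the values of the symmetric functions $\sum_r h_r x^r$ and $\sum_r e_r x^r$ at the Bethe roots, and that these roots are exactly the solutions of the relation (\ref{Ipoly0}) at $q=1$, i.e.\ the points of $\operatorname{Spec}\mathcal{F}_{n,k}^{\mathbb{C}}$. Matching the spectrum of the integrable model to the defining ideal of the Verlinde algebra is the technical heart of the argument; everything else is formal semisimple-algebra bookkeeping.
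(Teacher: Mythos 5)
Your proposal is correct in outline but takes a genuinely different route from the paper. The paper's proof is a direct computation of structure constants: it expands $|\lambda\rangle$, $|\mu\rangle$ in the Bethe basis via the dual vectors (\ref{dualBethe}), multiplies, and recognises the resulting sum over Bethe roots as the Bertram--Vafa--Intriligator residue formula (\ref{Schurprod2}) for the Gromov--Witten invariants; Frobenius compatibility then follows from $S_3$-invariance of $C_{\lambda\mu\nu}$. You instead argue abstractly: semisimplicity plus a dimension count gives $\mathcal{F}_{n,k}^{\mathbb{C}}\cong\mathbb{C}^{\binom{n+k}{n}}$, and the Bethe vectors must be the primitive idempotents because they are the joint eigenvectors of the commuting multiplication operators. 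This avoids the Vafa--Intriligator formula, but two of your inputs need repair. First, semisimplicity does not follow merely from $1+(-1)^{n}x^{n+k}$ having distinct roots: you must count the points of the variety of $\mathcal{I}$ at $q=1$ (the $n$-element subsets of those roots, $\binom{n+k}{n}$ of them) and compare with $\dim\mathcal{F}_{n,k}^{\mathbb{C}}$ to get reducedness; note also that your ``$N$'' conflates $n+k$ with this dimension. Second, matching the Bethe spectrum to the points of the spectrum of $\mathcal{F}_{n,k}^{\mathbb{C}}$ only yields an \emph{abstract} isomorphism (essentially Corollary \ref{cor:iso}); it does not force the isomorphism to send $|\mu\rangle$ to $s_{\mu}$. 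For the statement as literally quoted your argument suffices once semisimplicity is established, but to recover the paper's stronger conclusion (structure constants of $\star$ in the Schubert basis equal Gromov--Witten invariants) you must anchor the basis correspondence via the quantum Pieri rule applied to (\ref{Haction}) --- an external input of comparable weight to the Vafa--Intriligator formula the paper invokes.
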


The same basis has been employed in \cite[Thm 10.11]{KS} to provide an
alternative derivation of the presentation (\ref{qH*}). They are identical
with the Bethe vectors of the so-called XX-Heisenberg spin chain \cite[Rm
10.3]{KS} and, thus, one can identify the ring (\ref{qH*}) with the
conserved quantities or \textquotedblleft quantum integrals of
motion\textquotedblright of this quantum spin-chain. There are close
parallels with recent developments regarding the link between topological
field theories and quantum integrable models \cite{NekrasovShatashvili} as
well as the quantum cohomology of the \emph{cotangent bundles} of Grassmann
varieties \cite{Rimanyietal}, \cite{Gorbetal}.\smallskip

The new result in this article is the connection of the quantum cohomology
of the Grassmann varieties themselves with the mentioned statistical lattice
models which allows one to relate the counting of lattice paths on the
cylinder to Gromov-Witten invariants and to reveal a deeper underlying
algebraic structure which we now explain.

\subsection{Quantum group structures}

The combinatorial connection with the mentioned exactly solvable lattice
models is underpinned by an algebraic description known as the \emph{quantum
inverse scattering method} which is the name of a general procedure based on
the works of the Faddeev School; see e.g. \cite{KIB} and references therein.
Here we show that this method can be applied to the quantum cohomology ring.
Our starting point is a solution to the Yang-Baxter equation%
\begin{equation}
R_{12}(x/y)M_{1}(x)M_{2}(y)=M_{2}(y)M_{1}(x)R_{12}(x/y)  \label{YBE0}
\end{equation}%
where $R(x/y)\in \mathbb{C}(x/y)\otimes \limfunc{End}(V\otimes V)$ with $V$
a two-dimensional complex vector space and 
\begin{equation}
M(x)=\left( 
\begin{array}{cc}
A(x) & B(x) \\ 
C(x) & D(x)%
\end{array}%
\right) \in \mathbb{C}[x]\otimes \limfunc{End}V\otimes \limfunc{End}%
V^{\otimes N}  \label{mom0}
\end{equation}%
is the so-called monodromy matrix. The matrix entries of $M(x)$ can be
interpreted as vertex-type operators whose commutation relations are encoded
in the matrix $R$. The latter generate the so-called Yang-Baxter algebra
which has the structure of a graded bi-algebra. In particular one can
interpret (\ref{YBE0}), (\ref{mom0}) as the defining relations of a
\textquotedblleft quantum group\textquotedblright ; this is similar to the
construction of a Yangian symmetry on the quantum cohomologies of cotangent
bundles of Nakajima varieties in a recent preprint by Maulik and Okounkov 
\cite{MaulikOkounkov}.

Set $N=n+k$ and for simplicity denote $qH^{\ast }(\limfunc{Gr}_{n,k})\otimes
_{\mathbb{Z}}\mathbb{C}$ by $qH_{n,k}^{\ast }$. Then there exists a vector
space isomorphism 
\begin{equation}
V^{\otimes N}\otimes \mathbb{C}[q,q^{-1}]\overset{\sim }{\rightarrow }%
\tbigoplus_{n=0}^{N}qH_{n,N-n}^{\ast }  \label{veciso}
\end{equation}%
which induces maps $A_{r},D_{r}:qH_{n,k}^{\ast }\rightarrow qH_{n,k}^{\ast }$
and $B_{r},C_{r}:qH_{n,k}^{\ast }\rightarrow qH_{n\pm 1,k\mp 1}^{\ast }$
where $A_{r},B_{r},C_{r},D_{r}$ are the coefficients of $x^{r}$ in the
vertex-type operators (\ref{mom0}) and $qH_{0,N}^{\ast },qH_{N,0}^{\ast }%
\mathbb{\cong C}$. Exploiting level-rank duality, $\Theta :qH_{n,k}^{\ast }%
\overset{\sim }{\rightarrow }qH_{k,n}^{\ast },$ one finds a second, dual
solution $M^{\prime }=\Theta \circ M\circ \Theta $ of the Yang-Baxter
equation. The following result then links these two Yang-Baxter algebras
with the Frobenius algebra $qH_{n,k}^{\ast }$ via (\ref{Ipoly0}).

\begin{theorem}
Setting $H(x)=A(x)+qD(x)$ and $E(x)=A^{\prime }(x)+qD^{\prime }(x)$, the
restrictions $E(x)|_{qH_{n,k}^{\ast }}$ and $H(x)|_{qH_{n,k}^{\ast }}$ under
(\ref{veciso}) are of polynomial degree $n$ and $k$, respectively and on
each preimage of $qH_{n,k}^{\ast }$ one has the functional identity%
\begin{equation}
E(-x)H(x)=1+(-1)^{n}qx^{N}\;.  \label{QQ0}
\end{equation}%
In particular, the map given by $E_{i}=A_{i}^{\prime }+qD_{i}^{\prime
}\mapsto e_{i}$ and $H_{j}=A_{j}+qD_{j}\mapsto h_{j}$ with $i=1,\ldots ,n$
and $j=1,\ldots ,k$ yields an algebra isomorphism $\mathcal{A}_{n,k}\overset{%
\sim }{\rightarrow }qH_{n,k}^{\ast }$, where $\mathcal{A}_{n,k}\subset 
\limfunc{End}(qH_{n,k}^{\ast })$ is the commutative subalgebra generated by
the coefficients of the Yang-Baxter algebra elements $E(x)$ and $H(x)$.
\end{theorem}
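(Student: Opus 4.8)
The plan is to identify, under the isomorphism (\ref{veciso}), the Yang--Baxter coefficients $H_{j}=A_{j}+qD_{j}$ and $E_{i}=A_{i}^{\prime }+qD_{i}^{\prime }$ with the operators of \emph{quantum multiplication} by the special Schubert classes $s_{(j)}=h_{j}$ and $s_{(1^{i})}=e_{i}$ on $qH_{n,k}^{\ast }$. Once this is in hand, both the functional identity (\ref{QQ0}) and the algebra isomorphism follow formally from the presentation (\ref{qH*})--(\ref{Ipoly0}). I would begin by fixing the weight-$n$ subspace of $V^{\otimes N}$ that (\ref{veciso}) sends to $qH_{n,k}^{\ast }$, with the Schubert basis $\{s_{\lambda }\}_{\lambda \in (n,k)}$ corresponding to the $0/1$-strings that encode $\lambda $. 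Because the entries of the monodromy matrix (\ref{mom0}) are polynomials in $x$ whose $x$-degree counts the horizontal lattice edges produced in a single row, and a configuration with boundary data in $(n,k)$ carries at most $k$ such edges (resp.\ $n$ in the level--rank dual picture obtained from $M^{\prime }=\Theta \circ M\circ \Theta $), the coefficients $H_{j}$ vanish for $j>k$ and $E_{i}$ for $i>n$. This already yields the stated degrees $\deg _{x}H(x)|_{qH_{n,k}^{\ast }}=k$ and $\deg _{x}E(x)|_{qH_{n,k}^{\ast }}=n$.

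The crux is the action of $H_{r}=A_{r}+qD_{r}$, for $1\le r\le k$, on a basis vector $s_{\mu }$, which I would compute directly from the vertex weights of the lock-step vicious walker model. The summand $A_{r}$ should superimpose a horizontal $r$-strip on the Young diagram of $\mu $ \emph{without} crossing the seam where the strip is glued into a cylinder, reproducing the classical Pieri rule, while the summand $qD_{r}$ collects exactly those configurations whose strip wraps once around the seam, supplying the degree-$q$ terms of the cylindric rule. Summing the two, one obtains Bertram's quantum Pieri rule \cite{Bertram},
\begin{equation*}
(A_{r}+qD_{r})\,s_{\mu }=h_{r}\ast s_{\mu }=\sum_{d\geq 0,\;\nu \in (n,k)}q^{d}\,C_{(r)\mu }^{\nu ,d}\,s_{\nu },
\end{equation*}
and the identification can be checked by matching the one-row configurations to the toric tableaux counted in Theorem~\ref{sumrule}. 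Conjugating by the level--rank duality $\Theta $ gives the dual statement that $E_{i}$ acts as multiplication by $e_{i}$.

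Granting this, (\ref{QQ0}) is immediate: since $E_{i}$ and $H_{j}$ are multiplication operators they inherit every relation of the ring, and the identity $E(-x)H(x)=1+(-1)^{n}qx^{N}$ is precisely the generating relation (\ref{Ipoly0}) of the ideal $\mathcal{I}$ read as an operator equation, the sign in $E(-x)$ being the usual Newton convention that aligns the elementary and complete symmetric generating series. For the isomorphism, recall that by (\ref{qH*}) the classes $e_{1},\dots ,e_{n},h_{1},\dots ,h_{k}$ generate $qH_{n,k}^{\ast }$ as a $\mathbb{C}[q]$-algebra, so the multiplication operators they define generate the image of the regular representation $qH_{n,k}^{\ast }\rightarrow \limfunc{End}(qH_{n,k}^{\ast })$. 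This representation is faithful because $qH_{n,k}^{\ast }$ is unital, so its image is isomorphic to $qH_{n,k}^{\ast }$; as $\mathcal{A}_{n,k}$ is by definition generated by the $E_{i},H_{j}$, it coincides with this image, giving $\mathcal{A}_{n,k}\cong qH_{n,k}^{\ast }$ via $E_{i}\mapsto e_{i}$, $H_{j}\mapsto h_{j}$.

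The main obstacle is the middle step: verifying that $A_{r}+qD_{r}$ reproduces \emph{exactly} Bertram's quantum Pieri rule, with the correct power of $q$. This is where the cylindric boundary condition genuinely enters and demands careful bookkeeping of the vertex weights for seam-crossing configurations, showing that each wrap-around contributes a single factor of $q$ and that the emerging coefficients are the Gromov--Witten numbers $C_{(r)\mu }^{\nu ,d}$. An alternative that bypasses the Pieri combinatorics is to derive (\ref{QQ0}) as a ``quantum Wronskian'' directly inside the Yang--Baxter algebra from the $RTT$ relations (\ref{YBE0}) together with level--rank duality, and then secure the isomorphism by checking that $\mathcal{A}_{n,k}$ applied to the vacuum $s_{\emptyset }$ produces the linearly independent Schubert basis; the commutativity of $\mathcal{A}_{n,k}$ needed there is itself a consequence of (\ref{YBE0}).
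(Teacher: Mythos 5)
Your proposal is correct in substance and its key combinatorial input coincides with the paper's: the identification of $A_{r}$ with adding (non-wrapping) horizontal $r$-strips and of $qD_{r}$ with the seam-crossing strips is exactly the content of the paper's Pieri-type lemmas (\ref{Haction}) and (\ref{Eaction}), proved there via the hopping-operator expansion (\ref{H}) of the affine nil Temperley--Lieb representation, and your degree bounds $H_{r}|_{V_{n,q}}=0$ for $r>k$, $E_{r}|_{V_{n,q}}=0$ for $r>n$ are obtained the same way. Where you genuinely diverge is in how the functional identity (\ref{QQ0}) is obtained. You take Bertram's quantum Pieri rule and the Siebert--Tian presentation (\ref{qH*})--(\ref{Ipoly0}) as external inputs, identify $H_{j},E_{i}$ with quantum multiplication by $h_{j},e_{i}$, and then read (\ref{QQ0}) off as the defining relation transported to multiplication operators; the isomorphism then follows from faithfulness of the regular representation. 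The paper instead proves $H(x)E(-x)=1+qx^{N}\prod_{j}\sigma_{j}^{z}$ as an operator identity on all of $V_{q}^{\otimes N}$ (Proposition \ref{TQ}), by specialising the mixed Yang--Baxter relation (\ref{YBE3}) at $y=-x$ and block-decomposing $L_{13}(x)L_{23}^{\prime}(-x)$ with respect to $\ker R^{\prime\prime}(-1)$; restricted to the charge-$n$ sector $\prod_{j}\sigma_{j}^{z}$ acts by $(-1)^{n}$, giving (\ref{QQ0}) with no geometric input. The trade-off: your route is shorter but makes (\ref{QQ0}) a restatement of the known presentation, so it cannot serve (as the paper intends, following \cite{KS}) as an independent lattice-model derivation of (\ref{qH*}); the paper's route is self-contained within the integrable structure and only then matches the generators to $e_{i},h_{j}$. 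Your closing ``alternative'' --- deriving (\ref{QQ0}) inside the Yang--Baxter algebra and checking that $\mathcal{A}_{n,k}$ applied to $|\emptyset\rangle$ sweeps out the Schubert basis --- is in fact the paper's actual argument, and is the version you should carry out if you want the statement to stand on its own.

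One small point of care if you pursue your main route: the sign bookkeeping. As written, (\ref{Ipoly0}) has no alternating signs on the $e_{i}$, while (\ref{QQ0}) evaluates $E$ at $-x$; these agree only after the usual Newton-identity normalisation you allude to, and the factor $(-1)^{n}$ in front of $qx^{N}$ is precisely what $\prod_{j}\sigma_{j}^{z}$ produces on the weight-$n$ subspace. This is worth verifying explicitly rather than asserting, since it is the only place the charge $n$ enters the relation.
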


The combinatorial results (\ref{sumrule00}) and (\ref{sumrule0}) then follow
from the fact that one recovers the row-to-row vicious and osculating walker
transfer matrices at $x=1$, i.e. $H=H(1)$ and $E=E(1)$. An alternative way
to express the relation between the Yang-Baxter algebras and $qH_{n,k}^{\ast
}$ is to use the coproduct of the Frobenius algebra.

\begin{proposition}
\label{Prop:cop}Setting as before $H(x)=A(x)+qD(x)$ one has for any $1\leq
n\leq N-1$ that%
\begin{equation}
\langle \lambda |H(y_{n})\cdots H(y_{1})B(x_{n})\cdots B(x_{1})|0\rangle
=x^{\delta _{n}}\Delta _{n,k}s_{\lambda }(y,x)  \label{cop2}
\end{equation}%
where $\langle \lambda |$ denotes the dual basis of the Schubert classes
under the isomorphism (\ref{veciso}), $|0\rangle $ the unique basis vectors
in $qH_{0,N}^{\ast }$, $\delta _{n}=(n,n-1,\ldots ,1)$ and $\Delta _{n,k}$
is the coproduct (\ref{cop}) of $qH_{n,k}^{\ast }$.
\end{proposition}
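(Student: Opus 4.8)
The plan is to factor the matrix element in (\ref{cop2}) into two independent pieces: the action of the $B$-operators on the vacuum, which produces Schur polynomials in $x$, and the action of the $H$-operators, which the preceding theorem identifies with quantum multiplication in $qH_{n,k}^{\ast }$. First I would use (or, if not yet in hand, establish by the algebraic Bethe ansatz) the creation formula
\[
B(x_{n})\cdots B(x_{1})|0\rangle =x^{\delta _{n}}\sum_{\mu \in (n,k)}s_{\mu }(x)\,|\mu \rangle ,
\]
where $|\mu \rangle $ is the image of $s_{\mu }$ under (\ref{veciso}) and $\langle \lambda |\mu \rangle =\delta _{\lambda \mu }$. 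This is the standard statement that the off-diagonal entry $B$ of the monodromy matrix (\ref{mom0}) creates Schur functions on the reference state; it follows from the $A$-$B$ and $D$-$B$ commutation relations encoded in (\ref{YBE0}) together with the vacuum eigenvalues of $A$ and $D$, which between them generate the Jacobi--Trudi recursion for the $s_{\mu }$. The staircase monomial $x^{\delta _{n}}$, which is exactly the prefactor on the right of (\ref{cop2}), arises as the lowest-degree normalisation of this state.

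Next I would rewrite $H(y_{n})\cdots H(y_{1})$. By the preceding theorem, which identifies $\mathcal{A}_{n,k}$ with $qH_{n,k}^{\ast }$, the coefficient $H_{j}$ acts on $qH_{n,k}^{\ast }$ as quantum multiplication by $h_{j}=s_{(j)}$, and $H(y)=\sum_{j=0}^{k}H_{j}y^{j}$ has polynomial degree $k$ (so $H_{j}=0$ for $j>k$, consistent with $s_{(j)}=0$ there). Since the $H(y_{i})$ commute, by (\ref{YBE0}), their product is a symmetric polynomial of degree $\leq k$ in each $y_{i}$, and the Cauchy identity together with the quantum Giambelli formula $s_{\kappa }=\det (s_{\kappa _{i}-i+j})_{1\leq i,j\leq n}$ \cite{Bertram} gives
\[
H(y_{n})\cdots H(y_{1})=\sum_{\kappa \in (n,k)}s_{\kappa }(y)\,\widehat{s}_{\kappa },
\]
where $\widehat{s}_{\kappa }$ is quantum multiplication by $s_{\kappa }$. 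The sum collapses onto $\kappa \in (n,k)$ because of the degree-$k$ truncation: in the Jacobi--Trudi determinant attached to any $\kappa $ with $\kappa _{1}>k$ the entire top row consists of vanishing operators $H_{j}$, $j>k$, so these terms drop out, while for $\kappa \in (n,k)$ the quantum determinant of the $H$'s reproduces $\widehat{s}_{\kappa }$.

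Finally I would assemble the pieces. Applying $\langle \lambda |$, using $\langle \lambda |\,s_{\kappa }\ast s_{\mu }\rangle =\sum_{d\geq 0}q^{d}C_{\kappa \mu }^{\lambda ,d}$ from (\ref{GWinv}) and the combinatorial definition (\ref{toricSchur0}) of the toric Schur functions, one obtains
\[
\langle \lambda |H(y_{n})\cdots H(y_{1})|\mu \rangle =\sum_{\kappa \in (n,k),\,d\geq 0}q^{d}C_{\kappa \mu }^{\lambda ,d}s_{\kappa }(y)=\sum_{d\geq 0}q^{d}s_{\lambda /d/\mu }(y).
\]
Substituting the creation formula for $B(x_{n})\cdots B(x_{1})|0\rangle $ and summing over $\mu $ yields $x^{\delta _{n}}\sum_{d\geq 0,\mu }q^{d}s_{\lambda /d/\mu }(y)s_{\mu }(x)$, which is precisely $x^{\delta _{n}}\Delta _{n,k}s_{\lambda }(y,x)$ by (\ref{cop}).

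The main obstacle is the middle step: verifying that in the truncated, $q$-deformed ring the ordered product $\prod_{i}H(y_{i})$ expands cleanly as $\sum_{\kappa \in (n,k)}s_{\kappa }(y)\widehat{s}_{\kappa }$, with no stray $q$-corrections beyond those already built into quantum Giambelli. This rests on two inputs: the degree-$k$ truncation of $H(y)$ from the preceding theorem, which cuts the Cauchy sum down to partitions in the $n\times k$ box, and the fact that the Cauchy identity is a universal polynomial identity and so remains valid once its products are read as the commuting quantum products of the $H_{j}$ in $qH_{n,k}^{\ast }$. A secondary point requiring care is the exact normalisation of the $B$-creation formula, since it alone supplies the prefactor $x^{\delta _{n}}$; any deviation there would shift the stated monomial.
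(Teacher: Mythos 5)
Your proposal is correct, and its overall skeleton (factor the matrix element into the $B$-creation formula times the $H$-product, then assemble via the Frobenius coproduct) is the same as the paper's. The difference lies in how the key ingredient $\langle \lambda |H(y_{n})\cdots H(y_{1})|\mu \rangle =\sum_{d}q^{d}s_{\lambda /d/\mu }(y)$ is obtained. The paper gets it combinatorially: it is the partition function (\ref{ZgeneratesGW}) of the vicious walker model, expanded into monomial symmetric functions whose coefficients are quantum Kostka numbers via the explicit bijection between lattice configurations and toric tableaux, so the toric Schur function appears in Postnikov's combinatorial guise. You instead argue algebraically: expand $\prod_{i}H(y_{i})=\sum_{\kappa }m_{\kappa }(y)H_{\kappa }=\sum_{\kappa }s_{\kappa }(y)S_{\kappa }$ with $S_{\kappa }=\det (H_{\kappa _{i}-i+j})$ by the Cauchy/Kostka-duality identity in the commuting variables $H_{j}$, truncate to $\kappa \in (n,k)$ using $H_{j}|_{V_{n,q}}=0$ for $j>k$, and then invoke the isomorphism $\mathcal{A}_{n,k}\cong qH_{n,k}^{\ast }$ together with quantum Giambelli to read $\langle \lambda |S_{\kappa }|\mu \rangle =\sum_{d}q^{d}C_{\kappa \mu }^{\lambda ,d}$, landing on the geometric definition $s_{\lambda /d/\mu }=\sum_{\kappa }C_{\kappa \mu }^{\lambda ,d}s_{\kappa }$ from (\ref{toricSchur0}). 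This is essentially the argument of Corollary \ref{cor:toric} run in reverse; it is legitimate given Corollary \ref{cor:iso} and the Pieri-rule lemmas (\ref{Haction}), but it leans on the full ring isomorphism (hence on the quantum Pieri rule and Bertram's quantum Giambelli) where the paper's route is self-contained at the combinatorial level and produces the tableau-sum form of $s_{\lambda /d/\mu }$ directly. Similarly, for the creation formula $B(x_{n})\cdots B(x_{1})|0\rangle =x^{\delta _{n}}\sum_{\mu }s_{\mu }(x)|\mu \rangle $ you propose the standard algebraic-Bethe-ansatz derivation from the $A$--$B$, $D$--$B$ exchange relations, whereas the paper proves it graphically (Lemma \ref{lem:Bethe} via Claim \ref{claim}); either works, and your caution about the ordering of the $B$'s fixing the monomial $x^{\delta _{n}}$ (via $B(x)B(y)=\tfrac{y}{x}B(y)B(x)$) is well placed.
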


This last results implies in particular that one can use the commutation
relations of the Yang-Baxter algebra encoded in (\ref{YBE0}) to compute the
toric Schur functions (\ref{toricSchur0}) by moving the $H$-operators past
the $B$-operators. We will thus derive the following formula in terms of
divided difference operators (also called Demazure or
Bernstein-Gelfand-Gelfand operators).

Set $\partial _{i}=(1-x_{i+1}/x_{i})^{-1}(s_{i}-1)$ with $s_{i}$ being the
transposition which acts by switching $x_{i}$ with $x_{i+1}$ and define $%
\nabla _{i}=\partial _{n-1+i}\cdots \partial _{i+1}\partial _{i}$.

\begin{corollary}
\label{Cor:Demazure}Choose $y_{i}=x_{i+n}$ in (\ref{cop2}) and introduce the
\textquotedblleft generating function\textquotedblright\ 
\begin{equation}
F_{\lambda }(x;y)=s_{\lambda }(y)\prod_{i=1}^{n}(1+qx_{i}^{N})\;.
\label{toric_generating}
\end{equation}%
Then we have the following formula in terms of Demazure operators,%
\begin{equation}
\sum_{d\geq 0}q^{d}s_{\lambda /d/\mu }(y)=\left\langle s_{\mu
}(x)||x^{-\delta _{n}}\nabla _{1}\nabla _{2}\cdots \nabla _{n}y^{\delta
_{n}}F_{\lambda }(x;y)\right\rangle  \label{toricDemazure}
\end{equation}%
where the notation $\left\langle s_{\mu }(x)||\cdots \right\rangle $ denotes
the coefficient of the Schur function $s_{\mu }(x)$.
\end{corollary}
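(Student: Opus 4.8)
The plan is to deduce the corollary directly from Proposition~\ref{Prop:cop} by evaluating its matrix element through the Yang--Baxter commutation relations. Dividing~(\ref{cop2}) by $x^{\delta _{n}}$ and reading off, via the coproduct~(\ref{cop}), the coefficient of $s_{\mu }(x)$ gives at once
\begin{equation*}
\sum_{d\geq 0}q^{d}s_{\lambda /d/\mu }(y)=\left\langle s_{\mu }(x)\,||\,x^{-\delta _{n}}\,\langle \lambda |H(y_{n})\cdots H(y_{1})B(x_{n})\cdots B(x_{1})|0\rangle \right\rangle ,
\end{equation*}
so the whole assertion reduces to showing that, after the substitution $y_{i}=x_{i+n}$, the remaining matrix element coincides with $\nabla _{1}\cdots \nabla _{n}\,y^{\delta _{n}}F_{\lambda }(x;y)$.

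First I would make the $HB$ commutation relation explicit. From the intertwining relation~(\ref{YBE0}) the exchange of a diagonal operator past a creation operator takes the schematic form
\begin{equation*}
H(y)B(x)=a(x/y)\,B(x)H(y)+b(x/y)\,B(y)H(x),
\end{equation*}
with $a,b$ rational functions dictated by the entries of $R$; the second term swaps the two spectral parameters. Writing $H(y)=A(y)+qD(y)$ and pushing $H(y_{n})\cdots H(y_{1})$ to the right through $B(x_{n})\cdots B(x_{1})$, each elementary exchange either preserves or transposes a pair of parameters, so the fully reordered expression becomes a sum over permutations of the combined alphabet, the leading (identity) term carrying $s_{\lambda }(y)$ and every other term obtained from it by the transpositions $s_{i}$ weighted by the rational factors above.

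The crucial step is to recognise this permutation sum as the image of $\nabla _{1}\cdots \nabla _{n}$. Since $\partial _{i}=(1-x_{i+1}/x_{i})^{-1}(s_{i}-1)$ is precisely the operator that resums a monomial against the rational weights $a,b$ produced by a single exchange, the block $\nabla _{i}=\partial _{n-1+i}\cdots \partial _{i}$ realises the transport of the $i$-th spectral parameter past the whole string of $B$-variables, and the product $\nabla _{1}\cdots \nabla _{n}$ assembles all of these. The staircase prefactors $x^{-\delta _{n}}$ and $y^{\delta _{n}}$ convert the bare ordered monomial into the $\delta $-shifted bialternant on which divided differences act as in the Weyl character formula, while the substitution $y_{i}=x_{i+n}$ fuses the two alphabets into $x_{1},\dots ,x_{2n}$ so that the operators $\partial _{1},\dots ,\partial _{2n-1}$ occurring in $\nabla _{1}\cdots \nabla _{n}$ all act on one variable set. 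At $q=0$ this already specialises to the classical divided-difference expression for the coproduct $s_{\lambda }(x,y)=\sum_{\mu }s_{\lambda /\mu }(y)s_{\mu }(x)$, which is a useful sanity check.

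The hard part will be accounting for the quantum factor $\prod_{i=1}^{n}(1+qx_{i}^{N})$ in $F_{\lambda }$. Because $q$ enters $H(y)=A(y)+qD(y)$ linearly, expanding $H(y_{n})\cdots H(y_{1})$ produces all $2^{n}$ selections of $A$ versus $qD$, and one must verify that after commutation and symmetrisation the $q$-dependence collapses to exactly this product on the $B$-variables. The tool for this is the functional identity~(\ref{QQ0}), $E(-x)H(x)=1+(-1)^{n}qx^{N}$, together with the stated polynomial degrees: it identifies the wrap-around contribution of the $D$-term on the cylinder with the factor $1+qx^{N}$. Checking that this collapse commutes with the divided-difference resummation --- i.e.\ that the factors $(1+qx_{i}^{N})$ can be carried through $\nabla _{1}\cdots \nabla _{n}$ consistently --- is the one genuinely delicate piece of bookkeeping; the remainder is the standard Bethe-ansatz rearrangement of creation operators.
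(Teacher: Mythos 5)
Your overall route is the paper's route: start from Proposition \ref{Prop:cop}, read off the coefficient of $s_{\mu }(x)$, and then reorder $H(y_{n})\cdots H(y_{1})B(x_{n})\cdots B(x_{1})$ by pushing the $H$'s to the right, recognising the exchange relations as divided difference operators. Your opening reduction is correct. But there are two genuine gaps in the execution. First, the linchpin is not the generic two-term relation $H(y)B(x)=a\,B(x)H(y)+b\,B(y)H(x)$ you write down, but the one-term identity $H(x_{i+1};q)B(x_{i})f=\partial _{i}B(x_{i+1})H(x_{i};-q)f$, valid only when $f$ is symmetric in $x_{i},x_{i+1}$ (this is Lemma \ref{Lem:ybaDemazure}). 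Two things are essential there and absent from your sketch: the symmetry hypothesis on the operand, which must be checked at every step of the induction (the paper arranges matters so that the operand never depends on the variables being exchanged), and the sign flip $q\mapsto -q$, which comes from the fact that $D$ commutes past $B$ with an extra minus sign relative to $A$. Your schematic relation keeps the same $H$ on both sides, so after passing all $n$ creation operators you would not arrive at $H(x_{i};(-1)^{n}q)$ and the $q$-dependence of the final answer would be unjustified.

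Second, the part you flag as ``genuinely delicate'' --- the origin of the factor $\prod_{i=1}^{n}(1+qx_{i}^{N})$ in (\ref{toric_generating}) --- is actually the trivial part, and your proposed tool for it, the functional identity (\ref{QQ0}), is not used and not needed. After the full reordering the expression reads $\nabla _{1}\cdots \nabla _{n}\,B(x_{2n})\cdots B(x_{n+1})\,H(x_{n};(-1)^{n}q)\cdots H(x_{1};(-1)^{n}q)|0\rangle $: every $H$ now sits directly against the vacuum, where $A(x)|0\rangle =|0\rangle $ and $D(x)|0\rangle =x^{N}|0\rangle $, so each contributes a scalar $1+qx_{i}^{N}$, and the remaining $B$-string produces $y^{\delta _{n}}s_{\lambda }(y)$ by Lemma \ref{lem:Bethe}. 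There is consequently no issue of ``carrying the factors through $\nabla _{1}\cdots \nabla _{n}$'': the scalars appear only after the reordering is complete and the divided differences act on them harmlessly. Since you explicitly leave this step unexecuted and point it at the wrong mechanism, the proposal as written does not close.
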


We will demonstrate on a simple example how toric Schur functions can be
explicitly computed by invoking the last formula in Section \ref%
{Sec:Demazure}.

Another identity for toric Schur functions - which is a direct consequence
of the quantum group structure and does not seem to have appeared previously
in the literature - is the following sum rule.

\begin{corollary}
Let $N=2n$. Then we have the following identities for toric Schur functions%
\begin{equation}
e_{r}(x_{1}^{N},\ldots ,x_{n}^{N})=\sum_{\substack{ d+d^{\prime }=r  \\ \mu
\in (n,n)}}(-1)^{|\lambda |-|\mu |}s_{\lambda ^{\prime }/d^{\prime }/\mu
^{\prime }}(x_{1},\ldots ,x_{n})s_{\mu /d/\lambda }(x_{1},\ldots ,x_{n}),
\end{equation}%
where $\lambda $ is any partition in the $n\times n$ square.
\end{corollary}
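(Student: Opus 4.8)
The plan is to deduce the identity from the functional relation \eqref{QQ0}, specialised to $k=n$ and $N=2n$, by iterating it in $n$ spectral parameters and reading off a single diagonal matrix element in the Schubert basis of $qH_{n,n}^{\ast}$. Writing $\prod_{i=1}^{n}(1+qx_{i}^{N})=\sum_{r\geq 0}q^{r}e_{r}(x_{1}^{N},\ldots ,x_{n}^{N})$, it suffices to prove the generating-function form of the claim and then extract the coefficient of $q^{r}$. Since each factor $E(-x_{i})H(x_{i})=1+(-1)^{n}qx_{i}^{N}$ is a scalar operator, and the coefficients of $E(x)$ and $H(y)$ commute as operators depending on independent spectral parameters (from the Yang--Baxter algebra, which in particular gives $[E,H]=0$ at $x=y=1$), the operator $\prod_{i}E(-x_{i})H(x_{i})$ is itself scalar and factors as $\bigl(\prod_{i}E(-x_{i})\bigr)\bigl(\prod_{i}H(x_{i})\bigr)$.

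The two key computations express each of these products on the Schubert basis as a generating function of toric Schur polynomials. Because $\mathcal{A}_{n,k}\cong qH_{n,k}^{\ast}$ acts by quantum multiplication (the regular representation), $H(x)$ acts as multiplication by $\sum_{j}h_{j}x^{j}$ and $E(-x)$ as multiplication by $\sum_{r}(-1)^{r}e_{r}x^{r}$. Combining the Cauchy identity $\prod_{i}\sum_{j}h_{j}x_{i}^{j}=\sum_{\kappa}s_{\kappa}(x)\det(h_{\kappa_{a}-a+b})$ with quantum Giambelli $s_{\kappa}=\det(h_{\kappa_{a}-a+b})$ and the definition \eqref{toricSchur0}, I obtain
\[
\prod_{i=1}^{n}H(x_{i})\,s_{\lambda}=\sum_{\mu ,d}q^{d}\,s_{\mu /d/\lambda}(x)\,s_{\mu }.
\]
The dual Cauchy identity $\prod_{i}\sum_{r}(-1)^{r}e_{r}x_{i}^{r}=\sum_{\kappa}(-1)^{|\kappa|}s_{\kappa}(x)\det(e_{\kappa_{a}-a+b})$ together with $s_{\kappa'}=\det(e_{\kappa_{a}-a+b})$ and the level-rank symmetry $C_{\kappa'\mu'}^{\nu',d'}=C_{\kappa\mu}^{\nu,d'}$ (equivalently $K_{\nu'/d/\mu',\lambda}=K_{\nu/d/\mu,\lambda}$) then gives
\[
\prod_{i=1}^{n}E(-x_{i})\,s_{\mu}=\sum_{\nu ,d'}q^{d'}(-1)^{|\mu |+|\nu |}\,s_{\nu'/d'/\mu'}(x)\,s_{\nu }.
\]
The crucial point is that the sign $(-1)^{|\kappa|}$ produced by $s_{\kappa}(-x)$ is \emph{constant} over each homogeneous component: the Gromov--Witten degree constraint forces $|\kappa|\equiv |\nu|+|\mu|+d'N\pmod 2$, and $N=2n$ even collapses this to $(-1)^{|\nu|+|\mu|}$. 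This is exactly where the hypothesis $N=2n$ enters.

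Composing the two formulas and invoking \eqref{QQ0} to identify the result with the scalar $\prod_{i}(1+(-1)^{n}qx_{i}^{N})\,s_{\lambda}$, the coefficient of $s_{\nu}$ reads
\[
\prod_{i}(1+(-1)^{n}qx_{i}^{N})\,\delta_{\nu\lambda}=\sum_{\mu ,d,d'}q^{d+d'}(-1)^{|\lambda |+|\mu |}s_{\mu /d/\lambda}(x)\,s_{\lambda'/d'/\mu'}(x).
\]
Only $\nu=\lambda$ survives on the left, and extracting the coefficient of $q^{r}$ (so that $d+d'=r$) yields the stated identity, the sign $(-1)^{|\lambda|+|\mu|}=(-1)^{|\lambda|-|\mu|}$ matching the corollary. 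I expect the main obstacle to be the sign bookkeeping: first the collapse of $(-1)^{|\kappa|}$ to a $\mu,\nu$-dependent constant (the step that genuinely requires $N$ even), and then tracking the global factor $(-1)^{n}$ on the left against the plainly written $e_{r}(x_{1}^{N},\ldots ,x_{n}^{N})$ through the normalisation of $q$ in \eqref{QQ0}. A secondary technical point is justifying that the Cauchy identities descend to the quantum ring---Schur classes $s_{\kappa}$ with $\kappa$ outside the $n\times n$ box reduce through the ideal \eqref{Ipoly0}---and that $E(x)$ and $H(y)$ commute for independent spectral parameters so that the factorisation in the first paragraph is legitimate.
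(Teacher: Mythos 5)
Your proof is correct and follows essentially the same route as the paper: iterate the functional relation $E(-x)H(x)=1+(-1)^{n}qx^{N}$ over $n$ spectral parameters, expand $\prod_{i}H(x_{i})$ and $\prod_{i}E(-x_{i})$ on the Schubert basis as generating functions of toric Schur polynomials (the paper cites its partition-function formulas where you re-derive them from Cauchy/Giambelli), take the diagonal matrix element, and equate powers of $q$. The residual $(-1)^{nr}$ you flag is real --- the paper's Section 6 version of this identity carries the explicit factor $(-1)^{n(r-d)}$, and the unsigned statement in the introduction only matches the derivation when $nr$ is even --- so your sign bookkeeping is, if anything, more careful than the source.
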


This identity is a true \textquotedblleft quantum
relation\textquotedblright\ as it becomes trivial for $q=0$, i.e. there
exists no analogue of this relation for skew Schur functions where $%
d,d^{\prime }=0$. Similar identities hold also for $N\notin 2\mathbb{N}$ but
look more complicated. They will be stated in Section \ref{Sec:Demazure}.

\subsection{Outline of the article}

In Section 2 we introduce some preliminary combinatorial notions regarding
01-words and partitions.

In Section 3 we discuss in detail the vicious and osculating walker models.
While these have been introduced in the literature previously, our
conventions differ from the usual ones by rotating the lattice $45^\circ$
and choosing a special set of weights. We also analyse in depth the related
Yang-Baxter algebras and show that both models are related via level-rank
duality. We derive a matrix functional equation relating the transfer
matrices of both models.

Section 4 contains the algebraic Bethe ansatz construction of the
idempotents of the Verlinde algebra. As a byproduct of this construction we
obtain novel expressions for Schur functions as matrix elements of the above
mentioned Yang-Baxter algebras.

Section 5 states explicit bijections between vicious and osculating walker
configurations on the cylinder and toric tableaux, which can be interpreted
as a special subset of semi-standard tableau of skew shape. As a corollary
we obtain that Postnikov's toric Schur polynomials are the partition
functions of the specialised vicious and osculating walker models. The sum
rule (\ref{sumrule0}), relating Gromov-Witten invariants to the counting of
vicious and osculating walker configurations on the cylinder, is then an
immediate consequence.

We will conclude with stating the proofs of the various identities for toric
Schur functions arising from the quantum group structure.\medskip

\noindent \textbf{Acknowledgement}. The research presented in this article
has in part been financially supported by a University Research Fellowship
of the Royal Society, by the Engineering and Physical Sciences Research
Council grants EP/I037636/1 and GR/R93773/01 as well as the trimester
programme \emph{Integrability in Geometry and Mathematical Physics} (1
January -- 30 April 2012) at the Hausdorff Research Institute for
Mathematics, Bonn, Germany. Some of the results have been presented during
the trimester workshop \emph{Integrability in Topological Field Theory}
(16-20 April 2012) organised by Albrecht Klemm and Katrin Wendland. The
author wishes to thank the trimester organisers, Franz Pedit, Ulrich
Pinkall, Iskander Taimanov, Alexander Veselov and Katrin Wendland and the
staff of the Hausdorff Institute for their kind invitation and hospitality.
It is a pleasure to thank Vassily Gorbounov and Catharina Stroppel for
discussions.

\section{Preliminaries}

Throughout this article we consider non-negative integers $N,n,k\in \mathbb{Z%
}_{\geq 0}$ such that $N=n+k$ and set $I:=\{1,\ldots ,N\}$. Let $V=\mathbb{C}%
v_{0}\oplus \mathbb{C}v_{1}$ and denote by $V^{\ast }$ its dual. Consider
the tensor product $V^{\otimes N}$. We identify the standard basis $\mathcal{%
B}=\{v_{w_{1}}\otimes \cdots \otimes v_{w_{N}}:w_{i}=0,1\}\subset V^{\otimes
N}$ with the set of 01-words of length $N$, $W=\{w=w_{1}w_{2}\ldots
w_{N}:w_{i}=0,1\}$, via the map 
\begin{equation}
w\mapsto |w\rangle :=v_{w_{1}}\otimes \cdots \otimes v_{w_{N}}\;.
\label{word2b}
\end{equation}%
For convenience we are employing the Dirac notation and denote by $\langle
w| $ the dual basis with $\langle w|\tilde{w}\rangle =\prod_{i=1}^{N}\delta
_{w_{i},\tilde{w}_{i}}$. Furthermore, we shall denote by $W_{n}=\{w\in
W:|w|=\sum_{i=1}^{N}w_{i}=n\}$ the subset of all 01-words with $n$
one-letters, by $\mathcal{B}_{n}\subset V^{\otimes N}$ its image under the
above map (\ref{word2b}) and by $V_{n}\subset V^{\otimes N}$ the subspace
spanned by the corresponding basis elements in $\mathcal{B}_{n}$. As (\ref%
{word2b}) is a bijection we can also introduce the inverse map whose image
we denote by $w(b)$ with $b\in \mathcal{B}_{n}$.

There are alternative descriptions of the elements in $\mathcal{B}_{n}$
which will be useful for our discussion. Namely, consider the set of
partitions $\lambda $ whose Young diagram fits into a bounding box of height 
$n$ and width $k$; we shall denote it by $(n,k)$. Define a bijection $%
(n,k)\rightarrow W_{n}$ via 
\begin{equation}
\lambda \mapsto w(\lambda )=0\cdots 0\underset{\ell _{1}}{1}0\cdots 0%
\underset{\ell _{n}}{1}0\cdots 0,\qquad \ell _{i}(\lambda )=\lambda
_{n+1-i}+i\;,  \label{part2word}
\end{equation}%
where $\ell (\lambda )=(\ell _{1},\ldots ,\ell _{n})$ with $1\leq \ell
_{1}<\ldots <\ell _{n}\leq N$ denote the positions of 1-letters in $%
w(\lambda )$ from \emph{left to right}. We assume the latter to be periodic,
that is we set $\ell _{i+n}(\lambda )=\ell _{i}(\lambda )+N$. We shall
denote the image of the inverse of the map (\ref{part2word}) by $\lambda (w)$
and by $|\lambda \rangle $ the corresponding ket vector in $\mathcal{B}_{n}$%
. Note that the correspondence (\ref{part2word}) can be easily understood
graphically: the Young diagram of the partition $\lambda $ traces out a path
in the $n\times k$ rectangle which is encoded in $w$. Starting from the left
bottom corner in the $n\times k$ rectangle go one box right for each letter $%
0$ and one box up for each letter 1; see Figure \ref{fig:01worddict} for an
example. 
For later purposes we introduce the notation 
\begin{equation}
n_{\ell }(\lambda )=\sum_{i=1}^{\ell }w_{i}(\lambda )  \label{no}
\end{equation}%
for $\ell \in I$ and set $n_{\ell +N}(\lambda )=n_{\ell }(\lambda )+n$ for
any $\ell \in \mathbb{Z}$.

Exploiting the bijection (\ref{part2word}) there are two operations on
01-words which are induced by taking the complement of $\lambda \in (n,k)$
in the bounding box, $\lambda \mapsto \lambda ^{\vee }:=(k-\lambda
_{n},\ldots ,k-\lambda _{2},k-\lambda _{1})$, and by considering the
conjugate partition $\lambda ^{\prime }\in (k,n)$. The corresponding
01-words $w(\lambda ^{\vee }),w(\lambda ^{\prime })$ are obtained from $%
w(\lambda )$ via the maps 
\begin{equation}
w\mapsto w^{\vee }:=w_{N}\ldots w_{2}w_{1}\quad \text{and}\quad w\mapsto
w^{\prime }=(1-w_{N})\cdots (1-w_{2})(1-w_{1}),  \label{01bijections}
\end{equation}%
respectively. Note that these maps yield bijections $W_{n}\rightarrow W_{n}$
and $W_{n}\rightarrow W_{k}$. We will also make use of the combined map $%
w\mapsto w^{\#}:=(w^{\vee })^{\prime }=(w^{\prime })^{\vee }$ which is
simply the exchange of 0 and 1-letters.

\begin{figure}[tbp]
\begin{equation*}
\includegraphics[scale=0.26]{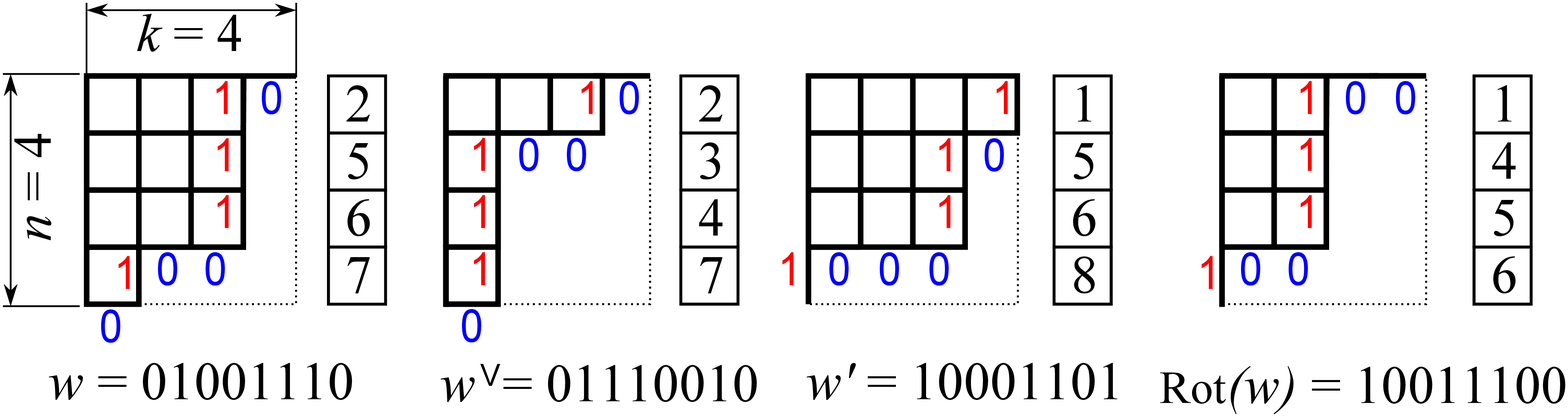}
\end{equation*}%
\caption{01-words, partitions and single column tableaux. Depicted are the
various transformations of 01-word defined in the text for $N=2n=2k=8$.}
\label{fig:01worddict}
\end{figure}

There is one additional map $\func{Rot}:W_{n}\rightarrow W_{n}$ which we
require for our discussion: set %
$w\mapsto \func{Rot}(w):=w_{2}w_{3}\ldots w_{N}w_{1}\,$which translates via (%
\ref{part2word}) to the action%
\begin{equation}
\func{Rot}(\lambda ):=\left\{ 
\begin{array}{cc}
(\lambda _{1}-1,\ldots ,\lambda _{n}-1), & \lambda _{n}>0 \\ 
(k,\lambda _{1},\ldots ,\lambda _{n-1}), & \text{else}%
\end{array}%
\right. \;.  \label{Rotdef}
\end{equation}%
Exploiting the last expression one then derives the following formula with $%
|\lambda|=\sum_i\lambda_i$, 
\begin{equation}
|\func{Rot}\nolimits^{\ell }(\lambda )|=|\lambda |-\ell n+n_{\ell }(\lambda
)N\;.  \label{Rotabs}
\end{equation}%
Note that obviously we have $\func{Rot}\nolimits^{N}(\lambda )=\lambda $.
For obvious reasons we will refer to $\func{Rot}$ as the \emph{rotation
operator}. The maps (\ref{01bijections}) and (\ref{Rotdef}) are significant
for our discussion as they constitute symmetries of Gromov-Witten invariants.

\section{Vicious and osculating walkers}

We recall the definition of the lockstep vicious walker model originally
introduced by Fisher \cite{Fisher} and show that this statistical mechanics
model with the correct choice of weights and boundary conditions is closely
related to the small quantum cohomology ring of the Grassmannian: its
partition function can be interpreted as generating function of 3-point,
genus zero Gromov-Witten invariants.

There is another statistical model introduced by Brak \cite{Brak}, called
osculating walkers\footnote{%
We note that Brak's model is a six-vertex model, i.e. has different
Boltzmann weights from the one discussed here and in particular has one more
allowed vertex configuration. However, the crucial vertex configuration with
two paths approaching each other arbitarily close is also present here and
we therefore adopt his nomenclature; see Figure \ref{fig:QCvertex2}.}, which
in our setting turns out to be dual or complementary to the vicious walker
model. Namely, we will show that the transfer matrices of the vicious and
osculating walker models are given in terms of analogues of complete and
elementary symmetric functions in certain noncommutative variables: the
generators of the nil affine Temperely-Lieb algebra.

\subsection{Vicious walkers: vertex and lattice configurations}

We start with the 5-vertex formulation of the vicious walker model. Fix two
integers $N>0$ and $0\leq n\leq N$ and consider the square lattice 
\begin{equation}
\mathbb{L}:=\{\langle i,j\rangle \in \mathbb{Z}^{2}|0\leq i\leq n+1,\;0\leq
j\leq N+1\}\,.  \label{lattice}
\end{equation}%
Denote by $\mathbb{E}=\{(p,p^{\prime })\in \mathbb{L}^{2}:p_{1}+1=p_{1}^{%
\prime },\,p_{2}=p_{2}^{\prime }\,\text{or}\,p_{1}=p_{1}^{\prime
},\,p_{2}+1=p_{2}^{\prime }\}$ the set of horizontal and vertical edges.

\begin{definition}
A \emph{lattice configuration}\ $\mathcal{C}:\mathbb{E}\rightarrow \{0,1\}$
is an assignment of values $0$ or $1$ to the lattice edges.
\end{definition}

The weight of a configuration $\mathcal{C}$ is defined as the product over
its vertex weights, 
\begin{equation}
\func{wt}(\mathcal{C})=\prod_{(i,j)\in \mathbb{L}}\func{wt}(\mathrm{v}%
_{i,j})\in \mathbb{Z}[x_{1},\ldots ,x_{n}]\;,  \label{wtC}
\end{equation}%
where $\mathrm{v}_{i,j}$ denotes the vertex obtained by intersecting the $i^{%
\text{th}}$ horizontal lattice line with the $j^{\text{th}}$ vertical one.
That is, a vertex configuration is a 4-tuple $\mathrm{v}_{i,j}=(a,b,c,d)$
where respectively $a,b,c,d=0,1$ are the values of the W, N, E, S edges at
the lattice point $\langle i,j\rangle $. There are 5 allowed vertex
configurations which are depicted in Figure \ref{fig:QCvertex} together with
their weights. All other vertex configurations are forbidden, i.e. they have
weight zero. Some of the nonzero weights are given in terms of a set of
commutative indeterminates $(x_{1},\ldots ,x_{n})$, one for each row. 
\begin{figure}[tbp]
\begin{equation*}
\includegraphics[scale=0.35]{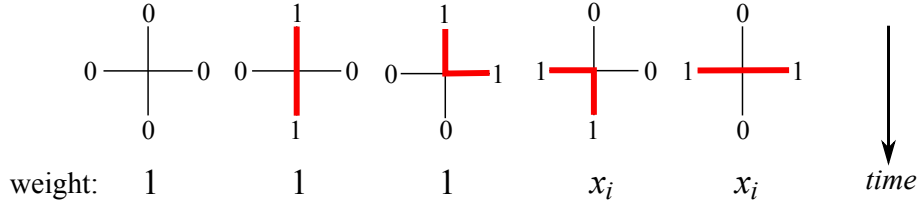}
\end{equation*}%
\caption{The 5 allowed vertex configurations and their corresponding vicious
walk sections.}
\label{fig:QCvertex}
\end{figure}

Connecting the 1-letters in each vertex configuration as shown in Figure \ref%
{fig:QCvertex}, it is easy to see that each lattice configuration
corresponds to a configuration of $n$ non-intersecting paths, where a path $%
\gamma =(p_{1},\ldots ,p_{l})$ is a sequence of points $p_{r}=(i_{r},j_{r})%
\in \mathbb{L}$ such that either $p_{r+1}=(i_{r}+1,j_{r})$ or $%
(i_{r},j_{r}+1)$, i.e. a connected sequence of horizontal and vertical edges
as depicted in Figure \ref{fig:walk_tableau_bijection}.

\subsection{Transfer matrix and nil Temperley-Lieb polynomials}

Define $\sigma ^{-}=\left( 
\begin{smallmatrix}
0 & 1 \\ 
0 & 0%
\end{smallmatrix}%
\right) ,\;\sigma ^{+}=\left( 
\begin{smallmatrix}
0 & 0 \\ 
1 & 0%
\end{smallmatrix}%
\right) ,\;\sigma ^{z}=\left( 
\begin{smallmatrix}
1 & 0 \\ 
0 & -1%
\end{smallmatrix}%
\right) $ to be the Pauli matrices acting on $V\cong \mathbb{C}^{2}$ via $%
\sigma ^{-}v_{1}=v_{0}$, $\sigma ^{+}v_{0}=v_{1}$ and $\sigma ^{z}v_{\alpha
}=(-1)^{\alpha }v_{\alpha },~\alpha =0,1$. We now interpret the possible
vertex configuration in the $i^{\text{th}}$ row and $j^{\text{th}}$ column
as a map $L(x_{i}):V_{i}(x_{i})\otimes V_{j}\rightarrow V_{i}(x_{i})\otimes
V_{j}$ with $V_{i}(x_{i})=V_{i}\otimes \mathbb{C}(x_{i})$ and $V_{i}\cong
V_{j}\cong V$ for all $\langle i,j\rangle \in \mathbb{L}$. We will therefore
drop the row and column labels and, in addition, often suppress the
dependence on the indeterminate $x_{i}$ in the notation. Thus, the values of
the horizontal edges label the basis vectors in $V_{i}$ while the values of
the vertical edges label the basis vectors in $V_{j}$. The mapping is from
the NW to the SE direction through the vertex. That is, label with $%
a,b,c,d=0,1$ the values of the edges in Figure \ref{fig:QCvertex} in
clockwise direction starting from the W edge. Interpret the corresponding
weight $L_{cd}^{ab}=\func{wt}(\mathrm{v}_{i,j})$ as the matrix element of
the map $L$, where we set $L_{cd}^{ab}=0$ whenever the vertex configuration
is not allowed. We then obtain 
\begin{equation}
L(x_{i})v_{a}\otimes v_{b}=\sum_{c,d=0,1}L_{cd}^{ab}(x_{i})v_{c}\otimes
v_{d}=x_{i}^{a}[v_{0}\otimes (\sigma ^{+})^{a}v_{b}+v_{1}\otimes \sigma
^{-}(\sigma ^{+})^{a}v_{b}]\;,  \label{5vLmatrix}
\end{equation}%
which can be rewritten in the basis $\{v_{0}\otimes v_{0},v_{0}\otimes
v_{1},v_{1}\otimes v_{0},v_{1}\otimes v_{1}\}$ as%
\begin{equation}
L(x_{i})=\left( 
\begin{array}{cccc}
1 & 0 & 0 & 0 \\ 
0 & 1 & x_{i} & 0 \\ 
0 & 1 & x_{i} & 0 \\ 
0 & 0 & 0 & 0%
\end{array}%
\right) \;.  \label{L4x4}
\end{equation}

\begin{proposition}
The 5-vertex L-matrix satisfies the Yang-Baxter equation,%
\begin{equation}
R_{12}(x,y)L_{13}(x)L_{23}(y)=L_{23}(y)L_{13}(x)R_{12}(x,y),  \label{YBE1}
\end{equation}%
where the matrix $R$ is given by%
\begin{equation}
R(x,y)=\left( 
\begin{array}{cccc}
1 & 0 & 0 & 0 \\ 
0 & 0 & 1 & 0 \\ 
0 & y/x & 1-y/x & 0 \\ 
0 & 0 & 0 & y/x%
\end{array}%
\right)\;.  \label{R}
\end{equation}
\end{proposition}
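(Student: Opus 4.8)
The plan is to verify the operator identity (\ref{YBE1}) by direct computation, using a conservation law to reduce the work to two small blocks. First I observe that each of the three operators appearing in (\ref{YBE1}) --- the matrices $R_{12}$, $L_{13}$ and $L_{23}$ viewed as operators on $V^{\otimes 3}$ --- commutes with the number of $1$-letters, i.e.\ with $\mathbf{n}=\sum_{a=1}^{3}\tfrac12(1-\sigma^{z}_{a})$, where $\sigma^{z}_{a}$ acts on the $a$-th tensor factor. For $R$ this is read off from (\ref{R}), whose nonzero entries only connect configurations with equally many $1$'s; for $L$ it follows from (\ref{5vLmatrix}), since $L(x)\,v_a\otimes v_b$ is a combination of states carrying $a+b$ one-letters (with the convention that the zero vector, which is what $v_1\otimes v_1$ produces, carries any weight). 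Hence $V^{\otimes 3}=\bigoplus_{m=0}^{3}(V^{\otimes 3})_m$ splits into weight subspaces invariant under both sides of (\ref{YBE1}), and it suffices to check the identity on each summand.

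The extreme weights are immediate. On $(V^{\otimes 3})_0=\mathbb{C}\,v_0\otimes v_0\otimes v_0$ every factor acts as the identity, so the two sides agree. On $(V^{\otimes 3})_3=\mathbb{C}\,v_1\otimes v_1\otimes v_1$ both sides vanish: on the left $L_{23}(y)$ already annihilates $v_1\otimes v_1$ in the last two factors, while on the right $R_{12}$ merely rescales $v_1\otimes v_1\otimes v_1$ by $y/x$ and the subsequent $L_{13}(x)$ then annihilates it.

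The real content is therefore the verification on the two three-dimensional middle blocks. In the weight-one block I would take the basis $e_1=v_1\otimes v_0\otimes v_0$, $e_2=v_0\otimes v_1\otimes v_0$, $e_3=v_0\otimes v_0\otimes v_1$ recording the position of the single $1$, and apply both words of operators to each $e_i$ using the three elementary rules read off from (\ref{5vLmatrix}) and (\ref{R}). Each side returns an explicit combination of $e_1,e_2,e_3$ whose coefficients are expressions in $x$ and $y/x$; equality then reduces to elementary relations, the decisive one being that the off-diagonal weight $1-y/x$ of $R$ combines with the spectral parameters so that $y+x(1-y/x)=x$. The weight-two block is handled identically with the basis indexed by the position of the single $0$. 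I would stress that it must be treated on its own rather than deduced from the weight-one case by a $0\leftrightarrow1$ exchange, because $L$ is \emph{not} symmetric under that exchange: indeed $L(x)\,v_1\otimes v_1=0$ whereas $L(x)\,v_0\otimes v_0\neq0$.

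The computation carries no conceptual difficulty; the only thing to watch is the bookkeeping. The main obstacle is precisely this bookkeeping: keeping straight which tensor factors each subscripted operator touches, and tracking that $L$ is non-invertible (the vertex $v_1\otimes v_1\mapsto0$), so that different terms drop out at different stages of the two sides and the cancellations producing agreement surface only after the ratio $y/x$ from the entries of $R$ is substituted.
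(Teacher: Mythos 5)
Your proposal is correct and is essentially the paper's proof: the paper disposes of this proposition with the single line ``A straightforward computation,'' and your direct verification --- organized by the conserved number of $1$-letters into invariant blocks of dimensions $1,3,3,1$ --- is exactly that computation carried out with sensible bookkeeping. The cancellations you single out, namely $y+x\left(1-y/x\right)=x$ together with the fact that $L(x)\,v_{1}\otimes v_{1}=0$ kills different terms on the two sides, are indeed what makes the two three-dimensional middle blocks agree.
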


\begin{proof}
A straightforward computation.
\end{proof}

Note that the matrix $R(x,y)$ is non-singular for generic $x,y$, since $\det
R(x,y)=-y^{2}/x^{2}$. The solution $L$ to the Yang-Baxter equation can be
used to define an algebra in $\func{End}(V)^{\otimes N}\cong \func{End}%
(V^{\otimes N})$, called the \emph{Yang-Baxter algebra} which plays a
central role in the quantum inverse scattering method; see e.g. \cite{KIB}
for a textbook and references therein. In fact, the Yang-Baxter algebra
comes naturally equipped with a coproduct. First rewrite the $L$-matrix in
the form%
\begin{equation}
L(x)=\left( 
\begin{array}{cc}
1 & x\sigma ^{+} \\ 
\sigma ^{-} & x\sigma ^{-}\sigma ^{+}%
\end{array}%
\right)  \label{L}
\end{equation}%
where the matrix elements are polynomials in the indeterminate $x$ with
coefficients in $\func{End}V$. The following is a known result how to
introduce a bi-algebra structure on solutions of the Yang-Baxter equation;
we therefore omit the proof.

\begin{proposition}
Define implicitly a coproduct $\Delta :\func{End}(V)\rightarrow \func{End}%
(V)\otimes \func{End}(V)$ by setting $\Delta L(x):=L_{13}(x)L_{12}(x)$ and a
co-unit $\varepsilon :\func{End}(V)\rightarrow \mathbb{C}$ by $\varepsilon
(L(x))=\left( 
\begin{smallmatrix}
1 & 0 \\ 
0 & 1%
\end{smallmatrix}%
\right) $. Here the maps $\varepsilon ,\Delta $ act on the coefficients when
expanding with respect to the spectral variable $x$. The set of solutions of
(\ref{YBE1}) equipped with $\Delta ,\varepsilon $ forms a bialgebra, so in
particular $\Delta L$ is again a solution of (\ref{YBE1}).
\end{proposition}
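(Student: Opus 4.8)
The essential content is the \textquotedblleft in particular\textquotedblright\ clause. The Yang--Baxter algebra is the associative algebra presented by the entries of $L$ subject to the relations (\ref{YBE1}); consequently a linear map defined on these entries extends to a well-defined \emph{algebra} homomorphism precisely when it carries the defining relations to relations of the same form. Thus the plan is to first verify that $\Delta L(x)=L_{13}(x)L_{12}(x)$ again satisfies (\ref{YBE1}) with the \emph{same} $R$-matrix; this simultaneously shows that $\Delta$ is a bialgebra coproduct (compatible with the product) and establishes the final assertion. The remaining coalgebra axioms, coassociativity and the counit law, then follow formally.

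For the solution property, relabel the two auxiliary spaces carrying the spectral parameters $x$ and $y$ as $1$ and $2$, and write $3,4$ for the two copies of $V$ being combined by the coproduct. With this labelling $(\Delta L)(x)=L_{14}(x)L_{13}(x)$ and $(\Delta L)(y)=L_{24}(y)L_{23}(y)$, and one must establish
\begin{equation*}
R_{12}(x,y)\,L_{14}(x)L_{13}(x)\,L_{24}(y)L_{23}(y)=L_{24}(y)L_{23}(y)\,L_{14}(x)L_{13}(x)\,R_{12}(x,y).
\end{equation*}
Only two ingredients are needed: the fundamental relation (\ref{YBE1}), applied once for each shared quantum space, and the fact that $L$-operators on disjoint tensor factors commute ($L_{13}L_{24}=L_{24}L_{13}$ and $L_{14}L_{23}=L_{23}L_{14}$). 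Concretely I would first commute $L_{13}$ past $L_{24}$, then apply (\ref{YBE1}) on space $4$ to replace $R_{12}L_{14}L_{24}$ by $L_{24}L_{14}R_{12}$, then apply (\ref{YBE1}) on space $3$ to replace $R_{12}L_{13}L_{23}$ by $L_{23}L_{13}R_{12}$, and finally commute $L_{14}$ past $L_{23}$; the resulting ordering is exactly the right-hand side.

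The coalgebra axioms are formal. Coassociativity $(\Delta\otimes\func{id})\Delta=(\func{id}\otimes\Delta)\Delta$ holds because both sides send $L(x)$ to the same ordered product $L_{14}(x)L_{13}(x)L_{12}(x)$, which is unambiguous by associativity of composition. The counit law follows from $\varepsilon(L(x))=\func{id}_{V}$: applying $\varepsilon$ to one quantum factor of $\Delta L=L_{13}L_{12}$ replaces the corresponding $L$ by the identity on the auxiliary space and returns $L$ itself, so $(\varepsilon\otimes\func{id})\Delta=\func{id}=(\func{id}\otimes\varepsilon)\Delta$. That $\Delta$ is an algebra homomorphism (the compatibility of product and coproduct) is exactly the solution property verified above, while $\varepsilon$ is manifestly a homomorphism, completing the bialgebra axioms.

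I do not anticipate any genuine obstacle; the result is standard in the quantum inverse scattering formalism. The only care required is the bookkeeping of tensor-leg indices and respecting the left-to-right operator ordering in $\Delta L=L_{13}L_{12}$, so that the two quantum copies are combined consistently and the two applications of (\ref{YBE1}) act on the correct factors.
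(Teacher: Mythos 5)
Your argument is correct and is precisely the standard quantum-inverse-scattering computation (two applications of the fundamental $RLL$ relation plus commutativity of $L$-operators on disjoint tensor legs, with the coalgebra axioms following formally); the index bookkeeping in your chain of four rewritings checks out. The paper itself omits the proof, citing the result as known, so your write-up simply supplies the standard argument being invoked.
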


Note that we do not have a Hopf algebra structure as the $L$-operator is not
invertible.

Repeatedly applying $\Delta $ and the ismorphism $\func{End}(V)\otimes \func{%
End}(W)\cong \func{End}(V\otimes W)$, one is led to consider the so-called 
\emph{monodromy matrix} in $\func{End}(V(x))\otimes \func{End}(V^{\otimes
N}) $%
\begin{equation}
T(x)=L_{0N}(x)\cdots L_{02}(x)L_{01}(x)=\left( 
\begin{array}{cc}
A(x) & B(x) \\ 
C(x) & D(x)%
\end{array}%
\right) \;.  \label{T}
\end{equation}%
From the Yang-Baxter equation one then deduces - among others - the
following commutation relations for the entries of the monodromy matrix,%
\begin{eqnarray}
A(x)A(y) &=&A(y)A(x),\;D(x)D(y)=D(y)D(x),\;  \label{YBA1a} \\
B(x)B(y) &=&\frac{y}{x}~B(y)B(x),\;C(x)C(y)=\frac{x}{y}~C(y)C(x)  \notag
\end{eqnarray}%
%
%
%
%
%
%
%
%
%
%
%
%
%
%
%
%
%
%
%
%
%
%
%
%
%
%
%
%
%
%
and%
\begin{gather}
xB(x)A(y)=xB(y)A(x)+(x-y)A(y)B(x),\qquad  \label{YBA1b} \\
yB(x)D(y)+(x-y)D(x)B(y)=yB(y)D(x)\;.  \notag
\end{gather}%
%
%
%
%
%
%
%
%
%
%
%
%
%
%
%
%
%
%
%
%
%
%
%
%
%
%
%
%
%
%
We now describe these commutation relations in terms of divided difference
operators. This will allow us in a subsequent section to derive the
generating function (\ref{toric_generating}) for toric Schur functions and
formula (\ref{toricDemazure}) mentioned in the introduction. We will also
use these relations below to construct eigenvectors of the vicious walker
transfer matrix.

Consider the polynomial ring $\mathcal{R}_{m}=\mathbb{Z}[x_{1},\ldots
,x_{m}] $ where the $x_{i}$'s are some commuting indeterminates. There is a
natural action of the symmetric group $S_{m}$ on $\mathcal{R}_{m}$ by
permuting the $x_{i}$'s where we denote by $\{s_{i}\}_{i=1}^{m-1}$ the
elementary transpositions which exchange $x_{i}$ and $x_{i+1}$. Intoduce the
difference operators $\partial _{i}=(1-x_{i+1}/x_{i})^{-1}(s_{i}-1)$ for $%
i=1,\ldots ,m$. Despite first appearance the latter map polynomials into
polynomials: because of linearity it suffices to consider the following
action on a monomial%
\begin{equation}
\partial _{i}x_{i}^{a}x_{i+1}^{b}=\left\{ 
\begin{array}{cc}
\sum_{r=0}^{b-a-1}x_{i}^{b-r}x_{i+1}^{a+r}, & a<b \\ 
0, & a=b \\ 
-\sum_{r=0}^{a-b-1}x_{i}^{a-r}x_{i+1}^{b+r}, & a>b%
\end{array}%
\right. \;.  \label{nilHeckerep}
\end{equation}%
The difference operators yield a representation the nil Hecke algebra $%
\mathcal{H}_{m}(0)$, that is they obey the relations%
\begin{equation}
\partial _{i}^{2}=-\partial _{i}\qquad \text{and}\qquad \partial
_{i}\partial _{i+1}\partial _{i}=\partial _{i+1}\partial _{i}\partial
_{i+1}\;.  \label{nilHecke}
\end{equation}

\begin{remark}
The action (\ref{nilHeckerep}) of the difference operators $\partial _{i}$
is the familiar action of the Hecke algebra $\mathcal{H}_{m}(\mathfrak{q})$
on the ring of polynomials via Demazure or Bernstein-Gelfand-Gelfand
operators in the limit $\mathfrak{q}\rightarrow 0$. In this limit the
algebra $\mathcal{H}_{m}(0)$ is known as the nil-Hecke algebra.
\end{remark}

We now have the following simple but important lemma.

\begin{lemma}
\label{Lem:ybaDemazure}Let $f\in \mathcal{R}_{m}\otimes V^{\otimes N}$ and
suppose $s_{i}f=f$ for all $i=1,\ldots ,m-1$. Then we have the commutation
relations%
\begin{eqnarray}
A(x_{i+1})B(x_{i})f &=&\partial _{i}B(x_{i+1})A(x_{i})f  \notag \\
D(x_{i+1})B(x_{i})f &=&-\partial _{i}B(x_{i+1})A(x_{i})f  \notag \\
C(x_{i+1})B(x_{i})f &=&\partial _{i}D(x_{i+1})A(x_{i})f\;.
\label{ybaDemazure}
\end{eqnarray}
\end{lemma}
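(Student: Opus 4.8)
The plan is to derive all three identities from the quadratic commutation relations of the Yang--Baxter algebra by a single, uniform mechanism. The one observation I would record at the outset is that, writing $\partial_i=(1-x_{i+1}/x_i)^{-1}(s_i-1)=\frac{x_i}{x_i-x_{i+1}}(s_i-1)$ and using $s_if=f$, one has for any two operator families $P(x),Q(x)$ the identity $\partial_i\,P(x_{i+1})Q(x_i)f=\frac{x_i}{x_i-x_{i+1}}\bigl(P(x_i)Q(x_{i+1})-P(x_{i+1})Q(x_i)\bigr)f$, since $s_i$ merely interchanges the two spectral arguments while fixing $f$. Each of the three relations will therefore follow once its left-hand side is rewritten as $\frac{x_i}{x_i-x_{i+1}}$ times exactly such a difference of oppositely ordered products.

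For the first relation I would solve the first line of (\ref{YBA1b}), $xB(x)A(y)=xB(y)A(x)+(x-y)A(y)B(x)$, for the reversed product, obtaining $A(y)B(x)=\frac{x}{x-y}\bigl(B(x)A(y)-B(y)A(x)\bigr)$; specialising $x=x_i$, $y=x_{i+1}$ gives precisely $\frac{x_i}{x_i-x_{i+1}}\bigl(B(x_i)A(x_{i+1})-B(x_{i+1})A(x_i)\bigr)$, which by the opening observation equals $\partial_iB(x_{i+1})A(x_i)f$. The second relation is obtained identically from the second line of (\ref{YBA1b}), solved for $D(x_{i+1})B(x_i)$; the prefactor there is $y/(x-y)$ rather than $x/(x-y)$, which flips the overall sign, and the inner operator that survives is $D$, so that $D(x_{i+1})B(x_i)f=-\partial_iB(x_{i+1})D(x_i)f$.

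The third relation needs input beyond (\ref{YBA1a})--(\ref{YBA1b}), because no rule for moving $C$ past $B$ is recorded there. I would first extract from the Yang--Baxter equation (\ref{YBE1}) for the monodromy matrix (\ref{T}) the relation $C(x)B(y)=\frac{y}{x-y}\bigl(A(y)D(x)-A(x)D(y)\bigr)$. Specialising $x=x_{i+1}$, $y=x_i$ rewrites $C(x_{i+1})B(x_i)$ as $\frac{x_i}{x_i-x_{i+1}}\bigl(A(x_{i+1})D(x_i)-A(x_i)D(x_{i+1})\bigr)$, which is the desired $\partial_iD(x_{i+1})A(x_i)f=\frac{x_i}{x_i-x_{i+1}}\bigl(D(x_i)A(x_{i+1})-D(x_{i+1})A(x_i)\bigr)f$ \emph{up to} the order of the factors $A$ and $D$ inside each product. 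To bridge this I would prove the commutator symmetry $[A(x),D(y)]=[A(y),D(x)]$, again a consequence of the same commutation relations; it yields $[A(x_{i+1}),D(x_i)]f=[A(x_i),D(x_{i+1})]f$, which is exactly the reordering needed on $f$.

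The principal obstacle is not the symmetric-function bookkeeping, which is routine and identical across the three cases, but supplying the two facts about the Yang--Baxter algebra that are absent from the quoted relations: the $C$-past-$B$ rule and the symmetry of the $A$--$D$ commutator. Both I would obtain by expanding the relevant auxiliary-space matrix entries of the Yang--Baxter equation for the explicit $R$ in (\ref{R}); it is the degenerate $5$-vertex structure --- one vertex weight being zero --- that makes these particular combinations collapse to the clean forms used above.
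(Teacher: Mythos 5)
Your proposal is correct and is essentially the paper's own argument: the paper's proof is literally the one line ``direct computation using (\ref{YBE1}), (\ref{R}) and (\ref{T})'', and you have carried that computation out, correctly identifying that the first two identities follow from (\ref{YBA1b}) while the third needs the extra $C$--$B$ exchange relation and the symmetry $[A(x),D(y)]=[A(y),D(x)]$, both of which do follow from the stated $R$-matrix (the latter as the difference of the two ways of expressing $C(x)B(y)$ through $A$ and $D$). One remark: your second identity, $D(x_{i+1})B(x_i)f=-\partial_i B(x_{i+1})D(x_i)f$, differs from the printed statement of the lemma, which has $A(x_i)$ in place of $D(x_i)$; your version is the correct one --- it is what (\ref{YBA1b}) actually yields and what the paper itself uses later to get $H(x_{i+1};q)B(x_i)f=\partial_i B(x_{i+1})H(x_i;-q)f$ in the proof of Corollary \ref{Cor:Demazure} --- so the discrepancy is a typo in the lemma, not a gap in your proof.
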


\begin{proof}
This is a direct computation using (\ref{YBE1}), (\ref{R}) and the
definition (\ref{T}).
\end{proof}


To describe the action of the matrix elements $A,B,C,D$ in combinatorial
terms we now relate them to a particular representation of the \emph{affine
nil Temperley-Lieb algebra}. The latter is the unital, associative algebra
generated by $\{u_{1},\ldots ,u_{N}\}$ and relations%
\begin{eqnarray}
u_{i}^{2} &=&u_{i}u_{i+1}u_{i}=u_{i+1}u_{i}u_{i+1}=0,\quad i\in \mathbb{Z}%
_{N}  \notag \\
u_{i}u_{j} &=&u_{j}u_{i},\qquad |i-j|\mod N>1\;.  \label{affTLdef}
\end{eqnarray}%
We will refer to the subalgebra generated by $\{u_{1},\ldots ,u_{N-1}\}$ as
the \emph{finite} nil Temperley-Lieb algebra. Note that the latter is a
quotient of the nil Hecke algebra.

\begin{proposition}[hopping operators]
The map 
\begin{equation}
u_{i}\mapsto f_{i}:=\sigma _{i+1}^{+}\sigma _{i}^{-},\quad \;i=1,\ldots ,N-1
\label{TLrep}
\end{equation}%
yields a faithful representation of the finite nil Temperley-Lieb algebra
over $V_{n}$. If we further set%
\begin{equation}
u_{N}\mapsto f_{N}:=q\sigma _{1}^{+}\sigma _{N}^{-}  \label{affTLrep}
\end{equation}%
we obtain a faithful representation of the \emph{affine} nil Temperley-Lieb
algebra over $\mathbb{C}[q,q^{-1}]\otimes V_{n}$.
\end{proposition}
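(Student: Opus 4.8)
The plan is to prove the homomorphism property and the faithfulness separately, the former being a direct computation and the latter carrying the real content. For the relations (\ref{affTLdef}), the only inputs are that Pauli operators at distinct sites commute and that $(\sigma^{+})^{2}=(\sigma^{-})^{2}=0$. Writing $f_{i}=\sigma_{i+1}^{+}\sigma_{i}^{-}$ and gathering, site by site, the operators occurring in each word, one finds that $f_{i}^{2}$ contains $(\sigma_{i}^{-})^{2}$, that $f_{i}f_{i+1}f_{i}$ contains $(\sigma_{i}^{-})^{2}$ while $f_{i+1}f_{i}f_{i+1}$ contains $(\sigma_{i+2}^{+})^{2}$, and that $f_{i}$ and $f_{j}$ act on disjoint pairs of sites when $|i-j|>1$; hence every relation holds. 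The same bookkeeping works cyclically for $f_{N}=q\sigma_{1}^{+}\sigma_{N}^{-}$: in each would-be nonzero product the repeated site contributes a squared $\sigma^{\pm}$, and since $q$ is central and enters to a positive power it never interferes with the vanishing. This yields the two algebra homomorphisms.

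For faithfulness I would combine an explicit basis of the nil Temperley--Lieb algebra with a separation argument on $01$-words. A linear basis of the finite algebra is given by the reduced monomials $u_{w}$ indexed by the fully commutative (equivalently $321$-avoiding) elements $w$, a monomial being set to zero as soon as a subword $u_{i}u_{i}$ or $u_{i}u_{i\pm1}u_{i}$ occurs; since the $f_{i}$ obey far commutativity, $f_{w}$ is well defined for such $w$. The key observation is that each $f_{i}$ acts on the $01$-word basis (\ref{part2word}) of $V_{n}$ as a partial bijection that \emph{hops} a letter: a word with $w_{i}=1,\ w_{i+1}=0$ is sent to the word with those two entries exchanged, and all other words are annihilated. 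Consequently every $f_{w}$ is, in the $01$-word basis, a $0$--$1$ matrix with at most one nonzero entry in each row and column. I would then attach to each reduced $w$ a source word $v_{w}\in W_{n}$ on which $f_{w}$ is nonzero, chosen so that the resulting (source, target) pairs separate the monomials; a triangularity argument with respect to a suitable length/dominance order on words forces all coefficients of a vanishing combination $\sum_{w}c_{w}f_{w}=0$ to vanish in turn. This is exactly where the correspondence between reduced monomials and the non-intersecting path segments of the walker model is used.

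The main obstacle is this separation step, and it is genuinely sensitive to the size of $n$: a reduced monomial acts by zero on $V_{n}$ unless $n$ and $k=N-n$ are large enough to accommodate the corresponding path (for instance $f_{i}f_{i+1}$ requires three consecutive entries $1,1,0$ and so annihilates $V_{1}$). Thus the honest statement is that faithfulness holds on the full Fock space $V^{\otimes N}=\bigoplus_{n}V_{n}$ — where every reduced monomial survives in some particle sector — and on a single $V_{n}$ once $n,k$ are in the admissible range; verifying that the chosen source words exist and distinguish all reduced monomials is precisely the combinatorial core of the proof. In the affine case the infinite dimension of the algebra is handled by the grading: $f_{N}$ carries a factor $q$, so $f_{i_{1}}\cdots f_{i_{\ell}}$ has $q$-degree equal to its winding number around the cylinder, and over $\mathbb{C}[q,q^{-1}]$ operators of different $q$-degree are automatically independent. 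Within each fixed winding number the problem reduces to the finite-type separation above, so assembling the homomorphism property with the linear independence of the $\{f_{w}\}$ delivers the claimed faithful representations.
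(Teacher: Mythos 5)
The paper does not actually prove this proposition: the remark immediately following it declares it to be an obvious reformulation of \cite[Prop 9.1]{KS} and omits the proof, so there is no in-paper argument to measure you against. Your verification of the relations (\ref{affTLdef}) is complete and correct: locality of the Pauli operators together with $(\sigma^{\pm})^{2}=0$ kills $f_{i}^{2}$, $f_{i}f_{i\pm 1}f_{i}$ and their cyclic analogues involving $f_{N}$, far commutativity is immediate, and the central factor $q$ plays no role. This is exactly the computation the citation is implicitly invoking, so the homomorphism half is fine.

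The faithfulness half is where your proposal stops short of a proof. The strategy --- take the basis of the nil Temperley--Lieb algebra indexed by fully commutative (321-avoiding) elements, observe via (\ref{part2word}) that each monomial $f_{w}$ acts as a partial permutation of the 01-word basis, and then separate the monomials by exhibiting for each $w$ a source word whose image under $f_{w}$ determines $w$ --- is the standard route for such hopping representations, and the $q$-grading by winding number does correctly reduce the affine case to the finite one degree by degree. But the decisive combinatorial step is left as a declared placeholder: linear independence of a family of $0$--$1$ partial-permutation matrices is \emph{not} automatic (their supports can nest or coincide), so the existence of separating source words and the triangularity argument are precisely what has to be proved, and your proposal explicitly defers this. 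As written it is therefore a correct plan rather than a proof. On the other hand, your caveat is genuine and is the most substantive observation in the proposal: a reduced monomial such as $f_{i}f_{i+1}$ requires the local pattern $1,1,0$ and hence annihilates $V_{1}$, while $u_{i}u_{i+1}\neq 0$ in the algebra, so faithfulness on a single $V_{n}$ fails for extreme $n$ (e.g.\ $n=0,1$ when $N\geq 3$); the literal statement should be read on the full sum $\bigoplus_{n}V_{n}$, or with $n$ and $k$ restricted to a range accommodating every fully commutative monomial, which is the setting of \cite[Prop 9.1]{KS}.
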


\begin{remark}
In \cite{KS} a free fermion description of the small quantum cohomology ring
was presented. The relationship between the current description of the
affine nil Temperley-Lieb algebra and that in \emph{loc. cit.} is given via
the following formulae 
\begin{equation*}
\psi _{i}=\sigma _{i}^{-}\prod_{j<i}\sigma _{j}^{z}\qquad \text{and\qquad }%
\psi _{i}^{\ast }=\sigma _{i}^{+}\prod_{j<i}\sigma _{j}^{z}
\end{equation*}%
In particular, one easily verifies that $f_{i}=\psi _{i+1}^{\ast }\psi
_{i}=\sigma _{i+1}^{+}\sigma _{i}^{-}$ with%
\begin{equation*}
\psi _{i+N}=-q^{-1}\psi _{i}\prod_{j=1}^{N}\sigma _{j}^{z}\qquad \text{%
and\qquad }\psi _{i+N}^{\ast }=-q\psi _{i}^{\ast }\prod_{j=1}^{N}\sigma
_{j}^{z}\;.
\end{equation*}%
The last proposition is then an obvious reformulation of \cite[Prop 9.1]{KS}
and we therefore omit the proof.
\end{remark}

The action (\ref{affTLrep}) suggest to introduce the following
quasi-periodic boundary conditions, $\sigma _{i+N}^{\pm }=-q^{\pm 1}\sigma
_{i}^{\pm }$ and $\sigma _{i+N}^{z}=\sigma _{i}^{z}$. We also introduce the
adjoint endomorphisms $f_{i}^{\ast }=\sigma _{i+1}^{-}\sigma _{i}^{+}$ and $%
f_{N}^{\ast }=q^{\ast }\sigma _{1}^{-}\sigma _{N}^{+}$, where $q^{\ast
}=q^{-1}$. For ease of notation, we will henceforth simply write $V_{n,q}:=%
\mathbb{C}[q,q^{-1}]\otimes V_{n}$ and $V_{q}^{\otimes N}:=\mathbb{C}%
[q,q^{-1}]\otimes V^{\otimes N}$.

One easily deduces the following identities which we state without proof;
compare with \cite[Section 8.2 and Lemma 9.3]{KS}.

\begin{lemma}
Denote by $\Theta :b\mapsto b^{\prime },\;\vee :b\mapsto b^{\vee }$ and $%
\func{Rot}:b\mapsto \func{Rot}b$ the endomorphisms $V^{\otimes N}\rightarrow
V^{\otimes N}$ induced by the maps in (\ref{01bijections}) and (\ref{Rotdef}%
). Then%
\begin{equation}
\vee \circ f_{i}=f_{N-i}^{\ast }\circ \vee ,\quad \Theta \circ
f_{i}=f_{N-i}\circ \Theta ,\quad \func{Rot}\circ f_{i+1}=f_{i}\circ \func{Rot%
},  \label{ftrans}
\end{equation}%
where all indices are understood modulo $N$.
\end{lemma}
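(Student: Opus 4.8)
The plan is to verify each identity on the standard basis $\{|w\rangle : w\in W_n\}$ and extend by linearity. The key observation is that each of the three maps is, up to a scalar, a permutation of this basis read off directly from (\ref{01bijections}) and (\ref{Rotdef}): $\vee$ sends $|w_1\cdots w_N\rangle$ to $|w_N\cdots w_1\rangle$, $\Theta$ to $|(1-w_N)\cdots(1-w_1)\rangle$, and $\func{Rot}$ to a scalar multiple of $|w_2\cdots w_N w_1\rangle$; meanwhile each $f_i=\sigma_{i+1}^+\sigma_i^-$ and its adjoint $f_i^*=\sigma_{i+1}^-\sigma_i^+$ is a local operator that transports a single $1$-letter between the neighbouring sites $i$ and $i+1$, annihilating any basis vector that violates the occupation condition. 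Thus every identity reduces to comparing, for a fixed $w$, \emph{which} $1$-letter is moved to \emph{which} site and \emph{with what scalar} on the two sides.

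For the interior indices $1\le i\le N-1$ this comparison is immediate from the geometry of the three word operations. Reversal turns the rightward hop $i\mapsto i+1$ into the leftward hop $N+1-i\mapsto N-i$, i.e. into an adjoint hop at the reflected site, yielding $\vee\circ f_i=f_{N-i}^*\circ\vee$. Complementation turns the rightward motion of a $1$-letter into a leftward motion of a $1$-letter (the hole moves right), and the subsequent reversal turns this back into a rightward hop at the reflected site, yielding $\Theta\circ f_i=f_{N-i}\circ\Theta$. The cyclic shift lowers the site label of an interior hop by exactly one, yielding $\func{Rot}\circ f_{i+1}=f_i\circ\func{Rot}$. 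In each case the supports (the occupation conditions under which the two sides are nonzero) match, and the target basis vectors coincide, so no scalar is generated.

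The only genuine content lies in the wraparound, namely the affine generator $f_N=q\,\sigma_1^+\sigma_N^-$ and its adjoint $f_N^*=q^{-1}\sigma_1^-\sigma_N^+$, which enter once each identity is read with indices modulo $N$ (so that $f_{N-i}$ becomes $f_N$ at $i=N$, and $f_{i+1}$ becomes $f_1$ at $i=N$). Here one must track the explicit powers of $q$ carried by $f_N,f_N^*$ against those forced by the quasi-periodic boundary conditions $\sigma_{i+N}^\pm=-q^{\pm1}\sigma_i^\pm$, which is exactly what fixes the $q$-normalisation of the induced maps $\vee,\Theta,\func{Rot}$ on $V_{n,q}$: a $1$-letter crossing the cut between sites $N$ and $1$ must acquire the correct power of $q$. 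I expect \textbf{this boundary bookkeeping to be the main obstacle}. The interior argument is purely combinatorial, but at the cut the factors of $q$ (and the sign in the quasi-periodic convention) must be shown to cancel; I would settle this by evaluating both sides on a single basis vector whose wrapping letter is occupied and reading off the resulting power of $q$. Once the $q$-powers are reconciled the three relations follow, in agreement with the parallel computation in \cite[Section 8.2 and Lemma 9.3]{KS}.
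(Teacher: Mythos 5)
The paper offers no argument here --- the lemma is explicitly ``stated without proof'' with a pointer to \cite[Section 8.2 and Lemma 9.3]{KS} --- so the only question is whether your outline actually closes the verification. Your strategy (check each identity on basis vectors $|w\rangle$, using that $f_i$, resp.\ $f_i^{\ast}$, hops a single $1$-letter from site $i$ to $i+1$, resp.\ back, while $\vee,\Theta,\func{Rot}$ permute the basis) is the right one, and your treatment of the interior indices $1\le i\le N-1$ is correct. The gap is that you defer precisely the step you yourself single out as the crux, the seam bookkeeping, and when one actually performs it the hoped-for cancellation does \emph{not} occur for two of the three identities if $\vee$ and $\func{Rot}$ are taken to be the plain $\mathbb{C}[q,q^{-1}]$-linear permutation operators induced by (\ref{01bijections}) and (\ref{Rotdef}). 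Concretely, take $w$ with $w_N=1$, $w_1=0$ and let $u$ be the word with the $1$ moved from site $N$ to site $1$. Then $\vee\circ f_N|w\rangle = q\,|u^{\vee}\rangle$, whereas $f_{N}^{\ast}\circ\vee\,|w\rangle = q^{-1}\sigma_1^{-}\sigma_N^{+}|w^{\vee}\rangle = q^{-1}|u^{\vee}\rangle$: a mismatch by $q^{2}$. Similarly one finds $\func{Rot}\circ f_N = q\,f_{N-1}\circ\func{Rot}$ and $\func{Rot}\circ f_1 = q^{-1}f_N\circ\func{Rot}$ on the relevant basis vectors. Only the $\Theta$-identity survives the seam verbatim: $\Theta\circ f_N$ and $f_N\circ\Theta$ carry the same single factor of $q$, which is why Lemma \ref{levelrank} --- the only later use of (\ref{ftrans}) --- is unaffected.

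To complete the proof you must therefore either restrict the first and third identities to the finite generators (equivalently, work in $V_n$ at $q=1$), or build the compensating $q$-twist into the induced maps: $\vee$ must be accompanied by the substitution $q\mapsto q^{-1}$ (this is exactly the combination $\mathcal{PT}$ the paper uses for the dual Pieri rules, and it converts the $q^{2}$ mismatch above into an equality), and $\func{Rot}$ must be replaced by $|w\rangle\mapsto q^{-w_1}|\func{Rot}(w)\rangle$, which one checks intertwines $f_{i+1}$ with $f_i$ for all $i$ modulo $N$. Your instinct that the boundary bookkeeping is the main obstacle is exactly right; the proposal stops precisely there, and the resolution is not a cancellation but a computation that forces a refinement of the conventions.
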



\begin{proposition}
We have the following expressions for the Yang-Baxter algebra in terms of
the $f_{i}$'s:%
\begin{gather}
A(x)=(1+xf_{N-1})\cdots (1+xf_{1}),  \label{YBA1} \\
B(x)=xA(x)\sigma _{1}^{+},\quad C(x)=\sigma _{N}^{-}A(x),\quad D(x)=x~\sigma
_{N}^{-}A(x)\sigma _{1}^{+}\;.  \notag
\end{gather}
\end{proposition}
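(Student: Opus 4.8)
The plan is to prove all four identities simultaneously by induction on the number $N$ of $L$-operators in the monodromy matrix (\ref{T}), building it up one site at a time. I would introduce the partial monodromy matrix $T^{(m)}(x)=L_{0m}(x)\cdots L_{01}(x)$, an operator on the auxiliary space tensored with $V_{1}\otimes\cdots\otimes V_{m}$, and write its auxiliary-space entries as $A^{(m)},B^{(m)},C^{(m)},D^{(m)}$, so that $A(x)=A^{(N)}$ and so on. Since $T^{(m)}(x)=L_{0m}(x)\,T^{(m-1)}(x)$, multiplying out the $2\times2$ auxiliary-space matrices using the form (\ref{L}) of the $L$-matrix (whose operator entries act on site $m$) yields the recursion
\begin{align*}
A^{(m)} &= A^{(m-1)}+x\sigma_{m}^{+}C^{(m-1)}, & B^{(m)} &= B^{(m-1)}+x\sigma_{m}^{+}D^{(m-1)},\\
C^{(m)} &= \sigma_{m}^{-}A^{(m-1)}+x\sigma_{m}^{-}\sigma_{m}^{+}C^{(m-1)}, & D^{(m)} &= \sigma_{m}^{-}B^{(m-1)}+x\sigma_{m}^{-}\sigma_{m}^{+}D^{(m-1)}.
\end{align*}

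The claimed formulas at level $m$ read $A^{(m)}=(1+xf_{m-1})\cdots(1+xf_{1})$, $B^{(m)}=xA^{(m)}\sigma_{1}^{+}$, $C^{(m)}=\sigma_{m}^{-}A^{(m)}$ and $D^{(m)}=x\sigma_{m}^{-}A^{(m)}\sigma_{1}^{+}$. For the base case $m=1$ one has $T^{(1)}=L_{01}$, giving $A^{(1)}=1$ (the empty product), $B^{(1)}=x\sigma_{1}^{+}$, $C^{(1)}=\sigma_{1}^{-}$ and $D^{(1)}=x\sigma_{1}^{-}\sigma_{1}^{+}$, in agreement. Assuming the four formulas at level $m-1$ — crucially, $C^{(m-1)}=\sigma_{m-1}^{-}A^{(m-1)}$ and $D^{(m-1)}=x\sigma_{m-1}^{-}A^{(m-1)}\sigma_{1}^{+}$ — the first recursion collapses to $A^{(m)}=A^{(m-1)}+x\sigma_{m}^{+}\sigma_{m-1}^{-}A^{(m-1)}=(1+xf_{m-1})A^{(m-1)}$, since by (\ref{TLrep}) the newly appearing factor is exactly the hopping operator $f_{m-1}=\sigma_{m}^{+}\sigma_{m-1}^{-}$. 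The other three identities then follow by substituting $A^{(m)}=(1+xf_{m-1})A^{(m-1)}$ and using the one algebraic fact $\sigma_{m}^{-}f_{m-1}=\sigma_{m}^{-}\sigma_{m}^{+}\sigma_{m-1}^{-}$: for instance $C^{(m)}=\sigma_{m}^{-}A^{(m-1)}+x\sigma_{m}^{-}\sigma_{m}^{+}\sigma_{m-1}^{-}A^{(m-1)}=\sigma_{m}^{-}(1+xf_{m-1})A^{(m-1)}=\sigma_{m}^{-}A^{(m)}$, and analogously $B^{(m)}=xA^{(m)}\sigma_{1}^{+}$ and $D^{(m)}=x\sigma_{m}^{-}A^{(m)}\sigma_{1}^{+}$ after factoring out the trailing $\sigma_{1}^{+}$, which is untouched because site $1$ is never acted on again.

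Two points will need care rather than cleverness. First, the four formulas are genuinely coupled — the recursion for $A$ feeds on $C$, that for $B$ on $D$, and so on — so one cannot establish the product formula for $A$ in isolation; the inductive hypothesis must carry all four statements at once, with $C^{(m-1)}$ and $D^{(m-1)}$ re-expressed through $A^{(m-1)}$ so that the new factor $(1+xf_{m-1})$ can be pulled out on the left. Second, one must track which tensor factor each Pauli operator acts on: the entries of $L_{0m}$ act on site $m$, those of $T^{(m-1)}$ on sites $1,\dots,m-1$, and the trailing $\sigma_{1}^{+}$ in $B$ and $D$ persists unchanged throughout. There is no deeper obstacle beyond this bookkeeping; note in particular that only the finite hopping operators $f_{1},\dots,f_{N-1}$ enter, so the affine generator $f_{N}$ and the parameter $q$ play no role and all four are operator equalities in $\func{End}(V^{\otimes N})$.
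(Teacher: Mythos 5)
Your proof is correct and takes essentially the same route as the paper: the paper simply writes down the closed-form expansion $T_{b,a}(x)=\sum_{\alpha}x^{|\alpha|+a}(\sigma_N^-)^b f_{N-1}^{\alpha_{N-1}}\cdots f_1^{\alpha_1}(\sigma_1^+)^a$ obtained by multiplying out $L_{0N}(x)\cdots L_{01}(x)$ and declares the assertion immediate, while your site-by-site induction is just the careful organization of that same expansion, with the coupled recursion correctly handled by carrying all four identities in the inductive hypothesis.
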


\begin{proof}
From the definition of the monodromy matrix one easily derives the expression%
\begin{equation}
T_{b,a}(x)=\sum_{\alpha }x^{|\alpha |+a}(\sigma _{N}^{-})^{b}f_{N-1}^{\alpha
_{N-1}}\cdots f_{1}^{\alpha _{1}}(\sigma _{1}^{+})^{a},  \label{T_exp}
\end{equation}%
where the sum runs over all compositions $\alpha=(\alpha_1,\ldots,%
\alpha_{N-1})$ with $\alpha_i=0,1$. The assertion is now immediate.
\end{proof}


The action of the polynomials $A_{r}$ in $V_{n}\subset V^{\otimes N}$ is
easily described using the well-known bijections between 01-words and
partitions explained in Section 2.

\begin{lemma}[horizontal strips]
Let $\mu \in (n,k),$ and $A(x)=\sum_{r\geq 0}x^{r}A_{r}$. Then the
polynomials%
\begin{equation}
A_{r}=\sum_{\alpha \vdash r}f_{N-1}^{\alpha _{N-1}}\cdots f_{1}^{\alpha _{1}}
\label{Ar}
\end{equation}%
act on the basis vector $|\mu \rangle $ by adding all possible horizontal $r$%
-strips to the Young diagram of $\mu $ such that the result $\lambda $ lies
within the $n\times k$ bounding box, $A_{r}|\mu \rangle =\sum_{\lambda -\mu
=(r)}|\lambda \rangle $.
\end{lemma}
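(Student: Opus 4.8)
The plan is first to make the stated formula (\ref{Ar}) explicit and then to compute the action of each resulting monomial on a basis vector combinatorially. Starting from the product expression (\ref{YBA1}), $A(x)=(1+xf_{N-1})\cdots(1+xf_{1})$, I would expand the product and read off the coefficient of $x^{r}$. Each factor is linear in $x$, and the nilpotency $f_{i}^{2}=0$ coming from (\ref{affTLdef}) (transported to the $f_{i}$ through the faithful representation (\ref{TLrep})) forces every exponent $\alpha_{i}\in\{0,1\}$; this yields (\ref{Ar}), where now $\alpha\vdash r$ ranges over subsets $S=\{i:\alpha_{i}=1\}\subseteq\{1,\dots,N-1\}$ of size $r$, the associated operator being $\prod_{i\in S}f_{i}$ with factors ordered by decreasing index, i.e. applied to $|\mu\rangle$ in order of increasing $i$.

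Next I would read off the action of a single hop from (\ref{TLrep}): $f_{i}=\sigma^{+}_{i+1}\sigma^{-}_{i}$ annihilates a basis vector unless site $i$ carries a $1$-letter and site $i+1$ a $0$-letter, in which case it transports that $1$-letter from $i$ to $i+1$. Hence every ordered monomial sends $|\mu\rangle$ either to $0$ or to a single basis vector, and it suffices to identify which $|\lambda\rangle$ arise and with what multiplicity. Writing $m_{j}:=\ell_{j}(\mu)$ for the positions of the $1$-letters of $w(\mu)$, I would follow the left-to-right sweep $i=1,\dots,N-1$ in which $f_{i}$ acts precisely when $i\in S$. The book-keeping claim is that the $S$-term is nonzero exactly when $S$ is a disjoint union of initial segments $\{m_{j},m_{j}+1,\dots,m_{j}+d_{j}-1\}$ of the successive runways between the walkers, with $d_{j}\ge 0$ and $m_{j}+d_{j}\le m_{j+1}-1$ for $j<n$, while $m_{n}+d_{n}\le N$, the top walker being unconstrained on the right precisely because the affine hop $f_{N}$ does not occur in $A(x)$. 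Indeed, if the segment in some runway fails to begin at $m_{j}$, or if it tries to run into the still-unmoved next walker sitting at $m_{j+1}$, then at the offending index either the source or the target site has the wrong occupancy and $f_{i}$ kills the state; conversely, along an admissible segment the $1$-letter slides cleanly forward because every intermediate target site is vacant at the moment it is reached. In a nonzero term the $j$-th walker therefore ends at $m_{j}+d_{j}$, and the number of hops performed is $\sum_{j}d_{j}=|S|=r$.

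Finally I would translate this back to partitions through (\ref{part2word}), using $\ell_{i}(\lambda)=\lambda_{n+1-i}+i$. The inequalities $m_{j}\le m_{j}+d_{j}$ and $m_{j}+d_{j}\le m_{j+1}-1$ become exactly the interlacing conditions $\lambda_{a}\ge\mu_{a}\ge\lambda_{a+1}$, which say that $\lambda/\mu$ is a horizontal strip, while $\sum_{j}d_{j}=r$ fixes its size and $m_{n}+d_{n}\le N$ is precisely $\lambda_{1}\le k$, i.e. $\lambda\in(n,k)$. Since the displacement vector $(d_{j})$ is recovered from $S$ and conversely, and since $(d_{j})$ determines $\lambda$ uniquely, the assignment $S\mapsto\lambda$ is a bijection from the nonzero terms onto the horizontal $r$-strips inside the bounding box; hence each such $\lambda$ occurs with coefficient exactly $1$, giving $A_{r}|\mu\rangle=\sum_{\lambda-\mu=(r)}|\lambda\rangle$ as claimed. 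I expect the main obstacle to be the sweep argument of the second paragraph: one must check carefully, against the time-ordered occupancies, that non-initial or collision-inducing subsets genuinely yield zero and that admissible initial segments slide without obstruction, with the absence of the boundary constraint for the top walker traced entirely to the omission of $f_{N}$ in $A(x)$.
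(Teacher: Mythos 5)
Your argument is correct and is essentially the paper's own proof: both expand $A(x)=(1+xf_{N-1})\cdots(1+xf_{1})$, interpret each $f_{i}=\sigma_{i+1}^{+}\sigma_{i}^{-}$ as sliding a $1$-letter one step to the right, observe that a nonvanishing monomial decomposes into consecutive strings each starting at a $1$-letter of $w(\mu)$, and translate the resulting displacements into the horizontal-strip (interlacing) condition via (\ref{part2word}). Your version is merely more explicit than the paper's about why non-initial or colliding index sets annihilate $|\mu\rangle$ and why the correspondence between admissible subsets $S$ and strips $\lambda/\mu$ is a bijection, so each $|\lambda\rangle$ appears with coefficient exactly one.
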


\begin{proof}
Using the bijection (\ref{part2word}) one readily verifies that either $%
f_{i}|\mu \rangle =|\lambda \rangle $, where $\lambda $ is obtained by
adding a box in the $(i-n)$th diagonal of the Young diagram of $\mu $, or,
if this is not possible $f_{i}|\mu \rangle =0$. Consider now a consecutive
string $f_{i+r^{\prime }}\cdots f_{i+1}f_{i}|\mu \rangle =|\lambda \rangle $
with $r^{\prime }\leq r$ and suppose $w_{i}(\mu )=1$, $w_{j}(\mu )=0$ for $%
i<j\leq i+r^{\prime }$; otherwise the action is trivial. Then the 1-letter
at position $i$ in $w(\mu )$ is moved past $r^{\prime }$ 0-letters whose
position each decreases by one. Since $\mu _{k+1-j}^{\prime }=\ell _{j}(\mu
^{\prime })+j$, where $N+1-\ell _{k+1-j}(\mu ^{\prime })$ are the positions
of 0-letters in $w(\mu ),$ we find $\lambda _{k+1-j}^{\prime }-\mu
_{k+1-j}^{\prime }=1$. Thus, $\lambda $ is obtained from $\mu $ by adding a
horizontal strip of length $r^{\prime }$. This proves the assertion.
\end{proof}

From the last lemma the action of the remaining Yang-Baxter algebra
generators is obtained by observing that $\sigma _{1}^{+}|\lambda \rangle
=|\lambda _{1}-1,\ldots ,\lambda _{n}-1\rangle $ if $\lambda _{n}>0$ or $%
\sigma _{1}^{+}|\lambda \rangle =0$ if $\lambda _{n}=0$. In contrast, $%
\sigma _{N}^{-}|\lambda \rangle =|\mu \rangle $ if $\mu $ can be obtained by
adding a column of maximal height to the Young diagram of $\lambda $ and
then subtracting a boundary ribbon of length $N$ starting in the first row.
Otherwise, we have $\sigma _{N}^{-}|\lambda \rangle =0$.

\begin{lemma}
(i) The operator $B_{r}$ acts by adding all possible horizontal strips of
length $r-1$ such that the result lies within the $n\times k$ bounding box
and removing the leftmost column of the resulting Young diagram afterwards.
(ii) In contrast, the operator $C_{r}$ acts be adding all possible
horizontal strips of length $r$ and removing the top row afterwards. (iii)
Finally, the action of the operator $D_{r}$ is to first add an horizontal $r$%
-strip, preserving the height but not the width of the bounding box, and
then subtracting a boundary ribbon of length $N$ starting from the first row.
\end{lemma}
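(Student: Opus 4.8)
The plan is to read each of $B_r,C_r,D_r$ off the factorised forms of the monodromy-matrix entries and then compose the combinatorial actions of their constituents, all of which have just been determined. Extracting the coefficient of $x^r$ from $B(x)=xA(x)\sigma_1^+$, $C(x)=\sigma_N^-A(x)$ and $D(x)=x\sigma_N^-A(x)\sigma_1^+$ yields $B_r=A_{r-1}\sigma_1^+$, $C_r=\sigma_N^-A_r$ and $D_r=\sigma_N^-A_{r-1}\sigma_1^+$. I would then feed in three ingredients recorded immediately above: the preceding Lemma on horizontal strips, by which $A_s$ adds every horizontal $s$-strip keeping the diagram inside the ambient box; the rule $\sigma_1^+|\lambda\rangle=|\lambda_1-1,\ldots,\lambda_n-1\rangle$ (and $0$ when $\lambda_n=0$), i.e. deletion of the leftmost column; and the rule that $\sigma_N^-$ adjoins a full-height column and then removes a boundary ribbon of length $N$ from the first row (equivalently, erases the top row when that row is full, and annihilates the vector otherwise).

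Parts (i) and (ii) follow at once from the order in which the factors act. In (i), $B_r=A_{r-1}\sigma_1^+$ first strips the leftmost column of $\mu$ (which forces $\mu_n>0$) and then adjoins every horizontal $(r-1)$-strip allowed by the enlarged box; this is the asserted action, with the column removal performed before the strip is added. In (ii), $C_r=\sigma_N^-A_r$ first adjoins every horizontal $r$-strip and then applies $\sigma_N^-$, which keeps only those results with a full first row and deletes that top row. I would check each elementary step on $01$-words via the bijection (\ref{part2word}): $A_s$ slides $1$-letters rightward past $0$-letters, $\sigma_1^+$ flips the first letter from $0$ to $1$, and $\sigma_N^-$ flips the last letter from $1$ to $0$; translating back to diagrams reproduces the strip, column and row operations.

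The real content is in (iii), where $D_r=\sigma_N^-A_{r-1}\sigma_1^+$ must be identified with the net move ``adjoin a horizontal $r$-strip, preserving the height $n$ but not the width, and then subtract a length-$N$ boundary ribbon from the first row.'' Two features obstruct a naive composition: the explicit strip has length $r-1$ rather than $r$, and a full column is first removed by $\sigma_1^+$ only to be reinstated inside $\sigma_N^-$. The hard part will be to show that these two column moves do not cancel but instead telescope with the transported strip. Concretely, I would compute $D_r$ directly from the $01$-word expansion $T_{1,1}(x)=x\sum_\alpha x^{|\alpha|}\,\sigma_N^-f_{N-1}^{\alpha_{N-1}}\cdots f_1^{\alpha_1}\sigma_1^+$ of (\ref{T_exp}), track the single $1$-letter created at the left by $\sigma_1^+$, transported rightward by the $f_i$, and annihilated at the right by $\sigma_N^-$, and identify the lattice path it sweeps out with the length-$N$ boundary ribbon. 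The quasi-periodic boundary condition $\sigma_{i+N}^\pm=-q^{\pm1}\sigma_i^\pm$ and the rotation map $\func{Rot}$ are exactly what glue the left end to the right end and convert the residual $f_i$-action into the effective $r$-box horizontal strip. Establishing this rim-hook bookkeeping---rather than the routine coefficient extraction of the first paragraph or the immediate compositions of (i)--(ii)---is where the argument really has to work.
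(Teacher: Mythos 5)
Your proposal follows the same route the paper takes: the lemma is stated there without a separate proof precisely because it is read off, exactly as you do, from the factorisations $B(x)=xA(x)\sigma_1^+$, $C(x)=\sigma_N^-A(x)$, $D(x)=x\sigma_N^-A(x)\sigma_1^+$ in (\ref{YBA1}) combined with the horizontal-strip lemma for $A_r$ and the stated actions of $\sigma_1^+$ and $\sigma_N^-$. The detailed rim-hook bookkeeping you single out in part (iii) is likewise where the paper defers the work, handling it later via the cylindric-loop/toric-tableau bijection of Proposition \ref{prop:bijection} rather than at this point.
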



From the commutation relations (\ref{YBA1a}) of the Yang-Baxter algebra it
now follows that certain polynomials in the noncommutative alphabets $%
\{f_{1},\ldots ,f_{N-1}\}$ and $\{f_{1},\ldots ,f_{N}\}$ commute.

\begin{corollary}[integrability]
(i) The finite nil Temperley-Lieb polynomials $A_{r}$ commute pairwise, that
is $A_{r}A_{r^{\prime }}=A_{r^{\prime }}A_{r},\ \forall r,r^{\prime
}=0,1,\ldots ,N$.\newline
(ii) Let $H(x)=A(x)+qD(x)\in \func{End}V_{q}^{\otimes N}$ then one has the
expansion%
\begin{equation}
H(x)=\sum_{r=0}^{N}x^{r}H_{r},\quad H_{r}:=\sum_{\alpha \vdash r}q^{\alpha
_{N}}(\sigma _{N}^{-})^{\alpha _{N}}f_{N-1}^{\alpha _{N-1}}\cdots
f_{1}^{\alpha _{1}}(\sigma _{1}^{+})^{\alpha _{N}}  \label{H}
\end{equation}%
%
%
%
%
%
%
%
%
%
%
%
%
%
%
%
%
%
%
%
%
%
%
%
%
%
%
%
%
%
%
%
%
%
%
%
%
%
%
Moreover, $H_{r}H_{r^{\prime }}=H_{r^{\prime }}H_{r}$ for all $r,r^{\prime
}=0,1,\ldots ,N$.
\end{corollary}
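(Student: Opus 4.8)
The plan is to obtain both commutativity statements by extracting coefficients from single identities in the spectral parameters. For part (i), relation \eqref{YBA1a} gives $A(x)A(y)=A(y)A(x)$ as an identity of polynomials in the two indeterminates $x,y$ with operator coefficients. Writing $A(x)=\sum_{r}x^{r}A_{r}$ and comparing the coefficient of $x^{r}y^{r^{\prime}}$ on the two sides immediately yields $A_{r}A_{r^{\prime}}=A_{r^{\prime}}A_{r}$. The expansion in part (ii) is then a direct substitution: from \eqref{YBA1} one has $A(x)=(1+xf_{N-1})\cdots(1+xf_{1})=\sum_{\alpha}x^{|\alpha|}f_{N-1}^{\alpha_{N-1}}\cdots f_{1}^{\alpha_{1}}$ summed over binary strings $\alpha$, together with $D(x)=x\,\sigma_{N}^{-}A(x)\sigma_{1}^{+}$, so that $H(x)=A(x)+qD(x)=\sum_{r}x^{r}\bigl(A_{r}+q\,\sigma_{N}^{-}A_{r-1}\sigma_{1}^{+}\bigr)$; organising the two contributions as the $\alpha_{N}=0$ and $\alpha_{N}=1$ parts of a sum over binary strings $\alpha=(\alpha_{1},\ldots,\alpha_{N})$ with $|\alpha|=r$ reproduces the stated formula \eqref{H} for $H_{r}$.

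The real content is the commutativity of the $H_{r}$, and here I would argue at the level of generating functions, establishing $H(x)H(y)=H(y)H(x)$ and then extracting the coefficient of $x^{r}y^{r^{\prime}}$ as above. The key observation is that $H$ is a twisted trace of the monodromy matrix \eqref{T} over the auxiliary space: with the diagonal twist $K=\left(\begin{smallmatrix}1 & 0\\ 0 & q\end{smallmatrix}\right)$ one has $\func{tr}\bigl(K\left(\begin{smallmatrix}A & B\\ C & D\end{smallmatrix}\right)\bigr)=A+qD=H(x)$. From the single-vertex Yang--Baxter equation \eqref{YBE1} the monodromy matrix inherits the $RTT$ relation $R_{12}(x,y)T_{1}(x)T_{2}(y)=T_{2}(y)T_{1}(x)R_{12}(x,y)$ by the standard coproduct (``train'') argument, and $R$ is invertible for generic $x,y$ since $\det R=-y^{2}/x^{2}$. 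I would then run the usual transfer-matrix computation
\[
H(x)H(y)=\func{tr}_{12}\bigl(K_{1}K_{2}T_{1}(x)T_{2}(y)\bigr)=\func{tr}_{12}\bigl(K_{1}K_{2}R_{12}^{-1}T_{2}(y)T_{1}(x)R_{12}\bigr)=\func{tr}_{12}\bigl(R_{12}K_{1}K_{2}R_{12}^{-1}\,T_{2}(y)T_{1}(x)\bigr),
\]
using cyclicity of the trace in the last step; provided $R_{12}$ commutes with $K\otimes K$ this collapses to $\func{tr}_{12}(K_{1}K_{2}T_{2}(y)T_{1}(x))=H(y)H(x)$.

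It remains to check the twist compatibility $[R_{12}(x,y),K\otimes K]=0$, and this is immediate from the five-vertex structure: $R$ preserves the particle-number grading $V\otimes V=\mathbb{C}v_{0}v_{0}\oplus(\mathbb{C}v_{0}v_{1}\oplus\mathbb{C}v_{1}v_{0})\oplus\mathbb{C}v_{1}v_{1}$, while $K\otimes K$ acts as the scalar $1,q,q^{2}$ on the three graded pieces respectively, so the two operators commute. I expect this compatibility step, together with the passage from \eqref{YBE1} to the $RTT$ relation for the full monodromy matrix, to be the only genuinely substantive points; note in particular that the trace argument deliberately sidesteps the need for an explicit $A$--$D$ commutation relation, which is not among \eqref{YBA1a}--\eqref{YBA1b}. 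The conceptual payoff is that it is precisely the $q$-twist respecting the conserved particle number that makes the quasi-periodic combination $A+qD$ --- rather than $A$ alone --- close into a commuting family, which is the integrability underlying the later identification with $qH^{\ast}$.
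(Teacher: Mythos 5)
Your proposal is correct and follows essentially the same route as the paper: part (i) by extracting coefficients from $A(x)A(y)=A(y)A(x)$ in (\ref{YBA1a}), the expansion (\ref{H}) by substituting the expressions (\ref{YBA1}) into $H=A+qD$, and the commutativity of the $H_r$ via the twisted-trace (transfer-matrix) representation (\ref{5vH2}) combined with the Yang--Baxter equation. The paper compresses the last step into the single assertion that commutativity of the transfer matrices ``is a consequence of the Yang--Baxter equation,'' whereas you supply the standard details it leaves implicit --- the $RTT$ relation for the monodromy matrix, invertibility of $R$, cyclicity of the partial trace, and the twist compatibility $[R_{12}(x,y),K\otimes K]=0$ coming from particle-number conservation --- all of which check out.
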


\begin{proof}
The first assertion is immediate from the commutation relation (\ref{YBA1a}%
). The operator $H:V_{q}^{\otimes N}\rightarrow V_{q}^{\otimes N}$ is the
so-called row-to-row transfer matrix of the vicious walker model on the
cylinder, i.e. its matrix elements are the partition functions of one
lattice row when imposing quasi-periodic boundary conditions in the
horizontal direction of the square lattice. The transfer matrix can be
written as the following partial trace, 
\begin{equation}
H(x)=\func{Tr}_{0}q^{(\sigma ^{+}\sigma ^{-})_{0}}L_{0N}(x)\cdots
L_{01}(x)\,,  \label{5vH2}
\end{equation}%
where the indices indicate in which factors of the tensor product $%
V(x)\otimes V_{q}^{\otimes N}$ the respective $L$-operators act. (The
factors are labeled from left to right starting as $0,1,\ldots ,N$.) The
additional operator $q^{\sigma ^{+}\sigma ^{-}}$ under the trace invokes
quasi-periodic boundary conditions, i.e. the powers of the indeterminate $q$
count how many outer horizontal edges are occupied. It is now a consequence
of the Yang-Baxter equation that this model is integrable, i.e. the transfer
matrices commute pairwise, $H(x)H(y)=H(y)H(x)$ for any $x,y$. The assertion
then follows from the expansion (\ref{T_exp}).
\end{proof}

For completeness we summarise our previous findings on the combinatorial
action of the Yang-Baxter algebra (\ref{YBA1}) in the following formula for
each $r=0,1,\ldots ,N$.

\begin{lemma}
Let $\mu \in (n,k)$ then%
\begin{equation}
H_{r}|\mu \rangle =\sum_{\lambda -\mu =(r)}|\lambda \rangle +q\sum_{\lambda
\lbrack 1]-\mu =(r)}|\lambda \rangle ,  \label{Haction}
\end{equation}%
where in the second sum $\lambda \lbrack 1]$ denotes the partition obtained
from $\lambda $ by adding a boundary ribbon of length $N$ starting in the
first and ending in the $n^{\text{th}}$ row. For $r>k$ we have $H_{r}|\mu
\rangle =0$.
\end{lemma}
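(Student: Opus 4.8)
The plan is to extract $H_{r}$ from the expansion (\ref{H}) of the integrability corollary and to treat its two constituents separately. Splitting the sum over compositions $\alpha \vdash r$ with entries $\alpha_{i}\in\{0,1\}$ according to whether the last entry $\alpha_{N}$ equals $0$ or $1$ gives the factorization
\begin{equation*}
H_{r}=A_{r}+q\,\sigma_{N}^{-}A_{r-1}\sigma_{1}^{+},
\end{equation*}
in which the $\alpha_{N}=0$ part is precisely the nil Temperley--Lieb polynomial $A_{r}$ of (\ref{Ar}) and the $\alpha_{N}=1$ part is the coefficient of $x^{r}$ in $D(x)=x\,\sigma_{N}^{-}A(x)\sigma_{1}^{+}$ from (\ref{YBA1}). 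The first summand is disposed of at once by the horizontal-strip lemma, which gives $A_{r}|\mu\rangle=\sum_{\lambda-\mu=(r)}|\lambda\rangle$; this is exactly the first term of (\ref{Haction}).

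For the second summand I would use the combinatorial description of $D_{r}$ already recorded in the lemma on the actions of $B_{r},C_{r},D_{r}$, part (iii): acting on $|\mu\rangle$, $D_{r}$ first adds a horizontal $r$-strip, keeping the height $n$ but allowing the width to grow to $k+1$, and then removes a boundary ribbon of length $N$ beginning in the first row. Writing $\kappa$ for the partition obtained after the strip is added and $\lambda$ for what is left after the ribbon is removed, the operations of adding and of removing a length-$N$ boundary ribbon are mutually inverse, so $\kappa=\lambda[1]$ is exactly $\lambda$ with its boundary ribbon reinstated. The condition that $\kappa/\mu$ be a horizontal $r$-strip therefore reads $\lambda[1]-\mu=(r)$, and carrying along the prefactor yields $q\,\sigma_{N}^{-}A_{r-1}\sigma_{1}^{+}|\mu\rangle=q\sum_{\lambda[1]-\mu=(r)}|\lambda\rangle$, the second term of (\ref{Haction}). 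To make this watertight I would check that $\lambda[1]$ is genuinely well defined with multiplicity one: since $\kappa$ sits inside the $(n,k+1)$ box, a border strip of length $N=n+k$ is of maximal possible length there, so it must occupy all of the rows $1,\dots,n$ and is uniquely determined, whence each admissible $\lambda\in(n,k)$ is produced exactly once, matching the unit coefficients in (\ref{Haction}).

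For the vanishing claim when $r>k$ I would argue on the two terms separately. A horizontal strip has at most one cell in each column, so no horizontal $r$-strip fits in a box of width $k$ once $r>k$, and thus $A_{r}|\mu\rangle=0$. In the second term it is cleanest to read the same operator $D_{r}=\sigma_{N}^{-}A_{r-1}\sigma_{1}^{+}$ through the other factorization: $\sigma_{1}^{+}$ deletes the first column of $\mu$ and places the result in the bounding box $(n+1,k-1)$ (equivalently in $V_{n+1}$), so the inner operator $A_{r-1}$ adds horizontal strips of width at most $k-1$ and already annihilates $\sigma_{1}^{+}|\mu\rangle$ once $r-1>k-1$, i.e. for $r>k$. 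Hence $H_{r}|\mu\rangle=0$ in that range.

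The substantive step, and the place I expect the real work, is the middle identification: verifying that the operator composite $\sigma_{N}^{-}A_{r-1}\sigma_{1}^{+}$ faithfully encodes the passage $\lambda\mapsto\lambda[1]$, that is, that the particle inserted at site $1$ and the particle deleted at site $N$ together implement a single walker winding once around the cylinder, and that the resulting boundary ribbon always spans rows $1$ through $n$ and is unique. Once this cyclic bookkeeping is pinned down, the remainder of the argument is routine.
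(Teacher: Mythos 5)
Your opening moves are sound and agree with the paper's implicit strategy: the split $H_{r}=A_{r}+q\,\sigma_{N}^{-}A_{r-1}\sigma_{1}^{+}$ is exactly the decomposition of (\ref{H}) according to $\alpha_{N}=0,1$, the first summand is handled correctly by the horizontal-strip lemma for $A_r$, and your vanishing argument for $r>k$ (via $A_{r-1}$ acting in the width-$(k-1)$ box after $\sigma_1^+$) is fine. The gap is in the middle step, which you rightly flag as ``the real work'' but then support with a false premise. The intermediate partition $\kappa$ does \emph{not} sit inside the $(n,k+1)$ box: if the output is $\lambda$, then $\kappa=\lambda[1]$ has first part $\kappa_{1}=\lambda_{n}+k+1$, which can be as large as $2k$. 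For instance with $n=k=2$, $\mu=(2,2)$, $r=2$ the only surviving term is $\lambda=(1,1)$ with $\kappa=(4,2)\not\subseteq(2,3)$. Hence your ``maximal rim hook in the $(n,k+1)$ box'' argument --- the sole justification offered for why the removed ribbon must span rows $1$ through $n$ --- collapses, and that is precisely the point separating the correct answer from spurious terms. Concretely, a naive reading of part (iii) admits $\mu=(2,0)\in(2,2)$, $r=2$, $\kappa=(4,0)$, and removal of the length-$4$ ribbon consisting of all of row $1$, yielding $\lambda=\emptyset$; but $D_{2}|(2,0)\rangle=\sigma_{4}^{-}A_{1}\sigma_{1}^{+}|(2,0)\rangle=0$ since $w_{1}((2,0))=1$ annihilates $\sigma_{1}^{+}$, and correspondingly $\lambda[1]/\mu$ violates the wrap-around horizontal-strip inequality $\mu_{n}+k\geq\lambda_{n}+k+1$. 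So the facts you need --- that the ribbon ends in row $n$, that the result lies in $(n,k)$, and that exactly the $\kappa$ of the form $\lambda[1]$ satisfying the cyclic condition occur --- are the actual content of (\ref{Haction}) and cannot be extracted from part (iii), which is moreover stated in the paper without proof.

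The paper closes this gap by a different, more hands-on route: it defers the proof to the bijection of Prop \ref{prop:bijection}, where the cylindric-loop coordinates $\mu[0]_{n+1-i}=\ell_{i}(\mu)-i$ and $\lambda[1]_{n+1-i}=\ell_{i+1}(\lambda)-i$ are written down and one verifies directly that a non-vanishing monomial $q\,\sigma_{N}^{-}f_{N-1}^{\alpha_{N-1}}\cdots f_{1}^{\alpha_{1}}\sigma_{1}^{+}$ moves each $1$-letter rightward through a block of $0$-letters, which is word-for-word the statement that $\lambda[1]/\mu[0]$ is a toric horizontal $r$-strip of length at most $k$. To repair your proof you should replace the bounding-box argument by this particle bookkeeping (or an equivalent direct evaluation of $\sigma_{N}^{-}A_{r-1}\sigma_{1}^{+}$ on $01$-words via (\ref{part2word})); as written, the identification of the second summand of (\ref{Haction}) is not established.
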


\begin{proof}
We postpone a detailed proof to Section \ref{sec:bijections} where we
discuss the bijection between row configurations of the vicious walker model
and toric tableaux; see the proof of Prop \ref{prop:bijection}.
\end{proof}

\subsection{Osculating walkers: vertex and lattice configurations}

Define another 5-vertex model but this time on a $k\times N$ lattice with $%
k=N-n$,%
\begin{equation}
\mathbb{L}^{\prime }:=\{\langle i,j\rangle \in \mathbb{Z}^{2}|0\leq i\leq
k+1,\;0\leq j\leq N+1\}\,.
\end{equation}%
Denote by $\mathbb{E}^{\prime }$ the set of its horizontal and vertical
edges. As before we define the weight of a lattice configuration $\mathcal{C}%
:\mathbb{E}^{\prime }\rightarrow \{0,1\}$ as $\func{wt}\nolimits^{\prime }(%
\mathcal{C})=\prod_{(i,j)\in \mathbb{L}^{\prime }}\func{wt}\nolimits^{\prime
}(\mathrm{v}_{i,j})\in \mathbb{Z}[x_{1},\ldots ,x_{k}]$, where the allowed
vertex configurations and their weights $\func{wt}\nolimits^{\prime }(%
\mathrm{v}_{i,j})$ are depicted in Figure \ref{fig:QCvertex2}. 
\begin{figure}[tbp]
\begin{equation*}
\includegraphics[scale=0.35]{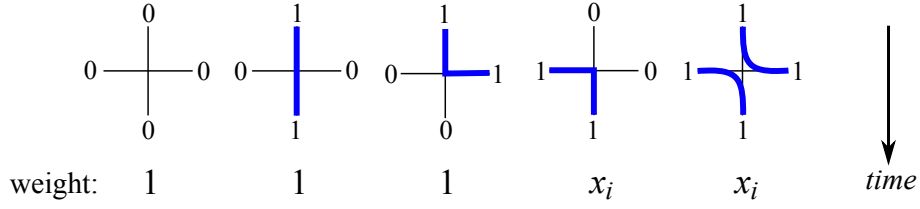}
\end{equation*}%
\caption{The five allowed vertex configurations for the osculating walker
model. Note that this model differs from the vicious walker model in the
last vertex configuration only.}
\label{fig:QCvertex2}
\end{figure}
The weights of the allowed vertex configurations again depend only on the
row index of the square lattice via the commutative indeterminates $%
(x_{1},\ldots ,x_{k})$. The corresponding $L^{\prime }$-matrix reads%
\begin{equation}
L^{\prime }v_{a}\otimes v_{b}=\sum_{c,d=0,1}L_{cd}^{\prime \,ab}v_{c}\otimes
v_{d}=x_{i}^{a}[v_{0}\otimes (\sigma ^{+})^{a}v_{b}+v_{1}\otimes (\sigma
^{+})^{a}\sigma ^{-}v_{b}]\;,  \label{L'matrix}
\end{equation}%
which can be rewritten in the basis $\{v_{0}\otimes v_{0},v_{0}\otimes
v_{1},v_{1}\otimes v_{0},v_{1}\otimes v_{1}\}$ as 
\begin{equation*}
L^{\prime }(x_{i})=\left( 
\begin{array}{cccc}
1 & 0 & 0 & 0 \\ 
0 & 1 & x_{i} & 0 \\ 
0 & 1 & 0 & 0 \\ 
0 & 0 & 0 & x_{i}%
\end{array}%
\right) \;.
\end{equation*}%
Via a straightforward computation, which we omit, one arrives at the
following result.

\begin{proposition}
We have the identity 
\begin{equation}
R_{12}^{\prime }(x/y)L_{13}^{\prime }(x)L_{23}^{\prime }(y)=L_{23}^{\prime
}(y)L_{13}^{\prime }(x)R_{12}^{\prime }(x/y),  \label{YBE2}
\end{equation}%
where 
\begin{equation}
R^{\prime }(x/y)=\left( 
\begin{array}{cccc}
1 & 0 & 0 & 0 \\ 
0 & 1-x/y & x/y & 0 \\ 
0 & 1 & 0 & 0 \\ 
0 & 0 & 0 & x/y%
\end{array}%
\right)\;.  \label{R'}
\end{equation}
\end{proposition}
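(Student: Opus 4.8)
The plan is to verify (\ref{YBE2}) directly, as an equality of operators on $V^{\otimes 3}$. Reading off the subscripts, $R'_{12}=R'\otimes\mathrm{id}$, while $L'_{13}$ and $L'_{23}$ act on legs $1,3$ and $2,3$ respectively (the identity on the remaining leg), so each side is an $8\times 8$ matrix with entries in $\mathbb{Q}(x/y)$. The key simplification is that both $R'$ and $L'$ conserve the number of $1$-letters: from (\ref{L'matrix}) one reads off $|00\rangle\mapsto|00\rangle$, $|01\rangle\mapsto|01\rangle+|10\rangle$, $|10\rangle\mapsto x_i|01\rangle$, $|11\rangle\mapsto x_i|11\rangle$, and $R'(x/y)$ acts analogously (mixing only $|01\rangle,|10\rangle$). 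Consequently the operators on $V^{\otimes3}$ preserve the total particle number on the three legs, and (\ref{YBE2}) block-diagonalises into the sectors of total particle number $0,1,2,3$; it suffices to check each block on its own.

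First I would dispose of the two extreme sectors, which are one-dimensional. On $|000\rangle$ every factor acts as the identity, so both sides return $|000\rangle$. On $|111\rangle$ each $L'$ contributes its spectral argument and $R'$ contributes the weight $x/y$; tracing the two orderings gives $x^2|111\rangle$ on both sides. The substance of the computation is therefore confined to the two middle sectors, each three-dimensional and invariant under every factor: in the ordered bases $\{|100\rangle,|010\rangle,|001\rangle\}$ and $\{|110\rangle,|101\rangle,|011\rangle\}$ each side of (\ref{YBE2}) restricts to a $3\times 3$ matrix whose entries become polynomials in $x/y$ once the common denominator is cleared, and I would expand both triple products and compare them entrywise.

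The main obstacle is not conceptual but bookkeeping: one must place each operator on the correct pair of tensor legs and keep the spectral arguments ($x$ in $L'_{13}$, $y$ in $L'_{23}$, $x/y$ in $R'_{12}$) straight throughout, since the order of application differs on the two sides. A convenient consistency check is that, because $L'$ is linear in its argument and $R'$ is a rational function of $x/y$ of bounded degree, the two middle-sector identities are low-degree and can be confirmed by comparing a handful of coefficients. One might hope to shortcut the verification by deriving (\ref{YBE2}) from the already-established vicious-walker equation (\ref{YBE1}) through the complementation symmetry relating the two models; however, the two intertwiners (\ref{R}) and (\ref{R'}) are genuinely different, so I expect the direct sector-by-sector check to be the cleanest route. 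Finally, as in the analogous remark following the vicious-walker Yang--Baxter equation (\ref{YBE1}), one notes that $R'(x/y)$ is non-singular for generic $x,y$ (its determinant is a nonzero monomial in $x/y$), which is what will allow $R'$ to encode the commutation relations of the associated Yang--Baxter algebra.
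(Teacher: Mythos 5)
Your proposal is correct and matches the paper's route: the paper simply states that \eqref{YBE2} follows from a direct computation (which it omits), and your sector-by-sector verification — using that $R'$ and $L'$ preserve the total number of $1$-letters, disposing of the $0$- and $3$-particle sectors trivially, and checking the two three-dimensional middle sectors entrywise — is exactly that computation, sensibly organised. Your observations that the extreme sectors give $|000\rangle$ and $x^2|111\rangle$ on both sides, and that $\det R'(x/y)$ is a nonzero monomial so $R'$ is generically invertible, are accurate.
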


Note that the matrix $R^{\prime}(x/y)$ is non-singular for generic $x,y$.

In complete analogy with our previous discussion of the vicious walker
model, we can define also here a monodromy matrix 
\begin{equation}
T^{\prime }(x)=L_{0N}^{\prime }(x)\cdots L_{02}^{\prime }(x)L_{01}^{\prime
}(x)=\left( 
\begin{array}{cc}
A^{\prime }(x) & B^{\prime }(x) \\ 
C^{\prime }(x) & D^{\prime }(x)%
\end{array}%
\right)  \label{5vT'}
\end{equation}%
and Yang-Baxter algebra.

\begin{proposition}
The matrix elements in (\ref{5vT'}) are given by the following expressions
in the hopping operators (\ref{TLrep}),%
\begin{gather}
A^{\prime }(x)=(1+xf_{1})\cdots (1+xf_{N-1})  \label{A'} \\
B^{\prime }(x)=x\sigma _{1}^{+}A^{\prime }(x),\quad C^{\prime }(x)=A^{\prime
}(x)\sigma _{N}^{-},\quad D^{\prime }(x)=x~\sigma _{1}^{+}A^{\prime
}(x)\sigma _{N}^{-}  \label{YBA2}
\end{gather}%
and we have the commutation relations $A^{\prime }(x)A^{\prime
}(y)=A^{\prime }(y)A^{\prime }(x)$.
\end{proposition}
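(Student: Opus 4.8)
The plan is to mirror the derivation of (\ref{T_exp}) from the vicious‑walker case. First I would read off the $2\times 2$ form of the primed $L$‑operator in the auxiliary space directly from (\ref{L'matrix}): writing the entries as operators on the quantum space one gets
\[
L'(x)=\left(\begin{array}{cc} 1 & x\sigma^{+} \\ \sigma^{-} & x\sigma^{+}\sigma^{-} \end{array}\right),
\]
which differs from (\ref{L}) only in the lower‑right entry, where $x\sigma^{-}\sigma^{+}$ has become $x\sigma^{+}\sigma^{-}$. This single change is exactly what reverses the order of the hopping operators in the final answer.

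Then I claim the analogue of (\ref{T_exp}) is
\[
T'_{b,a}(x)=\sum_{\alpha}x^{|\alpha|+a}\,(\sigma_{1}^{+})^{a}\,f_{1}^{\alpha_{1}}f_{2}^{\alpha_{2}}\cdots f_{N-1}^{\alpha_{N-1}}\,(\sigma_{N}^{-})^{b},
\]
the sum again running over all $\alpha=(\alpha_{1},\ldots,\alpha_{N-1})$ with $\alpha_{i}\in\{0,1\}$. I would prove this by induction on the chain, building the monodromy matrix from the top site downward: setting ${T'}^{[m]}(x)=L'_{0N}(x)\cdots L'_{0m}(x)$ one shows ${T'}^{[m]}_{b,a}=\sum_{\alpha}x^{|\alpha|+a}(\sigma_{m}^{+})^{a}f_{m}^{\alpha_{m}}\cdots f_{N-1}^{\alpha_{N-1}}(\sigma_{N}^{-})^{b}$. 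The base case $m=N$ is just the four entries of $L'_{0N}$ read off above. In the step ${T'}^{[m]}={T'}^{[m+1]}L'_{0m}$ one expands in the shared auxiliary index and uses the two identities $\sigma_{m+1}^{+}\sigma_{m}^{-}=f_{m}$ and $\sigma_{m+1}^{+}\sigma_{m}^{+}\sigma_{m}^{-}=\sigma_{m}^{+}f_{m}$, together with the fact that $\sigma_{m}^{\pm}$ commutes with $f_{m+1},\ldots,f_{N-1}$ and with $\sigma_{N}^{-}$ since these are supported on sites $>m$. This inserts precisely the new factor $f_{m}^{\alpha_{m}}$ at the far left of the hopping string and, in the $a=1$ sector, produces the leading $\sigma_{m}^{+}$. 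Setting $m=1$ gives the displayed formula, whence $A'=T'_{0,0}$, $B'=T'_{0,1}$, $C'=T'_{1,0}$ and $D'=T'_{1,1}$ are read off directly, yielding (\ref{A'}) and (\ref{YBA2}).

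For the commutation relation I would invoke the Yang–Baxter equation (\ref{YBE2}). By the standard train argument of the quantum inverse scattering method it lifts to the global intertwining relation $R'_{12}(x/y)T'_{1}(x)T'_{2}(y)=T'_{2}(y)T'_{1}(x)R'_{12}(x/y)$ on $V(x)\otimes V(y)\otimes V^{\otimes N}$. Extracting the matrix element with both auxiliary spaces in the state $v_{0}\otimes v_{0}$ on input and output, and using that the only nonzero entry of $R'$ in its first row and first column is ${R'}^{00}_{00}=1$ (see (\ref{R'})), collapses both sides to a single term and gives $A'(x)A'(y)=A'(y)A'(x)$, exactly as the $AA$‑relation of (\ref{YBA1a}) arises in the vicious case. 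The only delicate point in the whole argument is the operator‑ordering bookkeeping in the inductive step — making sure the new hopping factor lands in the correct, now increasing, slot $f_{m}f_{m+1}\cdots$ and that the boundary operators $\sigma_{1}^{+}$ and $\sigma_{N}^{-}$ end up outermost‑left and innermost‑right — but this is routine once one observes that every $\sigma_{m}^{\pm}$ commutes past all operators supported on sites $>m$, so the computation is just (\ref{T_exp}) run in the opposite order.
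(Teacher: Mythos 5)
Your proposal is correct and follows essentially the same route as the paper: you derive the expansion (\ref{T'_exp}) for $T'_{b,a}(x)$ from the local form of $L'$ in (\ref{L'matrix}) and read off the four entries, with the commutativity of $A'$ obtained from the Yang--Baxter relation (\ref{YBE2}) exactly as in the vicious-walker case (\ref{YBA1a}). The only difference is that you make explicit the induction over lattice sites that the paper leaves as "one derives from the definition," and your bookkeeping of the ordering $f_1^{\alpha_1}\cdots f_{N-1}^{\alpha_{N-1}}$ and of the boundary factors $\sigma_1^{+}$, $\sigma_N^{-}$ is accurate.
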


\begin{proof}
Exploiting (\ref{L'matrix}) one derives from the definition (\ref{5vT'}) the
expansion%
\begin{equation}
T_{b,a}^{\prime }(x)=\sum_{\alpha }x^{|\alpha |+a}(\sigma
_{1}^{+})^{a}f_{1}^{\alpha _{1}}\cdots f_{N-1}^{\alpha _{N-1}}(\sigma
_{N}^{-})^{b}\;,  \label{T'_exp}
\end{equation}
where the sum runs again over all compositions $\alpha=(\alpha_1,\ldots,%
\alpha_{N-1})$ with $\alpha_i=0,1$.
\end{proof}

Similar like before one now verifies the following combinatorial action of
the Yang-Baxter algebra.

\begin{lemma}[vertical strips]
Let $A^{\prime }(x)=\sum_{r\geq 0}x^{r}A_{r}^{\prime }$. The polynomials 
\begin{equation}
A_{r}^{\prime }=\sum_{\alpha \vdash r}f_{1}^{\alpha _{1}}\cdots
f_{N-1}^{\alpha _{N-1}}  \label{A'r}
\end{equation}%
act on $|\mu \rangle $ with $\mu \in (n,k)$ a partition by adding all
possible vertical $r$-strips to the Young diagram of $\mu $ such that the
resulting diagram lies still within the $n\times k$ bounding box, $%
A_{r}^{\prime }|\mu \rangle =\sum_{\lambda /\mu =(1^{r})}|\lambda \rangle $.
\end{lemma}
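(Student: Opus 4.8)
The plan is to avoid repeating a combinatorial computation and instead derive the statement from the already-proved horizontal strips lemma for $A_{r}$ together with the conjugation symmetry $\Theta$. The first step is to observe that the two generating operators are conjugate under $\Theta$. Starting from $A(x)=(1+xf_{N-1})\cdots(1+xf_{1})$ and the transformation rule $\Theta\circ f_{i}=f_{N-i}\circ\Theta$ for $i=1,\ldots,N-1$, and using $\Theta^{2}=1$, I would compute
\begin{equation*}
\Theta\,A(x)\,\Theta=(1+xf_{1})(1+xf_{2})\cdots(1+xf_{N-1})=A^{\prime}(x),
\end{equation*}
since conjugating each factor sends $f_{N-1},\ldots,f_{1}$ to $f_{1},\ldots,f_{N-1}$ and thereby reverses the order of the product. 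Comparing coefficients of $x^{r}$ gives the operator identity $A_{r}^{\prime}=\Theta\,A_{r}\,\Theta$.

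The second step uses that $\Theta$ is the involution induced by conjugation of partitions, so $\Theta|\mu\rangle=|\mu^{\prime}\rangle$ and $\Theta$ interchanges the weight subspaces $V_{n}$ and $V_{k}$. For $\mu\in(n,k)$ one has $\mu^{\prime}\in(k,n)$, and applying the horizontal strips lemma on $V_{k}$ (that is, with $n$ and $k$ exchanged) yields $A_{r}|\mu^{\prime}\rangle=\sum_{\lambda/\mu^{\prime}=(r)}|\lambda\rangle$, the sum running over all $\lambda\in(k,n)$ obtained from $\mu^{\prime}$ by adjoining a horizontal $r$-strip. Applying $\Theta$ once more and invoking the standard fact that conjugation turns a horizontal $r$-strip $\lambda/\mu^{\prime}$ into a vertical $r$-strip $\lambda^{\prime}/\mu$ while returning the box $(k,n)$ to $(n,k)$, I obtain
\begin{equation*}
A_{r}^{\prime}|\mu\rangle=\Theta\,A_{r}|\mu^{\prime}\rangle=\sum_{\lambda^{\prime}/\mu=(1^{r})}|\lambda^{\prime}\rangle,
\end{equation*}
which is the asserted identity after renaming $\lambda^{\prime}$ as the output partition.

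The step I expect to require the most care is the bookkeeping of ambient spaces: $\Theta$ is an involution of all of $V^{\otimes N}$ but it swaps $V_{n}$ with $V_{k}$, so one must be sure the horizontal strips lemma is invoked for $k$ one-letters and that the box constraint $\lambda\in(k,n)$ conjugates back correctly to $\lambda^{\prime}\in(n,k)$; these are exactly what guarantee that the vertical $r$-strips of $\mu$ stay inside the $n\times k$ box. As a cross-check I would also carry out the direct argument parallel to the horizontal case: reading $A_{r}^{\prime}$ as a sum over subsets $S\subseteq\{1,\ldots,N-1\}$ of size $r$, a maximal consecutive run $f_{i}f_{i+1}\cdots f_{i+r^{\prime}-1}$ (whose largest-index factor acts first) drags a single $0$-letter leftward past $r^{\prime}$ consecutive $1$-letters, shifting each of them one step right and thus adjoining a vertical strip in the diagonals $i-n,\ldots,i+r^{\prime}-1-n$; summing over all $S$ then reproduces every vertical $r$-strip exactly once, with the correct unit multiplicity since the letter moves are forced once the target $\lambda^{\prime}$ is fixed.
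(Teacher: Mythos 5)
Your argument is correct, but it takes a genuinely different route from the paper. The paper proves this lemma by a direct combinatorial computation mirroring the horizontal strips lemma: it examines a consecutive string $f_{i}f_{i+1}\cdots$ acting on $|\mu\rangle$, notes it is nonzero only when the corresponding letters of $w(\mu)$ are $1\cdots 10$, and uses the bijection $\ell_i(\lambda)=\lambda_{n+1-i}+i$ to see that the affected rows each gain one box --- exactly the cross-check you describe at the end (your bookkeeping of which letter moves which way and the resulting diagonals is consistent with the paper's). Your primary argument instead deduces the result from the already-proved horizontal strips lemma via the conjugation involution $\Theta$: the identity $\Theta A(x)\Theta=A'(x)$ follows from $\Theta\circ f_i=f_{N-i}\circ\Theta$ exactly as you say, and combined with $\Theta|\mu\rangle=|\mu'\rangle$ and the standard fact that conjugation exchanges horizontal and vertical strips, this gives the claim. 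This is essentially the level-rank duality trick the paper itself uses a few lines later to prove $\Theta\circ H_r=E_r\circ\Theta$ and to transport the Pieri-type action from $H_r$ to $E_r$, so your proof anticipates and reuses that mechanism at the level of $A_r$ rather than the full transfer matrix. The one point you rightly flag --- that the horizontal strips lemma must be invoked on $V_k$ for $\mu'\in(k,n)$, i.e.\ with $n$ and $k$ exchanged --- is legitimate, since the lemma and its proof hold for an arbitrary weight subspace and the box constraint $\lambda\in(k,n)$ conjugates back to $\lambda'\in(n,k)$. What your route buys is economy (no second combinatorial computation); what the paper's direct route buys is an explicit local description of the action of each monomial $f_1^{\alpha_1}\cdots f_{N-1}^{\alpha_{N-1}}$, which is what is actually needed later for the bijection with osculating walker row configurations.
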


\begin{proof}
Similar to the case of vicious walkers we consider $f_{i}f_{i+1}\cdots
f_{i+r}|\mu \rangle =|\lambda \rangle $ which is non-trivial only if $%
w_{i+r^{\prime }+1}(\mu )=0$ and $w_{j}(\mu )=1$ with $i\leq j\leq
i+r^{\prime },\;r^{\prime }\leq r$. Employing once more (\ref{part2word}) we
find that $\lambda _{n+1-j}-\mu _{n+1-j}=1$. Thus, $\lambda $ is obtained by
adding a vertical strip of height $r$.
\end{proof}

We now impose again quasi-periodic boundary conditions in the horizontal
direction of the square lattice and introduce the corresponding transfer
matrix $E:V_{q}^{\otimes N}\rightarrow V_{q}^{\otimes N}$,%
\begin{equation}
E(x_{i}):=A^{\prime }(x_{i})+qD^{\prime }(x_{i})=\func{Tr}_{0}q^{(\sigma
^{+}\sigma ^{-})_{0}}L_{0N}^{\prime }(x_{i})\cdots L_{01}^{\prime }(x_{i})\;.
\label{5vE}
\end{equation}%
As in the case of the vicious walkers model one now exploits the Yang-Baxter
equation to arrive at the following set of commuting polynomials in the $%
f_{i}$'s.

\begin{corollary}[integrability]
We have the expansion 
\begin{equation}
E(x)=\sum_{r=0}^{N}x^{r}E_{r},\quad E_{r}=\sum_{\alpha \vdash r}q^{\alpha
_{N}}(\sigma _{1}^{+})^{\alpha _{N}}f_{1}^{\alpha _{1}}\cdots
f_{N-1}^{\alpha _{N-1}}(\sigma _{N}^{-})^{\alpha _{N}}  \label{E}
\end{equation}%
and the $\{E_{r}\}_{r=0}^{N}$ commute pairwise.
\end{corollary}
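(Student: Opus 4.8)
The plan is to mirror the proof of the vicious-walker integrability corollary, adapting it to the osculating weights. First I would establish the expansion (\ref{E}). By the preceding Lemma (vertical strips) the coefficients of $A^{\prime }(x)=\prod_{i=1}^{N-1}(1+xf_{i})$ are $A_{r}^{\prime }=\sum_{\alpha \vdash r}f_{1}^{\alpha _{1}}\cdots f_{N-1}^{\alpha _{N-1}}$, the sum running over compositions $\alpha =(\alpha _{1},\ldots ,\alpha _{N-1})$ with $\alpha _{i}\in \{0,1\}$ and $|\alpha |=r$. Since $D^{\prime }(x)=x\,\sigma _{1}^{+}A^{\prime }(x)\sigma _{N}^{-}$ by (\ref{YBA2}), the coefficient of $x^{r}$ in $qD^{\prime }(x)$ is $q\,\sigma _{1}^{+}A_{r-1}^{\prime }\sigma _{N}^{-}$. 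Collecting the two contributions to $E(x)=A^{\prime }(x)+qD^{\prime }(x)$ and recording whether the boundary term is present by an extra letter $\alpha _{N}\in \{0,1\}$ (so that $|\alpha |=\sum_{i=1}^{N}\alpha _{i}=r$), the $\alpha _{N}=0$ terms reproduce $A_{r}^{\prime }$ and the $\alpha _{N}=1$ terms reproduce $q\,\sigma _{1}^{+}A_{r-1}^{\prime }\sigma _{N}^{-}$; together they give exactly the claimed formula.

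For commutativity I would argue, as in the vicious case, that the model is integrable. Writing $E(x)=\func{Tr}_{0}\,Q_{0}\,T_{0}^{\prime }(x)$ with the twist $Q=q^{\sigma ^{+}\sigma ^{-}}=\func{diag}(1,q)$ as in (\ref{5vE}), the Yang--Baxter equation (\ref{YBE2}) for $L^{\prime }$ yields, by the standard train argument (induction on the number of quantum sites), the $RTT$-relation
\begin{equation*}
R_{00^{\prime }}^{\prime }(x/y)\,T_{0}^{\prime }(x)\,T_{0^{\prime }}^{\prime }(y)=T_{0^{\prime }}^{\prime }(y)\,T_{0}^{\prime }(x)\,R_{00^{\prime }}^{\prime }(x/y).
\end{equation*}
Taking the twisted partial trace over both auxiliary spaces $0,0^{\prime }$, using cyclicity of the trace together with the non-singularity of $R^{\prime }(x/y)$ noted above and the compatibility of the twist with $R^{\prime }$, one obtains $E(x)E(y)=E(y)E(x)$. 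Comparing coefficients of $x^{r}y^{s}$ then gives $E_{r}E_{s}=E_{s}E_{r}$ for all $r,s$, which is the assertion.

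The one osculating-specific point that must be checked---and the step I expect to be the only real obstacle---is the twist compatibility $[Q\otimes Q,\,R^{\prime }(x/y)]=0$. This holds because, in the basis $\{v_{0}\otimes v_{0},v_{0}\otimes v_{1},v_{1}\otimes v_{0},v_{1}\otimes v_{1}\}$, the matrix $R^{\prime }(x/y)$ in (\ref{R'}) is block diagonal and $Q\otimes Q=\func{diag}(1,q,q,q^{2})$ acts as the scalar $q$ on the only non-trivial $2\times 2$ block (the one mixing $v_{0}\otimes v_{1}$ and $v_{1}\otimes v_{0}$), hence commutes with it; on the remaining one-dimensional blocks the statement is automatic. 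The remaining ingredients---the train argument for the $RTT$-relation and the trace manipulation---are identical in form to the vicious-walker proof and require no new input.
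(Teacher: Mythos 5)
Your proposal is correct and follows the same route the paper takes: the expansion is read off from $A'(x)=(1+xf_1)\cdots(1+xf_{N-1})$ and $D'(x)=x\sigma_1^+A'(x)\sigma_N^-$, and commutativity comes from writing $E(x)$ as the twisted partial trace (\ref{5vE}) and invoking the Yang--Baxter equation (\ref{YBE2}), exactly as the paper does for the vicious-walker transfer matrix. The only addition is that you spell out the twist compatibility $[q^{\sigma^+\sigma^-}\otimes q^{\sigma^+\sigma^-},R'(x/y)]=0$, a detail the paper leaves implicit, and your verification of it is correct.
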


The following lemma shows that vicious and osculating walkers are related
via level-rank duality.

\begin{lemma}[level-rank duality]
\label{levelrank} Let $\Theta :V_{q}^{\otimes N}\rightarrow V_{q}^{\otimes
N} $ be the involution induced by sending each $|w\rangle \in \mathcal{B}$
to $|w^{\prime }\rangle $; compare with (\ref{01bijections}). Then we have $%
\Theta \circ H_{r}=E_{r}\circ \Theta $ for all $r=0,1,\ldots ,N$.
\end{lemma}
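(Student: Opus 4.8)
The plan is to establish the stronger operator identity $\Theta H_{r}\Theta =E_{r}$ for each $r=0,1,\ldots ,N$, from which $\Theta \circ H_{r}=E_{r}\circ \Theta $ follows because $\Theta $ is an involution: complementing a 01-word twice returns the original word, so $\Theta ^{2}=\func{id}$. I would work directly from the explicit expansions (\ref{H}) and (\ref{E}) and conjugate them factor by factor. The computation rests on three elementary facts: first, $\Theta $ is $\mathbb{C}[q,q^{-1}]$-linear, so it commutes with the scalars $q^{\alpha _{N}}$; second, the relation $\Theta \circ f_{i}=f_{N-i}\circ \Theta $ is already recorded in (\ref{ftrans}); and third, $\Theta $ converts the boundary ladder operators according to $\Theta \sigma _{1}^{+}\Theta =\sigma _{N}^{-}$ and $\Theta \sigma _{N}^{-}\Theta =\sigma _{1}^{+}$.

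To obtain the third fact I would realise $\Theta $ concretely as the composition of the site-reversal $i\mapsto N+1-i$ with the site-wise bit flip $v_{0}\leftrightarrow v_{1}$. The reversal sends $\sigma _{i}^{\pm }$ to $\sigma _{N+1-i}^{\pm }$, while the bit flip conjugates $\sigma ^{+}$ into $\sigma ^{-}$ and vice versa; composing gives $\Theta \sigma _{i}^{\pm }\Theta =\sigma _{N+1-i}^{\mp }$, and in particular the two boundary identities above. As a consistency check, the same two ingredients reproduce $\Theta f_{i}\Theta =\sigma _{N-i}^{-}\sigma _{N+1-i}^{+}=f_{N-i}$, in agreement with (\ref{ftrans}). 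Note that only indices in $\{1,\ldots ,N\}$ occur, so no quasi-periodic boundary correction enters.

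With these in hand the main step is routine. Inserting $\Theta \Theta =\func{id}$ between consecutive factors of a summand of (\ref{H}) and applying the three facts, each $\sigma _{N}^{-}$ on the left turns into $\sigma _{1}^{+}$, each $\sigma _{1}^{+}$ on the right turns into $\sigma _{N}^{-}$, and the middle string $f_{N-1}^{\alpha _{N-1}}\cdots f_{1}^{\alpha _{1}}$ has each index reflected while the order of the factors is preserved, becoming $f_{1}^{\alpha _{N-1}}f_{2}^{\alpha _{N-2}}\cdots f_{N-1}^{\alpha _{1}}$. Hence
\begin{equation*}
\Theta H_{r}\Theta =\sum_{\alpha \vdash r}q^{\alpha _{N}}(\sigma _{1}^{+})^{\alpha _{N}}f_{1}^{\alpha _{N-1}}f_{2}^{\alpha _{N-2}}\cdots f_{N-1}^{\alpha _{1}}(\sigma _{N}^{-})^{\alpha _{N}}.
\end{equation*}
Finally I would reindex the sum by the bijection on $\{0,1\}$-compositions of $r$ given by $\beta _{j}:=\alpha _{N-j}$ for $1\leq j\leq N-1$ and $\beta _{N}:=\alpha _{N}$ (reverse the first $N-1$ entries, fix the last). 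Under it the displayed summand becomes exactly $q^{\beta _{N}}(\sigma _{1}^{+})^{\beta _{N}}f_{1}^{\beta _{1}}\cdots f_{N-1}^{\beta _{N-1}}(\sigma _{N}^{-})^{\beta _{N}}$, so the sum equals $E_{r}$ as given in (\ref{E}).

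The only place demanding care, and the step I would treat as the main obstacle, is the bookkeeping in this last paragraph: one must check that factorwise conjugation flips the $f$-indices without transposing the sequence of factors, that the two boundary operators genuinely exchange roles (left $\sigma _{N}^{-}$ with right $\sigma _{1}^{+}$), and that the reindexing is a genuine bijection matching the $q^{\alpha _{N}}$ weights term by term. Once these are verified, the identity $\Theta H_{r}\Theta =E_{r}$, and therefore $\Theta \circ H_{r}=E_{r}\circ \Theta $, is immediate.
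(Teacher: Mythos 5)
Your proof is correct and takes essentially the same route as the paper: the paper's one-line proof cites precisely the relation $\Theta \circ f_{i}=f_{N-i}\circ \Theta $ from (\ref{ftrans}) together with the expansions (\ref{H}) and (\ref{E}), and your factor-by-factor conjugation with the reindexing $\beta _{j}=\alpha _{N-j}$, $\beta _{N}=\alpha _{N}$ is exactly the computation being alluded to. Your boundary identity $\Theta \sigma _{i}^{\pm }\Theta =\sigma _{N+1-i}^{\mp }$ carries the correct index (it is the one consistent with (\ref{ftrans})), so the bookkeeping goes through as you describe.
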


\begin{proof}
This is an easy consequence of the second relation in (\ref{ftrans}) and the
formulae (\ref{H}), (\ref{E}).
\end{proof}

An immediate consequence is the following combinatorial action of the $E_{r}$%
's which simply follows from the combinatorial action of the vicious walker
transfer matrix discussed previously.

\begin{lemma}
Let $\mu \in (n,k)$ then%
\begin{equation}
E_{r}|\mu \rangle =\sum_{\lambda -\mu =(1^{r})}|\lambda \rangle
+q\sum_{\lambda \lbrack 1]-\mu =(1^{r})}|\lambda \rangle ,  \label{Eaction}
\end{equation}%
where in the second sum $\lambda \lbrack 1]$ again denotes the partition
obtained from $\lambda $ by adding a boundary ribbon of length $n+k$
starting in the first and ending in the $n^{\text{th}}$ row. For $r>n$ we
have $E_{r}|\mu \rangle =0$.
\end{lemma}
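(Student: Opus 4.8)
The plan is to deduce the combinatorial action (\ref{Eaction}) from that of the vicious-walker transfer matrix (\ref{Haction}) by transporting it through the level-rank duality of Lemma \ref{levelrank}. Since $\Theta$ is an involution, $\Theta\circ H_r=E_r\circ\Theta$ rewrites as $E_r=\Theta\circ H_r\circ\Theta$. As $\Theta$ is induced by the word map $w\mapsto w'$ of (\ref{01bijections}), i.e.\ by conjugation of partitions, one has $\Theta|\mu\rangle=|\mu'\rangle$ with $\mu'\in(k,n)$. First I would apply (\ref{Haction}) to $|\mu'\rangle$; since $|\mu'\rangle\in V_k$ the ambient box is now $k\times n$, so
\begin{equation*}
H_r|\mu'\rangle=\sum_{\kappa-\mu'=(r)}|\kappa\rangle+q\sum_{\kappa[1]-\mu'=(r)}|\kappa\rangle ,
\end{equation*}
with the horizontal $r$-strips and the length-$N$ boundary ribbon taken in the $k\times n$ box and with $H_r|\mu'\rangle=0$ for $r>n$, the width of that box.

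Applying $\Theta$ and using $\Theta|\kappa\rangle=|\kappa'\rangle$, I would set $\lambda=\kappa'$ and read off $E_r|\mu\rangle$ as a sum over $\lambda$ in the $n\times k$ box. For the non-winding term this is the standard fact that transposition sends a horizontal $r$-strip $\kappa/\mu'$ to a vertical $r$-strip $\lambda/\mu$, giving $\sum_{\kappa-\mu'=(r)}|\kappa'\rangle=\sum_{\lambda-\mu=(1^r)}|\lambda\rangle$, and the vanishing for $r>n$ transfers verbatim. The remaining task is to match the winding terms: to show that, under $\kappa\mapsto\lambda=\kappa'$, the condition that $\kappa[1]/\mu'$ be a horizontal $r$-strip in the $k\times n$ box is equivalent to the condition that $\lambda[1]/\mu$ be a vertical $r$-strip in the $n\times k$ box, where in each case $[1]$ adjoins the length-$N$ boundary ribbon appropriate to that box.

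I expect this last equivalence to be the crux, since conjugation does not carry one boundary ribbon onto the other: the $(k,n)$-ribbon sweeps all $k$ rows and so occupies $n+1$ columns, whereas the $(n,k)$-ribbon sweeps all $n$ rows and occupies $k+1$ columns, and these are genuinely different rim hooks. What survives is an equivalence of the resulting \emph{sets} of partitions $\lambda$, not an identity of the intermediate shapes. I would establish it by viewing both operations as a single wind around the $n\times k$ cylinder and checking that transposing the cylinder interchanges the vicious and osculating walker models while preserving the number of boundary crossings, which is exactly the power of $q$. This is the same bijective bookkeeping that underlies (\ref{Haction}), whose proof the article defers to the section on walk-to-tableau bijections, so I would likewise complete the ribbon matching there rather than by brute force here.
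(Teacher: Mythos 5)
Your argument is correct and is essentially the paper's own: the paper states the lemma as an ``immediate consequence'' of the level-rank duality $\Theta\circ H_{r}=E_{r}\circ\Theta$ of Lemma \ref{levelrank} applied to the combinatorial action (\ref{Haction}), which is exactly your route. You are also right to flag that the two boundary ribbons are not conjugate as plane shapes and that the matching of the $q$-terms only holds at the level of degree-one cylindric skew shapes on the cylinder; this is precisely the bookkeeping the paper carries out via the conjugate cylindric-loop formulas in the proof of Proposition \ref{prop:bijection}.
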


We can also determine the commutation relations between the Yang-Baxter
algebras of the vicious and osculating walker models via a third and final
Yang-Baxter relation, which - again - is obtained by a tedious but direct
computation which we omit.

\begin{proposition}
We have the additional identity 
\begin{equation}
R_{12}^{\prime \prime }(x/y)L_{1}(x)L_{2}^{\prime }(y)=L_{2}^{\prime
}(y)L_{1}(x)R_{12}^{\prime \prime }(x/y)  \label{YBE3}
\end{equation}%
where $R^{\prime \prime }$ is the singular matrix%
\begin{equation}
R^{\prime \prime }(x/y)=\left( 
\begin{array}{cccc}
1+x/y & 0 & 0 & 0 \\ 
0 & 1 & x/y & 0 \\ 
0 & 1 & x/y & 0 \\ 
0 & 0 & 0 & 0%
\end{array}%
\right) \;.  \label{R''}
\end{equation}
\end{proposition}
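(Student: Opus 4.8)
The plan is to verify (\ref{YBE3}) by exploiting the $U(1)$ weight conservation that all three matrices share. Reading off (\ref{L4x4}), (\ref{L'matrix}) and (\ref{R''}), one checks immediately that $L(x)$, $L'(y)$ and $R''(x/y)$ each preserve the total number of $1$-letters on the two sites they act on: the only nonzero entries connect an input $(a,b)$ to an output $(c,d)$ with $a+b=c+d$. Consequently every factor appearing in $R''_{12}(x/y)L_{1}(x)L'_{2}(y)$ and in $L'_{2}(y)L_{1}(x)R''_{12}(x/y)$ preserves the total weight on $V_{1}\otimes V_{2}\otimes V_{3}$, and the $8\times 8$ identity splits into blocks indexed by the total number of $1$-letters, of sizes $1,3,3,1$ for weights $0,1,2,3$. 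The whole claim thus reduces to four independent, and much smaller, block identities.

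The two extreme sectors are immediate. In the weight-$0$ sector the only state is $|000\rangle$, on which $L$ and $L'$ act by $1$ and $R''$ by $1+x/y$, so both sides collapse to the scalar $1+x/y$. In the weight-$3$ sector the only state is $|111\rangle$; here the singularity of $R''$ works in our favour, since $R''^{\,11}_{\,11}=0$ annihilates it on the right-hand side, while on the left $L_{1}(x)$ already annihilates the doubly occupied pair (the entry $L^{11}$ vanishes), so both triple products give $0$.

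This leaves the genuine computations in the weight-$1$ and weight-$2$ sectors, each three-dimensional. In the weight-$1$ sector with basis $|100\rangle,|010\rangle,|001\rangle$ one reads off from the vertex weights the restricted $3\times 3$ matrices
\[
R''_{12}\big|_{1}=\begin{pmatrix} t & 1 & 0\\ t & 1 & 0\\ 0 & 0 & 1+t\end{pmatrix},\qquad
L_{13}\big|_{1}=\begin{pmatrix} x & 0 & 1\\ 0 & 1 & 0\\ x & 0 & 1\end{pmatrix},\qquad
L'_{23}\big|_{1}=\begin{pmatrix} 1 & 0 & 0\\ 0 & 0 & 1\\ 0 & y & 1\end{pmatrix},
\]
with $t=x/y$, and the identity is confirmed by multiplying out $R''LL'$ and $L'LR''$ (clearing the denominator, e.g. by replacing $R''$ with $yR''$, turns this into a polynomial identity in $x,y$). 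The weight-$2$ sector, with basis $|110\rangle,|101\rangle,|011\rangle$, is handled in exactly the same way; the noteworthy feature is that $L_{13}$ annihilates $|101\rangle$ while $L'_{23}$ does \emph{not} annihilate $|011\rangle$ (there $L'^{\,11}_{\,11}=y\neq 0$), so the two restricted $L$-operators are genuinely different and the cancellations that force the two sides to agree are slightly more delicate.

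I expect the only real obstacle to be bookkeeping rather than ideas. Because $R''$ is singular one cannot invert it to recast (\ref{YBE3}) as a conjugation, so both triple products must be computed in full. Moreover, unlike (\ref{YBE1}) and (\ref{YBE2}), the relation (\ref{YBE3}) intertwines two \emph{different} $L$-operators on a common quantum space, so it is not a consequence of either single-model Yang--Baxter equation and must be established on its own. Nevertheless the weight grading confines all the work to the two $3\times 3$ blocks above, which is precisely what makes the \textquotedblleft tedious but direct computation\textquotedblright\ tractable and essentially mechanical.
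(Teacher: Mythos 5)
Your proposal is correct and is essentially the computation the paper itself invokes (and omits) with the phrase ``tedious but direct''; the weight-grading decomposition into $1+3+3+1$ blocks is a sensible way to organise it, your restricted $3\times 3$ matrices in the weight-one sector are accurate, and the weight-two block does check out as you predict (both triple products there equal the matrix with rows $(0,0,0),(0,x,y),(0,x,y)$ in the basis $|110\rangle,|101\rangle,|011\rangle$). Nothing is missing.
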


Exploiting this last result one now proves in a similar manner as before
that the transfer matrices $H(x)$ and $E(y)$ commute for arbitrary $x,y$.

\begin{corollary}
We have the following commutation relations%
\begin{eqnarray}
A(x)A^{\prime }(y) &=&A^{\prime }(y)A(x),  \notag \\
(x+y)A^{\prime }(y)\sigma _{N}^{-}A(x) &=&x\sigma _{N}^{-}A(x)A^{\prime
}(y)+yA(x)A^{\prime }(y)\sigma _{N}^{-},  \notag \\
(x+y)A(x)\sigma _{1}^{+}A^{\prime }(y) &=&xA(x)A^{\prime }(y)\sigma
_{1}^{+}+y\sigma _{1}^{+}A(x)A^{\prime }(y),  \label{YBA3}
\end{eqnarray}%
and%
\begin{equation}
yA(x)\sigma _{1}^{+}A^{\prime }(y)\sigma _{N}^{-}+x\sigma _{N}^{-}A(x)\sigma
_{1}^{+}A^{\prime }(y)= y\sigma _{1}^{+}A^{\prime }(y)\sigma
_{N}^{-}A(x)+xA^{\prime }(y)\sigma _{N}^{-}A(x)\sigma _{1}^{+}\ .
\end{equation}%
In particular, finite and affine nil Temperley-Lieb polynomials $H_{r},E_{r}$
pairwise commute, i.e. $H_{r}E_{r^{\prime }}=E_{r^{\prime }}H_{r}$ for all $%
r,r^{\prime }=0,1,\ldots ,N$.
\end{corollary}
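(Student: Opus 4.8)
The plan is to lift the local intertwining relation (\ref{YBE3}) to the two full monodromy matrices and then read off the stated identities as its matrix components in the auxiliary spaces. Write $T_1(x)=L_{1N}(x)\cdots L_{11}(x)$ for the vicious monodromy matrix and $T'_2(y)=L'_{2N}(y)\cdots L'_{21}(y)$ for the osculating one, where the subscripts $1,2$ label two auxiliary copies of $V$, both operators acting on the common quantum space $V_q^{\otimes N}$, and let $R''_{12}(x/y)$ act on the two auxiliary factors only. First I would prove the monodromy intertwining relation
\begin{equation}
R''_{12}(x/y)\,T_1(x)\,T'_2(y)=T'_2(y)\,T_1(x)\,R''_{12}(x/y)
\end{equation}
by induction on the number of sites. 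The one-site case is exactly (\ref{YBE3}). For the inductive step I would split off the site-$N$ operators, $T_1(x)=L_{1N}(x)\,T_1^{(N-1)}(x)$ and $T'_2(y)=L'_{2N}(y)\,T_2^{\prime(N-1)}(y)$, commute $T_1^{(N-1)}(x)$ past $L'_{2N}(y)$ and $L_{1N}(x)$ past $T_2^{\prime(N-1)}(y)$ (legal since operators at distinct sites and distinct auxiliary spaces commute), apply (\ref{YBE3}) at site $N$, and invoke the induction hypothesis on sites $1,\dots,N-1$.

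Next I would expand this relation as a $4\times4$ identity in the basis $\{v_0\otimes v_0,\,v_0\otimes v_1,\,v_1\otimes v_0,\,v_1\otimes v_1\}$ of the two auxiliary spaces, the entries being operators on $V_q^{\otimes N}$. With the explicit $R''$ from (\ref{R''}) only a few components survive. The $(v_0\otimes v_0,\,v_0\otimes v_0)$ component gives $A(x)A'(y)=A'(y)A(x)$ after dividing by the common factor $1+x/y$; the $(v_0\otimes v_1,\,v_0\otimes v_0)$ component gives the second relation, the $(v_0\otimes v_0,\,v_1\otimes v_0)$ component the third, and the $(v_0\otimes v_1,\,v_1\otimes v_0)$ component the fourth. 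In each case I would substitute $B(x)=xA(x)\sigma_1^+$, $C(x)=\sigma_N^-A(x)$, $B'(y)=y\sigma_1^+A'(y)$ and $C'(y)=A'(y)\sigma_N^-$ from (\ref{YBA1}) and (\ref{YBA2}), clear the factor $x/y$, and simplify using the first relation. The singularity of $R''$ is harmless here: the train argument never inverts $R''$, and extracting components only equates matrix entries.

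For the final assertion I would expand the products using $H(x)=A(x)+qx\,\sigma_N^-A(x)\sigma_1^+$ and $E(y)=A'(y)+qy\,\sigma_1^+A'(y)\sigma_N^-$, read off from (\ref{H}) and (\ref{E}). The two terms of order $q^2$ drop out because they contain the adjacent products $(\sigma_1^+)^2=0$ and $(\sigma_N^-)^2=0$ respectively, leaving
\begin{equation}
H(x)E(y)-E(y)H(x)=\big(A(x)A'(y)-A'(y)A(x)\big)+q\,\Xi,
\end{equation}
where $\Xi$ is exactly the difference of the two sides of the fourth relation. Both summands vanish, so $H(x)E(y)=E(y)H(x)$ for all $x,y$, and comparing coefficients of $x^ry^{r'}$ yields $H_rE_{r'}=E_{r'}H_r$. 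The conceptual core, and the only step requiring genuine care, is the inductive train argument establishing the monodromy relation; the subsequent extraction of the four component identities and the $\sigma$-algebra in the last step are mechanical, though tracking the $x/y$ factors and matching each auxiliary component to its identity is the most error-prone bookkeeping.
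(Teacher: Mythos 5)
Your proof is correct and is essentially the argument the paper intends: the paper gives no explicit proof of this corollary, deferring to the standard lifting of the local relation (\ref{YBE3}) to the two monodromy matrices, which is exactly your inductive train argument followed by extraction of the auxiliary-space components (each of the four identities does come out of the components you name, after substituting $B=xA\sigma_1^+$, $C=\sigma_N^-A$, $B'=y\sigma_1^+A'$, $C'=A'\sigma_N^-$ and clearing the $x/y$ factors). Your explicit remark that the singularity of $R''$ is harmless — because the argument never inverts it, and the commutation $H(x)E(y)=E(y)H(x)$ is then verified directly from the first and fourth identities together with $DD'=D'D=0$ rather than by the usual trace trick — correctly supplies the one detail the paper's "in a similar manner as before" glosses over.
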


We have the following functional relation between the transfer matrices $H,E$
which generalises the known relation of the generating functions for
elementary and complete symmetric polynomials in the ring of symmetric
functions.

\begin{proposition}
\label{TQ} We have the operator identity%
\begin{equation}
H(x)E(-x)=1+qx^{N}\prod_{j=1}^{N}\sigma _{j}^{z}\;.  \label{TQ_eqn}
\end{equation}
\end{proposition}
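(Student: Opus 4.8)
The plan is to work entirely with the hopping-operator presentations of the two Yang--Baxter algebras. Substituting $H(x)=A(x)+qx\,\sigma_N^-A(x)\sigma_1^+$ from (\ref{YBA1}) and $E(-x)=A'(-x)-qx\,\sigma_1^+A'(-x)\sigma_N^-$ from (\ref{YBA2}), I would expand $H(x)E(-x)$ into four terms. The term quadratic in $q$ carries the factor $\sigma_1^+\sigma_1^+=(\sigma^+)^2=0$ and drops out at once, leaving
\[
H(x)E(-x)=A(x)A'(-x)+qx\bigl(\sigma_N^- A(x)\sigma_1^+A'(-x)-A(x)\sigma_1^+A'(-x)\sigma_N^-\bigr).
\]

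The first summand is handled by telescoping. Since $A(x)=(1+xf_{N-1})\cdots(1+xf_1)$ and $A'(-x)=(1-xf_1)\cdots(1-xf_{N-1})$ list the generators in opposite order, the innermost pair collapses via $(1+xf_1)(1-xf_1)=1-x^2f_1^2=1$, using $f_i^2=0$ from the nil relations (\ref{nilHecke}); peeling off successive pairs gives $A(x)A'(-x)=1$.

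The heart of the argument is the operator $S:=A(x)\sigma_1^+A'(-x)$, which I would evaluate by induction on the number of nested factors. The key single-step identity is
\[
(1+xf_i)\,\sigma_i^+\,(1-xf_i)=\sigma_i^+ + x\,\sigma_{i+1}^+\sigma_i^z,
\]
which follows from $\sigma_i^+f_i=\sigma_{i+1}^+\sigma_i^+\sigma_i^-$, $f_i\sigma_i^+=\sigma_{i+1}^+\sigma_i^-\sigma_i^+$, the vanishing of the quadratic term $f_i\sigma_i^+f_i$ (again by $(\sigma^+)^2=0$), and the elementary relation $\sigma_i^-\sigma_i^+-\sigma_i^+\sigma_i^-=\sigma_i^z$. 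Because $f_j$ commutes with every factor supported on sites $<j$, wrapping the $j$th pair around the current partial sum leaves all lower terms untouched and only transforms the leading one; the induction then closes to yield
\[
S=\sum_{m=1}^N x^{m-1}\,\sigma_m^+\,\sigma_{m-1}^z\cdots\sigma_1^z.
\]

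Finally, the bracketed expression above is the commutator $[\sigma_N^-,S]=\sum_{m}x^{m-1}[\sigma_N^-,P_m]$ with $P_m=\sigma_m^+\sigma_{m-1}^z\cdots\sigma_1^z$. Each term with $m<N$ vanishes, since $\sigma_N^-$ is supported on a site disjoint from $P_m$; the single surviving term $m=N$ gives $[\sigma_N^-,\sigma_N^+]\prod_{j<N}\sigma_j^z=\sigma_N^z\prod_{j<N}\sigma_j^z=\prod_{j=1}^N\sigma_j^z$. Hence the commutator equals $x^{N-1}\prod_j\sigma_j^z$, and therefore $H(x)E(-x)=1+qx^N\prod_{j=1}^N\sigma_j^z$, as claimed. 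I expect the main obstacle to be the inductive bookkeeping for $S$ --- in particular checking that at each step only the top term of the partial sum is affected and that the powers of $x$ together with the string of $\sigma^z$'s accumulate correctly; once the single-step identity is in hand, everything else reduces to the Pauli relations $(\sigma^+)^2=0$ and $\sigma^-\sigma^+-\sigma^+\sigma^-=\sigma^z$.
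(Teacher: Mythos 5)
Your proof is correct, but it takes a genuinely different route from the paper. The paper proves (\ref{TQ_eqn}) \emph{locally}: it sets $y=-x$ in the mixed Yang--Baxter relation (\ref{YBE3}), observes that the product $L_{13}(x)L'_{23}(-x)$ then preserves the kernel $W$ of the singular matrix $R''(-1)$, and reads off the identity from the resulting block-triangular decomposition of a single pair of $L$-matrices, after which the monodromy/trace construction globalises the statement automatically. You instead work globally in the hopping-operator presentation: the telescoping $A(x)A'(-x)=1$ (valid since $f_i^2=0$ --- note this is the nil Temperley--Lieb relation (\ref{affTLdef}), or just $(\sigma^+)^2=(\sigma^-)^2=0$, rather than the nil-Hecke relation (\ref{nilHecke}) you cite), the vanishing of the $q^2$ term via $\sigma_1^+\sigma_1^+=0$, and the inductive evaluation $A(x)\sigma_1^+A'(-x)=\sum_{m=1}^{N}x^{m-1}\sigma_m^+\sigma_{m-1}^z\cdots\sigma_1^z$, whose commutator with $\sigma_N^-$ leaves only the top term $x^{N-1}\prod_j\sigma_j^z$. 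I checked the single-step identity $(1+xf_i)\sigma_i^+(1-xf_i)=\sigma_i^++x\,\sigma_{i+1}^+\sigma_i^z$ and the induction (lower terms of the partial sum are supported on sites $<i$ and so commute with $f_i$); both are sound. Your approach buys an explicit closed form for the operator $A(x)\sigma_1^+A'(-x)$ and avoids introducing $R''$ altogether, at the cost of being tied to this particular faithful representation of the algebra, whereas the paper's argument is representation-independent: it would survive any other module over the Yang--Baxter algebra built from the same $L$-matrices.
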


\begin{proof}
Decompose $V\otimes V$ into $W\oplus W^{\perp }$ with $W=\text{ker }%
R^{\prime \prime }(-1)$. Namely, $W$ is spanned by $w_{1}=v_{0}\otimes v_{0}$%
, $w_{2}=v_{0}\otimes v_{1}+v_{1}\otimes v_{0}$, $w_{3}=v_{1}\otimes v_{1}$,
while $W^{\perp }=\mathbb{C}\{w_{4}\}$ with $w_{4}=\frac{1}{2}(v_{1}\otimes
v_{0}-v_{0}\otimes v_{1})$. From the Yang-Baxter equation (\ref{YBE3}) for $%
y=-x$ one deduces that $L_{13}(x)L_{23}^{\prime }(-x)W\otimes V\subset
W\otimes V$. Thus, we can block decompose $L_{13}(x)L_{23}^{\prime
}(-x)=\left( 
\begin{smallmatrix}
M & \ast \\ 
0 & M^{\prime }%
\end{smallmatrix}%
\right) $ with respect to $W\oplus W^{\perp }$ and one only needs to verify
that $M,M^{\prime }$ yield the asserted terms on the right hand side of (\ref%
{TQ_eqn}) using that 
\begin{equation*}
L_{13}(x)L_{23}^{\prime }(-x)w_{1}\otimes v_{0}=w_{1}\otimes v_{0},\quad
L_{13}(x)L_{23}^{\prime }(-x)w_{1}\otimes v_{1}=w_{1}\otimes
v_{1}+w_{2}\otimes v_{0}
\end{equation*}%
and 
\begin{equation*}
L_{13}(x)L_{23}^{\prime }(-x)w_{2}\otimes v_{0,1}=L_{13}(x)L_{23}^{\prime
}(-x)w_{3}\otimes v_{0,1}=0\;.
\end{equation*}%
We leave the details of this last step to the reader since it is a simple
computation. 
\end{proof}

\begin{corollary}
\label{cor:iso}Let $\mathcal{A}_{n,k}\subset \limfunc{End}(V_{n,q})$ be the
commutative algebra generated by $\{H_{j}\}_{j=0}^{k}$ and $%
\{E_{i}\}_{i=0}^{n}$. The map $E_{i}\mapsto e_{i}$ and $H_{j}\mapsto h_{j}$
provides a canonical algebra isomorphism $\mathcal{A}_{n,k}\cong qH^{\ast }(%
\limfunc{Gr}_{n,n+k})\otimes _{\mathbb{Z}}\mathbb{C}$.

\begin{proof}
This is clear from our previous results: we have established that the
combinatorial actions (\ref{Haction}) and (\ref{Eaction}). In particular,
one has that $H_{r}|V_{n,q}=0$ for $r>k$ and $E_{r}|V_{n,q}=0$ for $r>n$.
The Yang-Baxter relations (\ref{YBE1}), (\ref{YBE2}), (\ref{YBE3}) provided
us with a proof that the $E_{i}$'s and $H_{j}$'s commute among themselves
and with each other. Finally, the last result (\ref{TQ_eqn}) gives the
desired algebraic dependence expressed in (\ref{Ipoly0}).
\end{proof}
\end{corollary}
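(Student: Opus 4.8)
The plan is to realise the assignment $E_i\mapsto e_i$, $H_j\mapsto h_j$ as a surjective algebra map out of the Siebert--Tian presentation (\ref{qH*}) and then to match ranks so that it becomes an isomorphism. First I would verify that the assignment extends to a well-defined homomorphism. Because the coefficients $\{H_j\}_{j=0}^{k}$ and $\{E_i\}_{i=0}^{n}$ pairwise commute---this is the content of the two integrability corollaries together with the cross-commutation relations (\ref{YBA3}) deduced from (\ref{YBE3})---the rule $e_i\mapsto E_i$, $h_j\mapsto H_j$ extends to a ring homomorphism $\psi$ from $\mathbb{C}[q][e_1,\dots,e_n,h_1,\dots,h_k]$ onto $\mathcal{A}_{n,k}$, surjective since by definition these elements generate $\mathcal{A}_{n,k}$. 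Moreover the combinatorial actions (\ref{Haction}) and (\ref{Eaction}) give $H_r|_{V_{n,q}}=0$ for $r>k$ and $E_r|_{V_{n,q}}=0$ for $r>n$, so exactly the finitely many generators appearing in (\ref{qH*}) are nonzero, matching the presentation.

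Next I would show that $\psi$ annihilates the ideal $\mathcal{I}$, the key input being the functional relation (\ref{TQ_eqn}). Restricting it to the weight subspace $V_{n,q}$, on which $\prod_{j=1}^{N}\sigma_j^z$ acts as the scalar $(-1)^n$ (since $\sigma^z v_\alpha=(-1)^\alpha v_\alpha$ and every word in $\mathcal{B}_n$ carries exactly $n$ one-letters), the identity becomes $H(x)E(-x)|_{V_{n,q}}=1+(-1)^n q x^N$. Expanding in $x$ and using the vanishing just recorded, the coefficient of each $x^r$ is precisely the image under $\psi$ of the corresponding generator of $\mathcal{I}$ coming from (\ref{Ipoly0}), once the standard sign convention identifying the $E$-generating series with $\sum_i e_i x^i$ is applied. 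Hence $\mathcal{I}\subseteq\ker\psi$ and $\psi$ descends to a surjection $\bar\psi\colon qH^\ast(\limfunc{Gr}_{n,n+k})\otimes_{\mathbb{Z}}\mathbb{C}\twoheadrightarrow\mathcal{A}_{n,k}$, which already forces $\operatorname{rank}\mathcal{A}_{n,k}\le\binom{N}{n}=\dim qH^\ast_{\mathbb{C}}$.

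Finally I would produce the matching lower bound, which is where the genuine work lies. Let $|\emptyset\rangle\in V_{n,q}$ be the vacuum vector corresponding to the empty partition, and for $\lambda\in(n,k)$ form the quantum Schubert operators $s_\lambda(H):=\det(H_{\lambda_i-i+j})_{1\le i,j\le n}\in\mathcal{A}_{n,k}$. Using the horizontal-strip action (\ref{Haction}), the specialisation of $s_\lambda(H)|\emptyset\rangle$ at $q=0$ is governed by the classical Jacobi--Trudi identity and the Pieri rule and equals $|\lambda\rangle$, while every surviving $q$-dependent contribution carries a positive power of $q$. Thus the transition matrix from $\{s_\lambda(H)|\emptyset\rangle\}_{\lambda\in(n,k)}$ to the basis $\{|\lambda\rangle\}_{\lambda\in(n,k)}$ has the form $\mathrm{Id}+q(\cdots)$ and is invertible over $\mathbb{C}(q)$. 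The $\binom{N}{n}$ operators $s_\lambda(H)$ are therefore linearly independent, so $\operatorname{rank}\mathcal{A}_{n,k}\ge\binom{N}{n}$. Combining the two bounds, $\bar\psi$ is a surjection of free modules of equal finite rank; tensoring with $\mathbb{C}(q)$ makes it a surjection of equidimensional vector spaces, hence injective, and torsion-freeness of $qH^\ast_{\mathbb{C}}$ upgrades this to an isomorphism over the base ring.

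The main obstacle is this last step: everything preceding it is formal, resting only on commutativity and the single relation (\ref{TQ_eqn}), but pinning the dimension requires the explicit cyclic-vector computation---knowing that (\ref{Haction}) genuinely realises the quantum Pieri rule on $|\emptyset\rangle$ so that the quantum Schubert operators act unitriangularly on the basis $\{|\lambda\rangle\}$. One should also keep careful track of the sign conventions in matching (\ref{TQ_eqn}) with (\ref{Ipoly0}) and of the passage between the base rings $\mathbb{C}[q]$ and $\mathbb{C}[q,q^{-1}]$, but these are routine once the triangularity is established.
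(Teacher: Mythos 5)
Your proposal is correct and follows the same skeleton the paper gestures at: commutativity of the $E_i$'s and $H_j$'s (from the three Yang--Baxter relations), the degree bounds $H_r|_{V_{n,q}}=0$ for $r>k$ and $E_r|_{V_{n,q}}=0$ for $r>n$, and the functional relation (\ref{TQ_eqn}) restricted to $V_{n,q}$ (where $\prod_j\sigma_j^z=(-1)^n$, as you correctly note) together yield a surjection from the Siebert--Tian presentation (\ref{qH*}) onto $\mathcal{A}_{n,k}$. The one place where you genuinely add something is the injectivity: the paper's proof simply declares the isomorphism ``clear'' and never addresses the dimension count, whereas you supply it by showing that the quantum Schubert operators $\det(H_{\lambda_i-i+j})$ act unitriangularly (modulo positive powers of $q$, which strictly lower $|\mu|$ by multiples of $N$) on the cyclic vector $|\emptyset\rangle$, so that $\mathcal{A}_{n,k}$ contains $\binom{N}{n}$ independent elements. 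This is a legitimate way to close the gap; the paper has an alternative route available via the Bethe ansatz (Theorem \ref{completeness} produces $\binom{N}{n}$ joint eigenvectors with distinct eigenvalue tuples, which bounds the dimension of $\mathcal{A}_{n,k}$ from below just as well), but your combinatorial triangularity argument uses only (\ref{Haction}) and is arguably more elementary. Your caveats about the sign bookkeeping between $E(-x)$ and $\sum_i e_ix^i$ in (\ref{Ipoly0}) and about the base rings $\mathbb{C}[q]$ versus $\mathbb{C}[q,q^{-1}]$ are apt, and both issues are indeed routine.
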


\section{Algebraic Bethe ansatz and idempotents}

We show in this section that the eigenvectors of the transfer matrices $H,E$
for the vicious and osculating walker models are the idempotents of the
fusion ring of the gauged WZNW model. The eigenbasis of the affine nil
Temperley-Lieb polynomials $H_{\lambda }:=H_{\lambda _{1}}H_{\lambda
_{2}}\ldots $ and $E_{\lambda }:=E_{\lambda _{1}}E_{\lambda _{2}}\ldots $
with $\lambda _{i}<N$ has been previously constructed in \cite[Section 10]%
{KS} using the free fermion formalism mentioned earlier.

Here we obtain a \emph{new} result: we show that the same eigenbasis can be
obtained from the Yang-Baxter algebras (\ref{YBA1}), (\ref{YBA2}) of the
vicious and osculating walker models by a procedure known as \emph{algebraic
Bethe ansatz}; see e.g. \cite{KIB} for a textbook reference. In partiuclar,
this construction will furnish us with the vertex-type operator formulae (%
\ref{Bethev1}), (\ref{Bethev2}) for Schur functions which differ from the
known expressions; see e.g. \cite[Chapter I, Section 5, Ex 29]{Macdonald}
and references therein.

\subsection{Bethe vectors}

We assume that $q^{\pm 1/N}$ exist and define $\mathfrak{e}_{\lambda }=%
\mathfrak{e}(y_{1}(\lambda ),\ldots ,y_{n}(\lambda ))$ where%
\begin{equation}
\mathfrak{e}(y_{1},\ldots ,y_{n}):=\sum_{\lambda \in (n,k)}s_{\lambda
}(y_{1}^{-1},\ldots ,y_{n}^{-1})|\lambda \rangle  \label{Bethevector}
\end{equation}%
and $y_{j}(\lambda )=q^{\frac{1}{N}}\exp [\frac{2\pi i}{N}J_{j}(\lambda )]$
with $J_{j}(\lambda )=-\tfrac{n+1}{2}+\lambda _{j}+j$ and $\lambda \in (n,k)$%
. We recall from \cite[Section 10]{KS} the following results.

\begin{theorem}[Korff-Stroppel]
\label{completeness} (i) The vectors $\{\mathfrak{e}_{\lambda }\}_{\lambda
\in (n,k)}$ form an orthogonal basis of $V_{n}$ and we have that%
\begin{equation}
||\mathfrak{e}_{\lambda }||^{2}:=\sum_{\mu }s_{\mu }(y_{1},\ldots
,y_{n})s_{\mu }(y_{1}^{-1},\ldots ,y_{n}^{-1})=\frac{2^{\frac{n(1-n)}{2}%
}k(n+k)^{k}}{\prod_{i<j}\sin ^{2}\frac{\pi }{n+k}(\lambda _{i}-\lambda
_{j}+i-j)}\;.
\end{equation}%
(ii) The basis $\{\mathfrak{e}_{\lambda }\}_{\lambda \in (n,k)}$
diagonalises the $H_{\mu }$'s and $E_{\mu }$'s; one has $H_{\mu }\mathfrak{e}%
_{\lambda }=h_{\mu }(y(\lambda ))\mathfrak{e}_{\lambda }$ and $E_{\mu }%
\mathfrak{e}_{\lambda }=e_{\mu }(y(\lambda ))\mathfrak{e}_{\lambda }$ for
all compositions $\mu $ with $\mu _{i}<N$.
\end{theorem}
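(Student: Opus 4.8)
The plan is to re-derive both parts from the algebraic Bethe ansatz carried by the Yang-Baxter algebra of this section, independently of the free-fermion calculation in \cite{KS}. The first step is to realise the Bethe vector (\ref{Bethevector}) as a string of creation operators on the reference state: since $B(x)=xA(x)\sigma_{1}^{+}$ raises the particle number by one while $A(x)$ preserves it, and since by the horizontal-strip lemma the $A_{r}$ enact the Pieri rule for $h_{r}$, one checks from the combinatorial action of $B_{r}$ that $B(x_{n})\cdots B(x_{1})|0\rangle$ is, up to a monomial prefactor, $\sum_{\lambda\in(n,k)}s_{\lambda}(x)\,|\lambda\rangle$, with $|0\rangle$ the empty word in $V_{0}$. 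Putting $x_{i}=y_{i}(\lambda)^{-1}$ identifies this state with $\mathfrak{e}_{\lambda}$, so the Bethe roots are $v_{i}=y_{i}(\lambda)^{-1}$.

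For part (ii) I would compute $H(x)\prod_{i}B(v_{i})|0\rangle$ by commuting $A(x)$ and $D(x)$ rightward through the $B$'s by means of (\ref{YBA1b}). Each such move splits into a wanted term that preserves the argument $x$ and an unwanted term that exchanges $x$ for one of the $v_{i}$; on the vacuum one reads off directly that $A(x)|0\rangle=|0\rangle$ and $D(x)|0\rangle=x^{N}|0\rangle$ from (\ref{YBA1}). Collecting the wanted contributions gives the eigenvalue
\begin{equation*}
\Lambda_{H}(x)=\prod_{i=1}^{n}\frac{1}{1-xy_{i}}+(-1)^{n}qx^{N}\prod_{i=1}^{n}\frac{1}{1-xy_{i}}=\bigl(1+(-1)^{n}qx^{N}\bigr)\sum_{r\geq0}x^{r}h_{r}(y),
\end{equation*}
so that the coefficient of $x^{r}$ is $h_{r}(y(\lambda))$, exactly as claimed. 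The crucial arithmetic input is that $y_{j}(\lambda)=q^{1/N}e^{2\pi iJ_{j}/N}$ all satisfy $y_{j}^{N}=(-1)^{n+1}q$; writing $\{z_{m}\}$ for the complementary $k$ roots of $t^{N}=(-1)^{n+1}q$ one has $\prod_{i}(1-xy_{i})\prod_{m}(1-xz_{m})=1+(-1)^{n}qx^{N}$, whence $\Lambda_{H}(x)=\prod_{m}(1-xz_{m})$ is genuinely a polynomial of degree $k$, consistent with $H_{r}|_{V_{n}}=0$ for $r>k$ from Corollary \ref{cor:iso}. The assertion for the $E_{\mu}$ then follows at once by conjugating with the level-rank involution $\Theta$ and invoking Lemma \ref{levelrank}.

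The step I expect to be the main obstacle is the vanishing of the unwanted terms, i.e.\ the Bethe ansatz equations. Because the $B$-$B$ relation in (\ref{YBA1a}) is the free-fermionic $B(x)B(y)=\tfrac{y}{x}B(y)B(x)$, the unwanted contributions should organise into a single condition per root, and I anticipate that — after balancing the factor $q$ and the sign coming from $D$ against the vacuum eigenvalues $1$ and $x^{N}$ — these collapse to $v_{i}^{N}=(-1)^{n+1}q^{-1}$ together with pairwise distinctness of the $v_{i}$. This is precisely the content of the ansatz, since the $J_{j}(\lambda)$ (equivalently the strictly increasing one-letter positions $\ell_{j}(\lambda)$ of (\ref{part2word})) are distinct and, as $\lambda$ ranges over $(n,k)$, index $\binom{N}{n}$ distinct $n$-element subsets of the $N$-th roots of $(-1)^{n+1}q$. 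Tracking the signs through this cancellation is the delicate bookkeeping. A self-contained alternative bypasses the ansatz equations entirely: from (\ref{Haction}) the horizontal-strip sum implements multiplication by $h_{r}$ in the Schur basis while the boundary-ribbon sum supplies exactly the quantum correction, so the eigenvalue relation reduces to a truncated Cauchy-kernel identity for $\sum_{\mu}s_{\mu}(y^{-1})|\mu\rangle$ that is forced by $y_{j}^{N}=(-1)^{n+1}q$.

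For part (i), orthogonality is biorthogonality: the $\mathfrak{e}_{\mu}$ are right eigenvectors of $H(x)$ with eigenvalue $\prod_{m}(1-xz_{m}(\mu))$, and the dual vectors $\hat{\mathfrak{e}}_{\lambda}=\sum_{\nu}s_{\nu}(y(\lambda))|\nu\rangle$ are the corresponding left eigenvectors (equivalently, those produced by the $C$-operators). Since distinct $\lambda$ give distinct root-sets $\{z_{m}(\lambda)\}$ and hence distinct eigenvalue polynomials, the standard argument $(\Lambda^{(\lambda)}-\Lambda^{(\mu)})\langle\hat{\mathfrak{e}}_{\lambda}|\mathfrak{e}_{\mu}\rangle=0$ yields $\langle\hat{\mathfrak{e}}_{\lambda}|\mathfrak{e}_{\mu}\rangle=0$ for $\lambda\neq\mu$; the count $|(n,k)|=\binom{N}{n}=\dim V_{n}$ then promotes the orthogonal family to a basis. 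The diagonal entries give the norm $\sum_{\mu}s_{\mu}(y)s_{\mu}(y^{-1})$, which at the root-of-unity arguments is a truncated Cauchy (Gaudin) determinant whose Vandermonde factors $|y_{i}-y_{j}|^{2}$ turn into the stated product $\frac{1}{\prod_{i<j}\sin^{2}\frac{\pi}{n+k}(\lambda_{i}-\lambda_{j}+i-j)}$; carrying out this trigonometric evaluation is the remaining computational point.
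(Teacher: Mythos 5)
The paper never actually proves Theorem \ref{completeness}: it is quoted from \cite[Section 10]{KS}, where it is established by a free-fermion computation. The closest thing to an in-paper proof is the independent re-derivation of the eigenvalue equations by the algebraic Bethe ansatz, namely Lemma \ref{lem:Bethe} (which realises $\mathfrak{e}_{\lambda}$ as a $B$-string on $|0\rangle$ up to a monomial) followed by the Proposition giving (\ref{Hspec})--(\ref{Espec}). Your part (ii) is exactly that route: the identification of the Bethe roots as $v_i=y_i(\lambda)^{-1}$, the wanted-term eigenvalue $(1+(-1)^{n}qx^{N})\prod_i(1-xy_i)^{-1}$, its factorisation over the complementary roots of $t^{N}=(-1)^{n-1}q$, and the passage to $E$ via level-rank duality or Prop \ref{TQ} all agree with the paper; the unwanted-term cancellation you defer is also left implicit there, and the condition $v_i^{N}=(-1)^{n+1}q^{-1}$ you anticipate is the correct one. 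Extracting $H_{\mu}\mathfrak{e}_{\lambda}=h_{\mu}(y(\lambda))\mathfrak{e}_{\lambda}$ for $\mu_i<N$ by comparing coefficients of $x^{r}$ is then immediate.

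Part (i) is where the genuine gaps lie. First, your biorthogonality argument requires that $\hat{\mathfrak{e}}_{\lambda}=\sum_{\nu}s_{\nu}(y(\lambda))\langle\nu|$ be a \emph{left} eigenvector of each $H_{r}$ with the \emph{same} eigenvalue $h_{r}(y(\lambda))$. The matrix of $H_{r}$ in the Schubert basis is not symmetric, so this does not follow from part (ii); it needs its own argument (for instance the complementation symmetry $H_{r}^{T}=\mathcal{P}H_{r}\mathcal{P}$ combined with the proportionality $s_{\mu^{\vee}}(y^{-1})\propto s_{\mu}(y)$ at the Bethe roots, or an explicit $C$-operator computation), and you only gesture at it. Second, and more seriously, the displayed norm formula is the quantitative content of part (i) and you do not derive it: the sum $\sum_{\mu\in(n,k)}s_{\mu}(y)s_{\mu}(y^{-1})$ runs only over the bounding box, so it is not the Cauchy identity, and its evaluation at the roots of $y^{N}=(-1)^{n-1}q$ into the stated product of inverse squared sines is a Gaudin-type computation that exploits the cyclic symmetry of Schur functions at these special points. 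Until that calculation is carried out, the formula is asserted rather than proved. The remaining ingredients of your plan — distinctness of the roots (equivalently of the positions $\ell_{j}(\lambda)$), distinctness of the eigenvalue polynomials for distinct $\lambda$, and the count $\binom{N}{n}=\dim V_{n}$ upgrading the family to a basis — are correct.
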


The following identities which connect the eigenvectors with the Yang-Baxter
algebras are new and are not contained in \cite[Section 10]{KS}.

\begin{lemma}
\label{lem:Bethe} Let $|0\rangle $, $|N\rangle $ be the basis vectors in the
one-dimensional spaces $V_{0}$ and $V_{N}$. We have the following expansions,%
\begin{equation}
B(x_{1})\cdots B(x_{n})|0\rangle =x^{\delta _{n}}\sum_{\lambda \in
(n,k)}s_{\lambda }(x)|\lambda \rangle  \label{Bethev1}
\end{equation}%
with $\delta _{n}=(n,n-1,\ldots ,1)$ and%
\begin{equation}
C^{\prime }(x_{1})\cdots C^{\prime }(x_{k})|N\rangle =x^{\rho
_{k}}\sum_{\lambda \in (n,k)}s_{\lambda ^{\prime }}(x)|\lambda \rangle \;,
\label{Bethev2}
\end{equation}%
where $\rho _{k}=(k-1,\ldots ,2,1)$.
\end{lemma}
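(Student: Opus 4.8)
The plan is to prove the first expansion (\ref{Bethev1}) by combining the Yang--Baxter commutation relations with the combinatorial action of $A(x)$, and then to deduce the second expansion (\ref{Bethev2}) from it by conjugating with the level--rank duality $\Theta$ of Lemma \ref{levelrank}.

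For (\ref{Bethev1}) I would first use the quadratic relation $B(x)B(y)=\tfrac{y}{x}B(y)B(x)$ from (\ref{YBA1a}) to pin down the shape of the answer: it forces the coefficient $\langle\lambda|B(x_1)\cdots B(x_n)|0\rangle$ to be a fixed staircase monomial (the prefactor $x^{\delta_n}$) times a polynomial that is \emph{symmetric} in $x_1,\ldots,x_n$, so it only remains to identify that symmetric polynomial with $s_\lambda$. For this I would expand $B(x_i)=x_iA(x_i)\sigma_1^+$ using (\ref{YBA1}) and feed in the two elementary actions established earlier: $A(x)=\sum_r x^rA_r$ adds horizontal $r$-strips by (\ref{Ar}), while $\sigma_1^+$ removes a full leftmost column, sending $|\mu\rangle$ to $|\mu-(1^{m})\rangle$ when $\mu$ has a full first column and to $0$ otherwise, shifting the bounding box accordingly. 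Reading off the coefficient of $|\lambda\rangle$ then produces a sum over chains $\emptyset=\mu^{(0)}\subset\mu^{(1)}\subset\cdots\subset\mu^{(n)}=\lambda$ in which each skew shape $\mu^{(i)}/\mu^{(i-1)}$ is a horizontal strip weighted by $x_i$; such chains are exactly the semistandard tableaux of shape $\lambda$, so the symmetric factor is $\sum_T x^{\mathrm{wt}(T)}=s_\lambda$. Concretely this is cleanest to organise as an induction on $n$, peeling off the outermost $B$ and invoking the branching rule $s_\lambda(x_1,\ldots,x_n)=\sum_{\lambda/\nu\ \text{horizontal}}x^{|\lambda/\nu|}s_\nu$: in the inductive step $\sigma_1^+$ carries the intermediate state from the box $(n-1,k+1)$ to $(n,k)$ and, because $s_{\nu+(1^{n-1})}=e_{n-1}\,s_\nu$, contributes exactly the extra factor $e_{n-1}$ of the remaining variables needed to advance the staircase by one step, $A$ realises the horizontal strip, and the explicit spectral parameter supplies the last power. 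The base case $n=1$ is the direct computation $B(x_1)|0\rangle=\sum_r x_1^{r+1}|(r)\rangle=x_1\sum_{\lambda\in(1,k)}s_\lambda(x_1)|\lambda\rangle$.

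For (\ref{Bethev2}) I would pass to the dual model by conjugating with $\Theta$. Since $\Theta f_i=f_{N-i}\Theta$ by (\ref{ftrans}), comparison of (\ref{Ar}) with (\ref{A'r}) gives $\Theta A(x)\Theta^{-1}=A'(x)$; moreover $\Theta$ is the complement map of (\ref{01bijections}), so $\Theta\sigma_1^+\Theta^{-1}=\sigma_N^-$, $\Theta|0\rangle=|N\rangle$ and $\Theta|\nu\rangle=|\nu'\rangle$. With $B(x)=xA(x)\sigma_1^+$ and $C'(x)=A'(x)\sigma_N^-$ from (\ref{A'})--(\ref{YBA2}) this yields $\Theta B(x)\Theta^{-1}=x\,C'(x)$, hence
\begin{equation*}
C'(x_1)\cdots C'(x_k)|N\rangle=(x_1\cdots x_k)^{-1}\,\Theta\,B(x_1)\cdots B(x_k)|0\rangle .
\end{equation*}
Applying the first identity, now with $k$ creation operators (so the intermediate partitions run over the box $(k,n)$), and then $\Theta|\nu\rangle=|\nu'\rangle$, turns the right-hand side into $(x_1\cdots x_k)^{-1}x^{\delta_k}\sum_{\lambda\in(n,k)}s_{\lambda'}(x)|\lambda\rangle$, and since $x^{\delta_k}/(x_1\cdots x_k)=x^{\rho_k}$ this is precisely (\ref{Bethev2}).

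The main obstacle I anticipate is the bookkeeping in the inductive step for (\ref{Bethev1}): one has to verify that the column-removal bijection, the factor $e_{n-1}$, and the change of bounding box $(n-1,k+1)\to(n,k)$ combine to give exactly the staircase $x^{\delta_n}$ (the commutation relation (\ref{YBA1a}) is the clean way to see its shape), and that nothing is lost when $\sigma_1^+$ annihilates the partitions without a full first column. The only delicate point in the duality argument is the identity $\Theta\sigma_1^+\Theta^{-1}=\sigma_N^-$, which involves the reversal built into the complement map but reduces to a one-line check on basis vectors.
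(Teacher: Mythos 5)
Your proof is correct, and for the first identity it takes a genuinely different route from the paper. The paper's argument for (\ref{Bethev1}) is graphical: it draws a diagonal line across the lattice, asserts (Claim \ref{claim}) that the vicious-walker configurations below the line are in bijection with semistandard tableaux of shape $\lambda$ with weight $x^{\alpha(T)}$, and defers the proof of that claim to the path--tableau bijection of Proposition \ref{prop:bijection} in Section \ref{sec:bijections}; the staircase prefactor is read off from the vertices above the diagonal. You instead give a self-contained algebraic induction on $n$: the quadratic relation in (\ref{YBA1a}) fixes the staircase-times-symmetric shape of the coefficients, $B=xA\sigma_1^+$ together with the horizontal-strip action (\ref{Ar}) and the column-removal action of $\sigma_1^+$ reduces the inductive step to the branching rule $s_{\lambda}(x_1,\dots,x_n)=\sum_{\nu}x_n^{|\lambda/\nu|}s_{\nu}(x_1,\dots,x_{n-1})$ (sum over horizontal strips $\lambda/\nu$), with the factor $e_{n-1}=x_1\cdots x_{n-1}$ from $s_{\nu+(1^{n-1})}=e_{n-1}s_\nu$ advancing the staircase. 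Both arguments rest on the same combinatorial fact (chains of horizontal strips are SSYT), but yours avoids the forward reference to Section \ref{sec:bijections} and makes the origin of $x^{\delta_n}$ transparent; the paper's version has the advantage of generalising directly to the toric/cylindric setting needed later. Your treatment of (\ref{Bethev2}) via $\Theta B(x)\Theta=xC'(x)$ is exactly the paper's, and your correction $\Theta\sigma_1^+\Theta=\sigma_N^-$ (rather than the $\sigma_{N-i}^-$ printed in the text) is the right one. One caveat worth recording: your own symmetry argument via (\ref{YBA1a}) shows that $\langle\lambda|B(x_1)\cdots B(x_n)|0\rangle$ is $x_1^1x_2^2\cdots x_n^n\,s_\lambda(x)$, i.e.\ the \emph{reversed} staircase, whereas the product $B(x_n)\cdots B(x_1)|0\rangle$ used in the paper's proof and in Proposition \ref{Prop:cop} carries $x_1^nx_2^{n-1}\cdots x_n$; the lemma's stated ordering and its $x^{\delta_n}$ are mutually inconsistent up to this reversal, and your induction only closes with the ordering $B(x_n)\cdots B(x_1)$ (peel off the operator carrying the newest variable). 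This is an off-by-ordering issue inherited from the paper, not a gap in your argument.
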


So in particular we have the following identities between matrix elements
and Schur functions, 
\begin{equation}
s_{\lambda }(x_{1},\ldots ,x_{n})=x^{-\delta }\langle \lambda
|B(x_{1})\cdots B(x_{n})|0\rangle =x^{-\rho _{n}}\langle \lambda ^{\prime
}|C^{\prime }(x_{1})\cdots C^{\prime }(x_{n})|N\rangle \;.
\end{equation}

\begin{proof}
The proof is graphical. Draw a diagonal line across the square lattice as
indicated in Figure \ref{fig:Bop}.

\begin{claim}
\label{claim} The vicious walker configurations shown on the right in Figure %
\ref{fig:Bop} are in bijection with semi-standard tableaux $T$ of shape $%
\lambda \in (n,k)$ where $\ell _{i}(\lambda ),\,i=1,\ldots n$ defined in (%
\ref{part2word}) are the end positions of the walkers and the corresponding
weight of such configurations is $x^{\alpha }$, where $\alpha \in \mathbb{Z}%
_{\geq 0}^{n}$ is the weight of the tableau.
\end{claim}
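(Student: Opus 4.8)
The plan is to prove the bijection graphically, by decomposing the lattice for $B(x_1)\cdots B(x_n)|0\rangle$ into its $n$ horizontal rows and reading off a semi-standard tableau row by row. Writing $B(x_i)=x_iA(x_i)\sigma_1^+$, I would interpret each factor as a single lattice row with spectral parameter $x_i$: the operator $\sigma_1^+$ injects a new walker at the left boundary, while $A(x_i)=(1+x_if_{N-1})\cdots(1+x_if_1)$ propagates the walkers across the row. By the horizontal-strips Lemma, the net effect of one such row on a basis vector $|\mu\rangle$ is to add a horizontal strip, so that the successive intermediate states define a chain of partitions $\emptyset=\mu^{(0)}\subseteq\mu^{(1)}\subseteq\cdots\subseteq\mu^{(n)}=\lambda$ in which every skew shape $\mu^{(i)}/\mu^{(i-1)}$ is a horizontal strip.

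First I would record that such chains are exactly the semi-standard tableaux of shape $\lambda$: labelling every box of $\mu^{(i)}/\mu^{(i-1)}$ with the entry $i$ yields a filling that is weakly increasing along rows and strictly increasing down columns, and the construction is reversible. Since the terminal word is $w(\lambda)$, whose $1$-letters sit precisely at the positions $\ell_i(\lambda)$ of (\ref{part2word}), the walkers indeed terminate at $\ell_1(\lambda),\ldots,\ell_n(\lambda)$, and the requirement that the paths stay inside the $n\times k$ lattice is exactly the condition $\lambda\in(n,k)$. Because each configuration is determined by, and determines, its chain $\mu^{(0)}\subseteq\cdots\subseteq\mu^{(n)}$, this correspondence is a genuine bijection and not merely a weight-preserving surjection.

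The remaining point is the weight bookkeeping, which is where the diagonal line of Figure \ref{fig:Bop} enters and which I expect to be the main obstacle. The weight of a configuration is the product of its vertex weights, contributing $x_i$ raised to the number of horizontal edges traversed in row $i$; together with the explicit prefactors $x_i$ of the $B$-operators this produces the monomial $x^{\delta_n}x^{\alpha}$. The diagonal line isolates a frozen triangular region in which the successively created walkers are forced to climb past one another without intersecting: this region carries a contribution $x^{\delta_n}$ with $\delta_n=(n,n-1,\ldots,1)$ that is the same for every configuration, whereas the part of the diagram drawn to the right of the diagonal is free and carries exactly $x^{\alpha}$, where $\alpha=\mathrm{wt}(T)$ is the weight of the associated tableau $T$. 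After pulling out the staircase factor $x^{\delta_n}$, each configuration therefore has weight $x^{\alpha}$, which is the assertion of the Claim; summing over all semi-standard tableaux $T$ of shape $\lambda$ then reproduces $x^{\delta_n}s_\lambda(x)$ and hence (\ref{Bethev1}).
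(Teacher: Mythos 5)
Your argument is correct and is essentially the paper's own proof: the paper defers Claim \ref{claim} to the row-by-row horizontal-strip bijection of Proposition \ref{prop:bijection} and, exactly as you propose, isolates the fixed staircase factor $x^{\delta_n}$ contributed by the vertices above the diagonal, leaving the region below it to encode a semi-standard tableau of weight $x^{\alpha}$. One point to tighten: the raw labels of the intermediate states $B(x_i)\cdots B(x_1)|0\rangle$ are partitions in the shrinking boxes $(i,N-i)$ related by ``apply $\sigma_1^+$, then add a horizontal strip,'' so they do not literally satisfy $\mu^{(i-1)}\subseteq\mu^{(i)}$ (e.g.\ for $n=2$ one has $\mu^{(1)}=(r)$ and $\mu^{(2)}_2\leq r-1\leq\mu^{(2)}_1$, which permits $r>\mu^{(2)}_1$); the genuine semi-standard chain only appears after stripping off the staircase columns, i.e.\ reading positions relative to the diagonal, which your weight bookkeeping implicitly does but your bijection paragraph states too loosely.
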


We postpone the proof of this claim to the next section where we establish a
bijection between non-intersecting paths on the cylinder and toric Young
tableaux. The claim will then follow as a special case by considering only
those vertex weights which lie below the dotted line and noting that the
vertices above the line contribute the total weight factor $%
x_{1}^{n}x_{2}^{n-1}\cdots x_{n}$. Thus, redrawing the lattice paths as
indicated in Figure \ref{fig:Bop} and using the known sum formula $%
s_{\lambda }(x)=\sum_{|T|=\lambda }x^{T}$ (see e.g. \cite[Chapter I]%
{Macdonald}) we arrive at the desired expression 
\begin{equation*}
B(x_{n})\cdots B(x_{1})|0\rangle =x_{1}^{n}x_{2}^{n-1}\cdots
x_{n}\sum_{\lambda \in (n,k)}s_{\lambda }(x)|\lambda \rangle \;.
\end{equation*}

\begin{figure}[tbp]
\begin{equation*}
\includegraphics[scale=0.45]{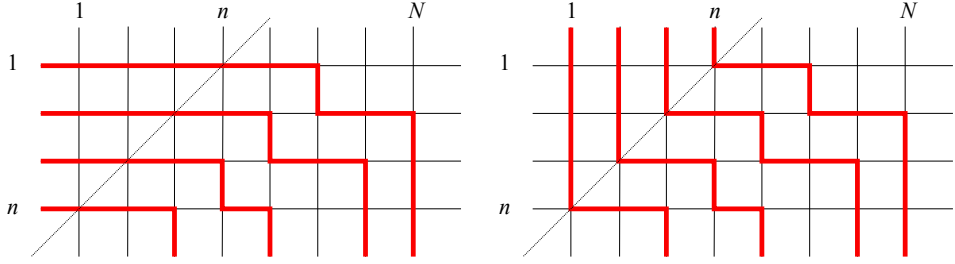}
\end{equation*}%
\caption{Vicious walker configurations for the $B$-operator (left) and the $%
A $-operator (right).}
\label{fig:Bop}
\end{figure}

The second identity simply uses the fact that $\Theta \sigma _{i}^{+}\Theta
=\sigma _{N-i}^{-}$ which is easily verified. Thus, from $\Theta A(x)\Theta
=A^{\prime }(x)$ (see Lemma \ref{levelrank} for $q=0$) and (\ref{YBA1}), (%
\ref{YBA2}) it follows that $\Theta B(x)\Theta =xC^{\prime }(x)$. Applying
the involution $\Theta $ on both sides of the previous identity and %
swapping $n$ and $k$ afterwards, the second assertion is proved.
\end{proof}

\begin{proposition}
Fix $0\leq n<N$. For each $\lambda \in (n,k)$ the vector $\mathfrak{e}%
_{\lambda }$ is an eigenvector of $H=A+qD$ and $E=A^{\prime }+qD^{\prime }$
with%
\begin{eqnarray}
H(x)\mathfrak{e}_{\lambda } &=&(1+(-1)^{n}qx^{N})\prod_{i=1}^{n}\frac{1}{%
1-xy_{i}(\lambda )}~\mathfrak{e}_{\lambda }\;,  \label{Hspec} \\
E(x)\mathfrak{e}_{\lambda } &=&\prod_{i=1}^{n}(1+xy_{i}(\lambda ))~\mathfrak{%
e}_{\lambda }.  \label{Espec}
\end{eqnarray}%
For $n=N$ the eigenvalue equations simplify to $H(x)|N\rangle =|N\rangle $
and $E(x)|N\rangle =(1+qx^{N})|N\rangle $. 
\end{proposition}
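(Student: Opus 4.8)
The plan is to deduce both eigenvalue equations from three ingredients already available: the diagonalization of the individual transfer-matrix coefficients in Theorem~\ref{completeness}(ii), the vanishing ranges $E_r|_{V_n}=0$ for $r>n$ and $H_r|_{V_n}=0$ for $r>k$, and the functional relation of Proposition~\ref{TQ}. The crucial point to keep in mind is that the statement $H_\mu\mathfrak{e}_\lambda=h_\mu(y(\lambda))\mathfrak{e}_\lambda$, $E_\mu\mathfrak{e}_\lambda=e_\mu(y(\lambda))\mathfrak{e}_\lambda$ is only available for compositions with parts strictly less than $N$; the one missing top coefficient is exactly the term carrying the quantum contribution $qx^N$, which is supplied cleanly by the functional equation rather than by naively summing a generating series. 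I would treat $0\le n<N$ first and dispatch the boundary case $n=N$ by a direct computation.

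First I would establish (\ref{Espec}). Since $n<N$, in the expansion $E(x)=\sum_{r=0}^N x^rE_r$ every coefficient that acts nontrivially on $V_n$ has index $r\le n<N$, because $E_r|_{V_n}=0$ for $r>n$. Hence Theorem~\ref{completeness}(ii) applies to each surviving term and gives $E(x)\mathfrak{e}_\lambda=\sum_{r=0}^n e_r(y(\lambda))x^r\,\mathfrak{e}_\lambda=\prod_{i=1}^n(1+xy_i(\lambda))\,\mathfrak{e}_\lambda$, using the elementary generating identity $\sum_{r\ge0}e_r(y)x^r=\prod_i(1+xy_i)$. In particular $E(-x)\mathfrak{e}_\lambda=\prod_{i=1}^n(1-xy_i(\lambda))\,\mathfrak{e}_\lambda$.

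For (\ref{Hspec}) I would first note that $H(x)\mathfrak{e}_\lambda$ is a scalar multiple of $\mathfrak{e}_\lambda$: for $1\le n<N$ one has $k=N-n<N$, so each $H_r$ with $0\le r\le k$ fixes $\mathfrak{e}_\lambda$ up to the scalar $h_r(y(\lambda))$ by Theorem~\ref{completeness}(ii), while $H_r|_{V_n}=0$ for $r>k$; the case $n=0$ is immediate since $\dim V_0=1$. Write $H(x)\mathfrak{e}_\lambda=c(x)\mathfrak{e}_\lambda$. Restricting Proposition~\ref{TQ} to $V_n$ and using that $\prod_{j=1}^N\sigma^z_j=(-1)^n$ on $V_n$ (a word with $n$ one-letters), the identity (\ref{TQ_eqn}) becomes $H(x)E(-x)=1+(-1)^nqx^N$ on $V_n$. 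Applying this to $\mathfrak{e}_\lambda$ and inserting the eigenvalue of $E(-x)$ from the previous step yields $c(x)\prod_{i=1}^n(1-xy_i(\lambda))=1+(-1)^nqx^N$, whence $c(x)=(1+(-1)^nqx^N)\prod_{i=1}^n(1-xy_i(\lambda))^{-1}$, which is exactly (\ref{Hspec}). As a consistency check, the Bethe values satisfy $y_j(\lambda)^N=(-1)^{n+1}q$ — this follows at once from $y_j(\lambda)=q^{1/N}\exp[\tfrac{2\pi i}{N}J_j(\lambda)]$ with $J_j=-\tfrac{n+1}{2}+\lambda_j+j$ — so each $1/y_i$ is a root of $1+(-1)^nqx^N$ and $\prod_i(1-xy_i)$ divides it; thus $c(x)$ is genuinely a polynomial of degree $k$, matching $\deg_x H(x)|_{V_n}\le k$.

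Finally, for $n=N$ one has $V_N=\mathbb{C}|N\rangle$ and $\mathfrak{e}_\emptyset=|N\rangle$, and now both $E_N$ and $H_N$ lie outside the range of Theorem~\ref{completeness}, so I would simply compute on $|N\rangle=v_1^{\otimes N}$ using (\ref{YBA1}), (\ref{A'}), (\ref{YBA2}): every $f_i|N\rangle=0$ gives $A(x)|N\rangle=A'(x)|N\rangle=|N\rangle$, while $\sigma_1^+|N\rangle=0$ forces $D(x)|N\rangle=0$, hence $H(x)|N\rangle=|N\rangle$; for $E$ the only surviving contribution of $D'(x)$ transports the single $0$-letter from site $N$ back to site $1$, contributing the factor $qx^N$, so $E(x)|N\rangle=(1+qx^N)|N\rangle$. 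The main obstacle here is conceptual rather than computational: one must resist the temptation to read off $H(x)\mathfrak{e}_\lambda$ as $\sum_{r\ge0}h_r(y(\lambda))x^r\,\mathfrak{e}_\lambda$, since $H(x)|_{V_n}$ is only a degree-$k$ polynomial and this naive series both overshoots in degree and silently drops the affine $qx^N$ term. Correct bookkeeping of that top, \emph{affine} coefficient is precisely what Proposition~\ref{TQ} encodes, and once it is invoked the remainder is a routine assembly of the generating-function identities together with the known action of the $\sigma^z_j$.
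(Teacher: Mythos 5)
Your proof is correct, but it takes a genuinely different route from the paper's. The paper proves the proposition by the algebraic Bethe ansatz proper: starting from $A(x)|0\rangle=|0\rangle$ and $D(x)|0\rangle=x^{N}|0\rangle$, it pushes $A(x)$ and $D(x)$ through the product $B(y_{n})\cdots B(y_{1})$ via the commutation relations (\ref{YBA1b}); the cancellation of the ``unwanted'' terms forces the Bethe equations (\ref{bae}), $y_{i}^{N}=(-1)^{n-1}q$, and the eigenvalue of $E$ is then obtained either by the mirror computation with $C'$ or---as you do---from Proposition \ref{TQ}. You instead take the diagonalisation statement of Theorem \ref{completeness}(ii), imported from \cite{KS}, as the starting point, note that the vanishing ranges $E_{r}|_{V_{n}}=0$ for $r>n$ and $H_{r}|_{V_{n}}=0$ for $r>k$ place every surviving coefficient within the scope of that theorem (since $n,k<N$), read off the eigenvalue of $E(x)$ as $\prod_{i}(1+xy_{i}(\lambda))$, and then solve the functional relation (\ref{TQ_eqn}) (with $\prod_{j}\sigma_{j}^{z}=(-1)^{n}$ on $V_{n}$) for the eigenvalue of $H(x)$, thereby capturing the affine $qx^{N}$ term that a naive summation of $\sum_{r}h_{r}(y(\lambda))x^{r}$ over all $r$ would mishandle. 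All the ingredients you invoke are established earlier in the paper or cited from \cite{KS}, so there is no circularity, and your handling of the boundary cases $n=0$ and $n=N$ by direct computation is correct. What your route buys is brevity and a clean isolation of the affine correction inside Proposition \ref{TQ}; what it gives up is the stated point of this section, namely an independent re-derivation of the eigenbasis from the Yang-Baxter algebra by the Bethe ansatz, not resting on the diagonalisation result of \cite{KS}.
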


\begin{proof}
Start with $n=1$. Note that $A(x)|0\rangle =|0\rangle $ and $D(x)|0\rangle
=x^{N}|0\rangle $. According to (\ref{YBA1b}) we then find that 
\begin{equation*}
B(y)|0\rangle =\sum_{r>0}y^{r}|0\cdots 0\underset{r}{1}0\cdots 0\rangle
\end{equation*}%
is an eigenvector of $H(x)$ with eigenvalue $(1+qx^{N})/(1-x/y)$ provided
that $y^{N}q=1$. This computation generalizes to $n>1$, using an induction
argument one finds with the help of the equations (\ref{YBA1b}),%
\begin{eqnarray*}
A(x)B(y_{n})\cdots B(y_{1}) &=&B(y_{n})\cdots B(y_{1})A(x)\prod_{i=1}^{n}%
\frac{y_{i}}{y_{i}-x} \\
&&-\sum_{i=1}^{n}B(y_{n})\cdots \underset{i}{B(x)}\cdots B(y_{1})A(y_{i})%
\frac{y_{i}}{y_{i}-x}\prod_{j\neq i}\frac{y_{i}y_{j}}{y_{j}-y_{i}}
\end{eqnarray*}%
%
%
%
%
%
%
%
%
%
%
%
%
%
%
and%
\begin{eqnarray*}
D(x)B(y_{n})\cdots B(y_{1}) &=&B(y_{n})\cdots B(y_{1})D(x)\prod_{i=1}^{n}%
\frac{y_{i}}{x-y_{i}} \\
&&-\sum_{i=1}^{n}B(y_{n})\cdots \underset{i}{B(x)}\cdots B(y_{1})D(y_{i})%
\frac{y_{i}}{x-y_{i}}\prod_{j\neq i}\frac{y_{i}y_{j}}{y_{i}-y_{j}}\;,
\end{eqnarray*}%
%
%
%
%
%
%
%
%
%
%
%
%
%
%
Employing these identities one deduces that for $\mathfrak{e}(y)$ to be an
eigenvector the so-called Bethe roots $y_{i}$ have to satisfy the following
set of constraints,%
\begin{equation}
y_{1}^{N}=\cdots =y_{n}^{N}=(-1)^{n-1}q\;.  \label{bae}
\end{equation}%
The explicit solution to these equations is easily obtained and can be found
in \cite[Prop 10.4]{KS}.

To arrive at the eigenvalue equation for $E$ one can either perform a
similar computation using the operators $C^{\prime }$ introduced earlier and
employing level-rank duality or employ Prop \ref{TQ}.

The statement for $n=N$ is obvious and follows from the definition of $H,E$.
\end{proof}

Let $\mathcal{P}$ be the linear operator $V^{\otimes N}\rightarrow
V^{\otimes N}$ defined by $\mathcal{P}|\lambda \rangle =|\lambda ^{\vee
}\rangle $ and $\mathcal{T}$ the linear operator defined by $\mathcal{T}%
|\lambda \rangle =|\lambda \rangle $ and $\mathcal{T}q=q^{-1}\mathcal{T}$
for each $\lambda \in (n,k)$ with $N=n+k$. We introduce the operators 
\begin{equation}
H^{\ast }(x)=\mathcal{PT}~H(x)~\mathcal{PT}\quad \text{and}\quad E^{\ast
}(x)=\mathcal{PT}~E(x)~\mathcal{PT\;}.  \label{H*&E*}
\end{equation}%
Note that it follows from the definition that we have the identity $H^{\ast
}(x)=A^{\ast }(x)+q^{-1}D^{\ast }(x)$ with%
\begin{equation*}
L_{01}(x)L_{02}(x)\cdots L_{0N}(x)=\left( 
\begin{array}{cc}
A^{\ast }(x) & B^{\ast }(x) \\ 
C^{\ast }(x) & D^{\ast }(x)%
\end{array}%
\right) \;.
\end{equation*}%
An analogous formula holds for $E^{\ast }(x)$.

\begin{lemma}
We have the dual affine Pieri rules%
\begin{equation}
H_{r}^{\ast }|\mu \rangle =\sum_{\mu -\lambda =(r)}|\lambda \rangle
~+q^{-1}\sum_{\mu \lbrack 1]-\lambda =(r)}|\lambda \rangle
\label{dualPieri1}
\end{equation}%
and%
\begin{equation}
E_{r}^{\ast }|\mu \rangle =\sum_{\mu -\lambda =(1^{r})}|\lambda \rangle
~+q^{-1}\sum_{\mu \lbrack 1]-\lambda =(1^{r})}|\lambda \rangle \;,
\label{dualPieri2}
\end{equation}%
where the notation $\mu \lbrack 1]$ in the second sum in both formulae
stands for the partion obtained by adding a boundary rim hook of length $n+k$
to $\mu $.
\end{lemma}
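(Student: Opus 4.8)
The plan is to prove both dual Pieri rules by conjugating the affine Pieri rules (\ref{Haction}) and (\ref{Eaction}) with the involution $\mathcal{PT}$. First I would record the elementary bookkeeping: $\mathcal{P}^{2}=\mathcal{T}^{2}=1$ and $\mathcal{P}$, $\mathcal{T}$ commute, so $\mathcal{PT}$ is an involution; $\mathcal{T}$ fixes every basis vector while interchanging $q$ and $q^{-1}$; and $\mathcal{PT}\,|\mu\rangle=|\mu^{\vee}\rangle$. Hence $H_{r}^{\ast}|\mu\rangle=\mathcal{PT}\,H_{r}\,|\mu^{\vee}\rangle$, and I would substitute the known action (\ref{Haction}) of $H_{r}$ on $|\mu^{\vee}\rangle$ and then push $\mathcal{PT}$ through the resulting sum, noting that conjugation by the involution $\mathcal{PT}$ is an algebra automorphism and so preserves the order of products.

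For the first (non-wrapping) term this is routine: $\mathcal{T}$ leaves the coefficient $1$ untouched, $\mathcal{P}$ sends each $|\lambda\rangle$ to $|\lambda^{\vee}\rangle$, and the key combinatorial input is that complementation in the $n\times k$ box reverses containment and carries horizontal $r$-strips to horizontal $r$-strips. Re-indexing by $\sigma=\lambda^{\vee}$ then turns the condition $\lambda/\mu^{\vee}=(r)$ into $\mu/\sigma=(r)$, i.e. the term $\mu-\sigma=(r)$ of (\ref{dualPieri1}). For $E_{r}^{\ast}$ the same argument applies with horizontal strips replaced by vertical strips $(1^{r})$; there I would use that conjugation of partitions commutes with complementation, $(\lambda^{\vee})'=(\lambda')^{\vee}$, so that $\vee$ preserves vertical $r$-strips as well. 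The factor $q^{-1}$ in front of the wrap term is produced automatically by $\mathcal{T}$ acting on the explicit $q^{\alpha_{N}}$ appearing in (\ref{H}), (\ref{E}).

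The hard part, which I expect to be the main obstacle, is the wrap-around term: the naive statement ``$\vee$ intertwines $[1]$ with $[1]$'' is obstructed because the border strip of length $N$ added in $\lambda\mapsto\lambda[1]$ spans $k+1$ columns, so $\lambda[1]$ generally leaves the $n\times k$ box and the clean complement-of-a-strip argument no longer applies verbatim. Rather than arguing abstractly that $(\sigma^{\vee})[1]/\mu^{\vee}$ is a horizontal $r$-strip iff $\mu[1]/\sigma$ is, I would sidestep the issue by conjugating the \emph{operator} formula (\ref{H}) directly. Using the relations (\ref{ftrans}), namely $\vee\circ f_{i}=f_{N-i}^{\ast}\circ\vee$, together with $\vee\,\sigma_{1}^{+}\,\vee=\sigma_{N}^{+}$ and $\vee\,\sigma_{N}^{-}\,\vee=\sigma_{1}^{-}$ (both checked immediately from the word-reversal description of $\vee$ in (\ref{01bijections})), the wrap summand $q\,\sigma_{N}^{-}f_{N-1}^{\alpha_{N-1}}\cdots f_{1}^{\alpha_{1}}\sigma_{1}^{+}$ of $H_{r}$ becomes $q^{-1}\,\sigma_{1}^{-}(f_{1}^{\ast})^{\alpha_{1}}\cdots(f_{N-1}^{\ast})^{\alpha_{N-1}}\sigma_{N}^{+}$. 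This is precisely the expansion of the reversed monodromy entry $A_{r}^{\ast}+q^{-1}D_{r}^{\ast}$ furnished by the factorisation of $L_{01}(x)\cdots L_{0N}(x)$, and its combinatorial action can be read off exactly as in the proof of (\ref{Haction}): the adjoint hopping operators $f_{i}^{\ast}$ remove boxes, giving the ``remove a horizontal $r$-strip from $\mu$'' contribution, while the boundary pair $\sigma_{1}^{-}\cdots\sigma_{N}^{+}$ implements the inverse wrap and yields exactly the $\mu[1]$ term.

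Finally I would assemble the two contributions to obtain (\ref{dualPieri1}) and run the identical computation for $E_{r}^{\ast}$ starting from (\ref{E}), invoking the level-rank symmetry of Lemma \ref{levelrank} (or the conjugate-strip version of the above) to pass to the dual form and land on (\ref{dualPieri2}). The only genuinely new verification beyond transcription is the behaviour of the boundary operators $\sigma_{1}^{\pm},\sigma_{N}^{\mp}$ under $\vee$, for which one must keep track of the quasi-periodic signs $\sigma_{i+N}^{\pm}=-q^{\pm1}\sigma_{i}^{\pm}$; everything else reduces to the already-established actions of $f_{i}$ and $f_{i}^{\ast}$.
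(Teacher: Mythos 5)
Your proposal is correct and follows essentially the same route as the paper: the paper's (very terse) proof likewise conjugates the explicit polynomial expressions (\ref{H}), (\ref{E}) by $\mathcal{PT}$, using the intertwining relation $\mathcal{P}\,\sigma_{i}^{+}\sigma_{i+1}^{-}=\sigma_{N-i}^{-}\sigma_{N+1-i}^{+}\,\mathcal{P}$ (i.e.\ the first relation of (\ref{ftrans})) and then reads off the combinatorial action exactly as for $H_{r}$ and $E_{r}$. Your additional observations --- that the non-wrap term can also be handled by box-complementation of strips, and that the conjugated operator is the reversed monodromy entry $A_{r}^{\ast}+q^{-1}D_{r}^{\ast}$ --- are consistent with, and merely flesh out, the paper's one-line argument.
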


\begin{proof}
Simply note that $\mathcal{P}\sigma _{i}^{+}\sigma _{i+1}^{-}=\sigma
_{N-i}^{-}\sigma _{N+1-i}^{+}\mathcal{P}$. The rest is then a
straightforward computation which follows along similar lines as in the case
of $H_{r}$ and $E_{r}$ using the explicit polynomial expressions in the $%
f_{i}$'s given in (\ref{H}), (\ref{E}).
\end{proof}

The last result motivates us to define an isomorphism $(V_{q}^{\otimes
N})^{\ast }\rightarrow V_{q}^{\otimes N}$ by identifying the dual basis
vector $\langle \lambda |$ with $\mathcal{P}|\lambda \rangle =|\lambda
^{\vee }\rangle $, and more generally any vector $v^{\ast }=\sum_{\lambda
}c_{\lambda }(q)\langle \lambda |~\in (V_{q}^{\otimes N})^{\ast }$ with $%
v=\sum_{\lambda }c_{\lambda }(q^{-1})|\lambda ^{\vee }\rangle \in
V_{q}^{\otimes N}$. In other words the pairing $\langle v|w\rangle $ between
the vector space and its dual is given by the bilinear form $\eta (v,w):=(v,%
\mathcal{PT}w)$, where $(|\lambda \rangle ,|\mu \rangle )=\delta _{\lambda
\mu }$.

\begin{lemma}
The \emph{left} or \emph{dual} eigenvectors $\mathfrak{e}_{\lambda }^{\ast }$
defined by $\langle \mathfrak{e}_{\lambda }^{\ast }|\mathfrak{e}_{\mu
}\rangle =\delta _{\lambda \mu }$ are given by the expansion%
\begin{equation}
\mathfrak{e}_{\lambda }^{\ast }=\sum_{\mu \in (n,k)}\frac{s_{\mu }(y(\lambda
))}{||\mathfrak{e}_{\lambda }||^{2}}\langle \mu |\;.  \label{dualBethe}
\end{equation}
\end{lemma}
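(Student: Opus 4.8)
The plan is to verify directly that the right-hand side of (\ref{dualBethe}) obeys the defining duality relation $\langle \mathfrak{e}_{\lambda}^{\ast}|\mathfrak{e}_{\mu}\rangle=\delta_{\lambda\mu}$, and then to invoke uniqueness of the dual basis. First I would record from (\ref{Bethevector}) that
\begin{equation*}
\mathfrak{e}_{\mu}=\sum_{\rho\in(n,k)}s_{\rho}(y_{1}(\mu)^{-1},\ldots,y_{n}(\mu)^{-1})\,|\rho\rangle ,
\end{equation*}
and denote by $v_{\lambda}^{\ast}:=\|\mathfrak{e}_{\lambda}\|^{-2}\sum_{\nu\in(n,k)}s_{\nu}(y(\lambda))\,\langle\nu|$ the proposed functional. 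Since $\langle\nu|$ is paired with $|\rho\rangle$ by the standard evaluation pairing $\langle\nu|\rho\rangle=\delta_{\nu\rho}$, evaluating $v_{\lambda}^{\ast}$ on $\mathfrak{e}_{\mu}$ collapses the double sum and yields
\begin{equation*}
\langle v_{\lambda}^{\ast}|\mathfrak{e}_{\mu}\rangle=\frac{1}{\|\mathfrak{e}_{\lambda}\|^{2}}\sum_{\nu\in(n,k)}s_{\nu}(y(\lambda))\,s_{\nu}(y(\mu)^{-1}).
\end{equation*}

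The key step is to recognise that the sum on the right is exactly the matrix of pairings of the Bethe vectors in Theorem \ref{completeness}(i). Indeed, the bilinear form that reproduces the stated squared norm $\|\mathfrak{e}_{\lambda}\|^{2}=\sum_{\nu}s_{\nu}(y(\lambda))s_{\nu}(y(\lambda)^{-1})$ is the one pairing the coefficient $s_{\nu}(y(\lambda)^{-1})$ of $\mathfrak{e}_{\lambda}$, with its arguments inverted, against the coefficient $s_{\nu}(y(\mu)^{-1})$ of $\mathfrak{e}_{\mu}$. Hence the orthogonality asserted in Theorem \ref{completeness}(i) reads precisely
\begin{equation*}
\sum_{\nu\in(n,k)}s_{\nu}(y(\lambda))\,s_{\nu}(y(\mu)^{-1})=\delta_{\lambda\mu}\,\|\mathfrak{e}_{\lambda}\|^{2},
\end{equation*}
so that $\langle v_{\lambda}^{\ast}|\mathfrak{e}_{\mu}\rangle=\delta_{\lambda\mu}$. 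Because Theorem \ref{completeness}(i) also asserts that $\{\mathfrak{e}_{\mu}\}_{\mu\in(n,k)}$ is a basis of $V_{n}$, the functional satisfying $\langle\,\cdot\,|\mathfrak{e}_{\mu}\rangle=\delta_{\lambda\mu}$ for all $\mu$ is unique; therefore $v_{\lambda}^{\ast}=\mathfrak{e}_{\lambda}^{\ast}$, which is (\ref{dualBethe}).

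I expect no serious obstacle here: the only genuine subtlety is the identification of the collapsed sum with the orthogonality relation of Theorem \ref{completeness}(i), and in particular respecting the asymmetry whereby the dual vector carries the \emph{uninverted} arguments $s_{\nu}(y(\lambda))$ while $\mathfrak{e}_{\mu}$ carries the inverted ones; it is exactly this asymmetry that makes the diagonal term reproduce $\|\mathfrak{e}_{\lambda}\|^{2}$ rather than some other quantity. Everything else is a one-line contraction of Kronecker deltas. Although the statement calls the $\mathfrak{e}_{\lambda}^{\ast}$ \emph{dual} or \emph{left} eigenvectors, nothing beyond the above is needed for the asserted formula; that they are genuine left eigenvectors of $H,E$ (equivalently eigenvectors of the operators $H^{\ast},E^{\ast}$ of (\ref{H*&E*})) follows automatically for the dual basis of a simultaneous eigenbasis of the commuting family $\{H_{r},E_{r}\}$, and may alternatively be checked from the dual affine Pieri rules (\ref{dualPieri1}), (\ref{dualPieri2}).
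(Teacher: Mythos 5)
Your proposal is correct and is essentially the paper's own argument: the paper likewise obtains the formula by pairing the candidate dual vector against $\mathfrak{e}_{\mu}$ and invoking the orthogonality and norm statement of Theorem \ref{completeness}(i), its only additional remark being that the candidate arises from applying $\mathcal{PT}$ to the $B$-operator formula (\ref{Bethev1}). You also correctly isolate the one subtlety, namely that the pairing implicit in Theorem \ref{completeness}(i) matches $s_{\nu}(y(\lambda))$ against $s_{\nu}(y(\mu)^{-1})$, which is exactly what makes the diagonal term reproduce $||\mathfrak{e}_{\lambda}||^{2}$.
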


\begin{proof}
This is a direct consequence of our previous discussion, applying $\mathcal{%
PT}$ to (\ref{Bethev1}) and then using part (i) of Theorem \ref{completeness}%
.
\end{proof}

\begin{remark}
In \cite{KS} the operator $\hat{H}_{N}=-q\prod_{j=1}^{N}\sigma _{j}^{z}$ was
defined which differs from $H_{N}=q|0\rangle \langle 0|$ which simply is the
projection onto the unique basis vector in $V_{0,q}$. In \emph{loc. cit.}
the decomposition $V_{q}^{\otimes N}=\bigoplus_{n=0}^{N}V_{n,q}$ has been
employed to define via the eigenbasis $\{\mathfrak{e}_{\lambda }\}_{\lambda
\in (n,k)}$ the following set of operators 
\begin{equation*}
\hat{H}_{r}|V_{n,q}=\sum_{\lambda \in (n,k)}h_{r}(y(\lambda ))\frac{%
\mathfrak{e}_{\lambda }\otimes \mathfrak{e}_{\lambda }^{\ast }}{||\mathfrak{e%
}_{\lambda }||^{2}},\;r\geq 0\;.
\end{equation*}%
Clearly, we have $H_{r}=\hat{H}_{r}$ for $0\leq r<N$. The new insight here
is that these operators originate from a Yang-Baxter algebra which has a
local description in terms of a statistical vertex model on a square lattice.
\end{remark}

\subsection{The Verlinde algebra}

Since the $q$-dependence can be removed via a simple rescaling of the Bethe
roots, $y\rightarrow q^{-\frac{1}{N}}y$ (compare with (\ref{bae})) we now
set for simplicity $q=1$. Interpret the eigenbasis $\{\mathfrak{e}_{\lambda
}\}_{\lambda \in (n,k)}$ as a complete set of orthogonal idempotents of an
associative, unital and commutative algebra. Next we show that the resulting
generalised matrix algebra is isomorphic to the Verlinde algebra $\mathcal{F}%
_{n,k}^{\mathbb{C}}$, where we recall from the introduction that $\mathcal{F}%
_{n,k}\cong qH^{\ast }(\func{Gr}\nolimits_{n,n+k})/\langle q-1\rangle $. 

\begin{theorem}[idempotents of the Verlinde algebra]
Endow $V_{n}$ with the following product, $\mathfrak{e}_{\lambda }\star 
\mathfrak{e}_{\mu }:=\delta _{\lambda \mu }||\mathfrak{e}_{\lambda }||^{2}%
\mathfrak{e}_{\lambda }$ and bilinear form $\eta (|\lambda \rangle ,|\mu
\rangle )=\delta _{\lambda ^{\vee }\mu }$. Then

\begin{enumerate}
\item $(V_{n},\star ,\eta)$ is a commutative Frobenius algebra.

\item The map $|\lambda \rangle \mapsto s_{\lambda }$ is an algebra
isomorphism $(V_{n},\star )\cong \mathcal{F}_{n,k}^{\mathbb{C}}$.
\end{enumerate}
\end{theorem}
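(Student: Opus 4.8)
The plan is to establish statement (2) first --- that $\phi\colon|\lambda\rangle\mapsto s_\lambda$ is an algebra isomorphism $(V_n,\star)\to\mathcal F_{n,k}^{\mathbb C}$ --- and to derive the Frobenius property (1) from it by transport of structure. Write $\hat{\mathfrak e}_\lambda=\mathfrak e_\lambda/\|\mathfrak e_\lambda\|^2$, so that the prescription $\mathfrak e_\lambda\star\mathfrak e_\mu=\delta_{\lambda\mu}\|\mathfrak e_\lambda\|^2\mathfrak e_\lambda$ is equivalent to $\hat{\mathfrak e}_\lambda\star\hat{\mathfrak e}_\mu=\delta_{\lambda\mu}\hat{\mathfrak e}_\lambda$. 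By Theorem \ref{completeness} the vectors $\{\hat{\mathfrak e}_\lambda\}_{\lambda\in(n,k)}$ form a basis of $V_n$, so $(V_n,\star)$ is at once a commutative, associative, unital algebra with unit $\sum_\lambda\hat{\mathfrak e}_\lambda$ and complete system of orthogonal idempotents $\hat{\mathfrak e}_\lambda$; it is therefore split semisimple. The whole content of the theorem is thus to match these abstract idempotents with the geometric basis $\{s_\lambda\}$.

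Next I would pin down the idempotents of the target. Setting $q=1$ in Corollary \ref{cor:iso} gives an algebra isomorphism $\mathcal A_{n,k}\cong\mathcal F_{n,k}^{\mathbb C}$ under $H_r\mapsto h_r$, $E_r\mapsto e_r$. By Theorem \ref{completeness}(ii) each Bethe vector $\mathfrak e_\lambda$ is a joint eigenvector, $H_r\mathfrak e_\lambda=h_r(y(\lambda))\mathfrak e_\lambda$ and $E_r\mathfrak e_\lambda=e_r(y(\lambda))\mathfrak e_\lambda$, so the assignment $h_r\mapsto h_r(y(\lambda))$, $e_r\mapsto e_r(y(\lambda))$ defines an algebra character $\chi_\lambda\colon\mathcal F_{n,k}^{\mathbb C}\to\mathbb C$; by the Jacobi--Trudi expression $s_\mu=\det(h_{\mu_i-i+j})$ one gets $\chi_\lambda(s_\mu)=s_\mu(y(\lambda))$. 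The explicit Bethe roots $y_j(\lambda)=q^{1/N}\exp[\tfrac{2\pi i}{N}J_j(\lambda)]$ show that distinct $\lambda$ give distinct multisets $\{y_j(\lambda)\}$, hence distinct characters. Since there are exactly $|(n,k)|=\dim\mathcal F_{n,k}^{\mathbb C}$ of them, the $\chi_\lambda$ exhaust the characters, confirming semisimplicity and producing primitive idempotents $\bar e_\lambda$ with $\chi_\mu(\bar e_\lambda)=\delta_{\lambda\mu}$ and $s_\mu=\sum_\lambda s_\mu(y(\lambda))\bar e_\lambda$.

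The crux is to show $\phi(\hat{\mathfrak e}_\lambda)=\bar e_\lambda$. From the dual-eigenvector formula (\ref{dualBethe}) together with the orthogonality in Theorem \ref{completeness} one expands $|\mu\rangle=\sum_\lambda s_\mu(y(\lambda))\hat{\mathfrak e}_\lambda$, and using the definition (\ref{Bethevector}) of $\mathfrak e_\lambda$ computes
\[
\chi_\mu(\phi(\hat{\mathfrak e}_\lambda))=\frac{1}{\|\mathfrak e_\lambda\|^2}\sum_{\rho}s_\rho(y(\lambda)^{-1})\,\chi_\mu(s_\rho)=\frac{1}{\|\mathfrak e_\lambda\|^2}\sum_{\rho}s_\rho(y(\lambda)^{-1})s_\rho(y(\mu)).
\]
The inner sum is exactly the reproducing kernel $\|\mathfrak e_\mu\|^2\langle\mathfrak e_\mu^\ast|\mathfrak e_\lambda\rangle$, which by Theorem \ref{completeness} equals $\|\mathfrak e_\mu\|^2\delta_{\lambda\mu}$; since the Kronecker delta forces $\lambda=\mu$ the normalisations cancel and $\chi_\mu(\phi(\hat{\mathfrak e}_\lambda))=\delta_{\lambda\mu}$, i.e. $\phi(\hat{\mathfrak e}_\lambda)=\bar e_\lambda$. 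With this identification the homomorphism property is immediate: diagonalising $\star$ gives $|\mu\rangle\star|\nu\rangle=\sum_\lambda s_\mu(y(\lambda))s_\nu(y(\lambda))\hat{\mathfrak e}_\lambda$, whence $\phi(|\mu\rangle\star|\nu\rangle)=\sum_\lambda s_\mu(y(\lambda))s_\nu(y(\lambda))\bar e_\lambda=s_\mu\cdot s_\nu$. As $\phi$ is a linear bijection of bases, this proves (2).

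Finally, (1) follows by transport of structure. The form is preserved, $\eta(|\lambda\rangle,|\mu\rangle)=\delta_{\lambda^\vee\mu}=\eta(s_\lambda,s_\mu)$, the latter being the intersection pairing of Proposition \ref{Frob2toric}; since $(\mathcal F_{n,k}^{\mathbb C},\cdot,\eta)$ is a commutative Frobenius algebra and $\phi$ is an algebra isometry, $(V_n,\star,\eta)$ is one as well. (Alternatively one can verify the invariance $\eta(a\star b,c)=\eta(a,b\star c)$ directly, by checking that $\eta(\hat{\mathfrak e}_\lambda,\hat{\mathfrak e}_\nu)$ is diagonal in the idempotent basis.) I expect the main obstacle to be the middle step: the identification $\phi(\hat{\mathfrak e}_\lambda)=\bar e_\lambda$ rests on the precise Verlinde-type orthogonality of Schur functions evaluated at the Bethe roots and their inverses, where one must track the $y$ versus $y^{-1}$ dependence and the correct pairing with care, and on the distinctness of the roots that underpins the semisimplicity used to produce the $\bar e_\lambda$.
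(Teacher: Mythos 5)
Your proof is correct, but it takes a genuinely different route from the paper's. The paper expands $|\lambda\rangle\star|\mu\rangle$ in the Bethe basis to get $\sum_{\alpha}\|\mathfrak e_{\alpha}\|^{-2}s_{\lambda}(y(\alpha))s_{\mu}(y(\alpha))\,\mathfrak e_{\alpha}$ and then identifies the resulting structure constants with the Gromov--Witten invariants by importing the Bertram--Vafa--Intriligator formula (cited from Rietsch); the Frobenius property is then checked directly from the $S_{3}$-symmetry of the invariants. You instead stay inside the machinery already built: Corollary \ref{cor:iso} at $q=1$ plus the eigenvalue equations of Theorem \ref{completeness}(ii) give you one character $\chi_{\lambda}$ of $\mathcal F_{n,k}^{\mathbb C}$ per Bethe vector, the duality relation $\sum_{\rho}s_{\rho}(y(\mu))s_{\rho}(y(\lambda)^{-1})=\|\mathfrak e_{\mu}\|^{2}\delta_{\lambda\mu}$ (which is exactly the content of the paper's lemma on dual eigenvectors) pins down $\phi(\hat{\mathfrak e}_{\lambda})$ as the primitive idempotent dual to $\chi_{\lambda}$, and the homomorphism property follows formally. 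What your route buys is self-containedness --- the Vafa--Intriligator formula (\ref{Schurprod2}) comes out as a corollary rather than going in as input --- at the price of leaning on the quantum Giambelli determinant to evaluate $\chi_{\lambda}(s_{\mu})=s_{\mu}(y(\lambda))$ and on the distinctness of the Bethe-root multisets (the latter is anyway forced by the orthogonality and finiteness of the norms in Theorem \ref{completeness}(i)). Your transport-of-structure argument for part (1) is fine, though note that the Frobenius property of $(\mathcal F_{n,k}^{\mathbb C},\eta)$ you are transporting is itself equivalent to the $S_{3}$-invariance of the Gromov--Witten invariants that the paper verifies directly, so the two treatments of (1) are essentially the same fact in different clothing.
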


\begin{remark}
An analogous statement holds true for the $\mathfrak{\widehat{su}}(n)_k$%
-WZNW fusion ring using so-called $\infty$-friendly walkers; see \cite{KS}.
In \cite[Section 5]{Korffproc} the role of the Bethe vectors as idempotents
has been highlighted and Section 7 of \emph{loc. cit.} explains how the
construction might generalise to other integrable models.
\end{remark}

\begin{proof}
We compute the product expansion in the basis $\{|\lambda \rangle \}$.
Exploiting (\ref{Bethevector}) we find that 
\begin{eqnarray}
|\lambda \rangle \star |\mu \rangle &=&\sum_{\alpha ,\beta \in (n,k)}\langle 
\mathfrak{e}_{\alpha }^{\ast }|\lambda \rangle \langle \mathfrak{e}_{\beta
}^{\ast }|\mu \rangle \mathfrak{e}_{\alpha }\star \mathfrak{e}_{\beta } 
\notag \\
&=&\sum_{\alpha \in (n,k)}\frac{s_{\lambda }(y(\alpha ))s_{\mu }(y(\alpha ))%
}{||\mathfrak{e}_{\alpha }||^{2}}\mathfrak{e}_{\alpha }=\sum_{\nu \in
(n,k)}C_{\lambda \mu }^{\nu ,d}|\nu \rangle ,  \label{Schurprod1}
\end{eqnarray}%
%
%
%
%
%
%
%
%
%
%
%
%
%
%
where in the last step we have used, once more, the definition (\ref%
{Bethevector}) and the Bertram-Vafa-Intriligator formula for Gromov-Witten
invariants (see e.g. \cite[Cor 6.2, Eqn (6.1)]{Rietsch}), 
\begin{equation}
C_{\lambda \mu }^{\nu ,d}=\sum_{\alpha \in (n,k)}\frac{s_{\lambda }(y(\alpha
))s_{\mu }(y(\alpha ))s_{\nu }(y(\alpha )^{-1})}{||\mathfrak{e}_{\alpha
}||^{2}}\;.  \label{Schurprod2}
\end{equation}%
From this equality it is obvious that $|\emptyset \rangle $ is the identity
with respect to the product $\star $ and using (\ref{Bethevector}) we find 
\begin{equation}
|\emptyset \rangle =\sum_{\alpha \in (n,k)}\mathfrak{\hat{e}}_{\alpha
},\qquad \mathfrak{\hat{e}}_{\alpha }=\mathfrak{e}_{\alpha }/||\mathfrak{e}%
_{\alpha }||^{2}\;.
\end{equation}%
We now turn to the Frobenius structure. That $\eta $ is non-degenerate
follows from the observation that $|\lambda \rangle \mapsto |\lambda ^{\vee
}\rangle $ simply permutes the basis elements in $V_{n}$. Compatibility of $%
\eta $ with the product amounts to the identity 
\begin{equation}
\eta (|\lambda \rangle \star |\mu \rangle ,|\nu \rangle )=C_{\lambda \mu \nu
}=C_{\mu \nu \lambda }=\eta (|\lambda \rangle ,|\mu \rangle \star |\nu
\rangle ),
\end{equation}%
where $C_{\lambda \mu \nu }:=C_{\lambda \mu }^{\nu ^{\vee },d}$ and we have
used the known $S_{3}$-invariance of Gromov-Witten invariants, $C_{\lambda
\mu \nu }=C_{\pi (\lambda )\pi (\mu )\pi (\nu )}$ for all $\pi \in S_{3}$,
which is immediate from their geometric definition; see e.g. \cite{BCF}.
\end{proof}

Our main motivation to emphasise the Frobenius structure is the connection
with the toric Schur polynomials mentioned after Prop \ref{Frob2toric} in
the introduction.

\begin{proof}[\textbf{Proof of Prop \protect\ref{Frob2toric}}]
Let $\mathfrak{m}:\mathcal{F}_{n,k}\otimes \mathcal{F}_{n,k}\rightarrow 
\mathcal{F}_{n,k}$ be the regular representation or multiplication map, $%
\mathfrak{m}( s_\mu \otimes s_\nu )= s_\mu \star s_\nu$, and $\mathfrak{m}%
^{\ast }:\mathcal{F}_{n,k}^{\ast }\rightarrow \mathcal{F}_{n,k}^{\ast
}\otimes \mathcal{F}_{n,k}^{\ast }$ its dual map with the Frobenius
isomorphism $\Phi :\mathcal{F}_{n,k}\rightarrow \mathcal{F}_{n,k}^{\ast }$
given by $\Phi : s_\lambda \mapsto \eta( s_{\lambda },\bullet)$. Then we
recall that the co-product $\Delta _{n,k}$ is obtained via the following
commutative diagram, 
\begin{equation}
\begin{CD} \mathcal{F}_{n,k} @>\Delta_{n,k}>>
\mathcal{F}_{n,k}\otimes\mathcal{F}_{n,k}\\ @VV{\Phi}V
@VV{\Phi\otimes\Phi}V\\ \mathcal{F}^\ast_{n,k} @>\mathfrak{m}^{\ast}>>
\mathcal{F}^\ast_{n,k}\otimes\mathcal{F}^\ast_{n,k} \end{CD}\quad .
\label{diagram}
\end{equation}%
Thus, we compute $\mathfrak{m}^\ast\circ\Phi( s_\lambda)( s_\mu\otimes
s_\nu)= \Phi( s_\lambda)( s_\mu\star s_\nu)=C_{\lambda\mu\nu}$. Assuming on
the other hand that $\Delta_{n,k} s_\lambda=\sum_{d,\alpha}
s_{\lambda/d/\alpha}\otimes s_\alpha$ with $s_{\lambda/d/\alpha}=\sum_{%
\beta}C_{\alpha\beta}^{\lambda,d} s_\beta$ we find 
\begin{equation*}
(\Phi\otimes\Phi)\Delta_{n,k}( s_\lambda)( s_\mu\otimes s_\nu)=
\sum_{d,\alpha,\beta}C_{\alpha\beta}^{\lambda,d}(\Phi( s_\beta)\otimes\Phi(
s_\alpha))( s_\mu\otimes s_\nu)= C_{\lambda^\vee\mu^\vee\nu^\vee}\;.
\end{equation*}
Exploiting invariance under Poincar\'e duality (see e.g. \cite{BCF}), $%
C_{\lambda^\vee\mu^\vee\nu^\vee}=C_{\lambda\mu\nu}$, the assertion now
follows.
\end{proof}

%

%
%
%

\section{Bijections between walks and toric tableaux\label{sec:bijections}}

In this section we prove that the lattice configurations of the vicious and
osculating walker models are in bijection with toric tableaux which were
used in \cite{BCF} and \cite{Postnikov}. This will provide a combinatorial
link between these statistical mechanics models and the quantum cohomology
ring: the partition functions of vicious and osculating walkers on the
cylinder are toric Schur functions.

To keep this article self-contained we start by recalling the definition of
toric tableaux following \cite{Postnikov}. As mentioned earlier, toric
tableaux are a particular subset of cylindric tableaux; see \cite%
{GesselKrattenthaler} for the original definition of cylindric (plane)
partitions, lattice paths and cylindric Schur functions.

\begin{definition}[cylindric loops]
Let $\lambda =(\lambda _{1},\ldots ,\lambda _{n})\in (n,k)$ and define the
following associated \emph{cylindric loops} $\lambda \lbrack r]$ for any $%
r\in \mathbb{Z}$,%
\begin{equation}
\lambda \lbrack r]:=(\ldots ,\underset{r}{\lambda _{n}+r+k},\underset{r+1}{%
\lambda _{1}+r},\ldots ,\underset{r+n}{\lambda _{n}+r},\underset{r+n+1}{%
\lambda _{1}+r-k},\ldots )\;.  \notag
\end{equation}
\end{definition}

For $r=0$ the cylindric loop can be visualized as a path in $\mathbb{Z}%
\times \mathbb{Z}$ determined by the outline of the Young diagram of $%
\lambda $ which is periodically continued with respect to the vector $(n,-k)$%
. For $r\neq 0$ this line is shifted $r$ times in the direction of the
lattice vector $(1,1)$; see Figure \ref{fig:cylindricloop} for an
illustration.

\begin{figure}[tbp]
\begin{equation*}
\includegraphics[scale=0.45]{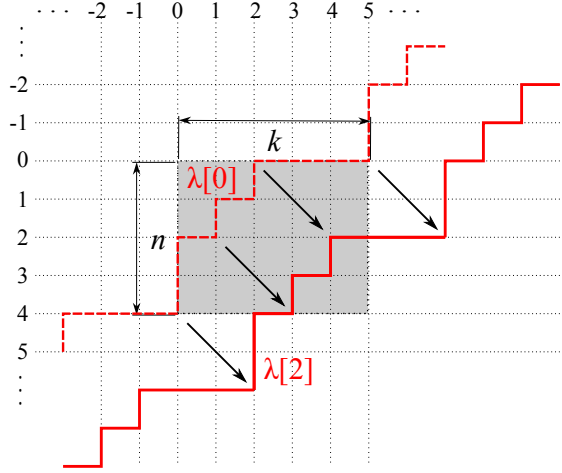}
\end{equation*}%
\caption{Example of a cylindric loop.}
\label{fig:cylindricloop}
\end{figure}

\begin{definition}[cylindric skew diagrams]
Given two partitions $\lambda ,\mu \in (n,k)$ denote by $\lambda /d/\mu $
the set of squares between the two lines $\lambda \lbrack d]$ and $\mu
\lbrack 0]$ modulo integer shifts by $(n,-k)$, 
\begin{equation*}
\lambda /d/\mu :=\{\langle i,j\rangle \in \mathbb{Z}\times \mathbb{Z}/(n,-k)%
\mathbb{Z}~|~\lambda \lbrack d]_{i}\geq j>\mu \lbrack 0]_{i}\}\;.
\end{equation*}%
We shall refer to $\lambda /d/\mu $ as a \emph{cylindric skew-diagram} of
degree $d$.
\end{definition}

A cylindric skew diagram $\nu /d/\mu $ which has at most one box in each
column will be called a (cylindric) \emph{horizontal strip} and one which
has at most one box in each row a (cylindric) \emph{vertical strip}. The 
\emph{length} of such strips will be the number of boxes within the skew
diagram.

\begin{definition}[toric skew diagrams]
A cylindric skew diagram is called toric if it has at most $k$ squares in
each row.
\end{definition}

See Figure \ref{fig:toric_tableau} for an example of a toric skew-diagram.
Note that for $d=0$ we recover the familiar skew-diagram of two partitions,
i.e. $\lambda /0/\mu =\lambda /\mu $.

\begin{definition}[cylindric tableaux]
A (semi-standard) tableau $T$ of cylindric shape $\lambda /d/\mu $ is a
mapping $T:\lambda /d/\mu \mapsto \mathbb{N}$ of the squares of the
associated diagram such that in each row and column of connected squares the
numbers are respectively weakly increasing (left to right) and strictly
increasing (top to bottom).
\end{definition}

See once more Figure \ref{fig:toric_tableau} for an example. As for ordinary
skew-diagrams we define the weight vector $\alpha (T)=(\alpha _{1},\alpha
_{2},\ldots )$ by setting $\alpha _{i}$ to be the number of $i$-entries in $%
T $.

\begin{figure}[tbp]
\begin{equation*}
\includegraphics[scale=0.45]{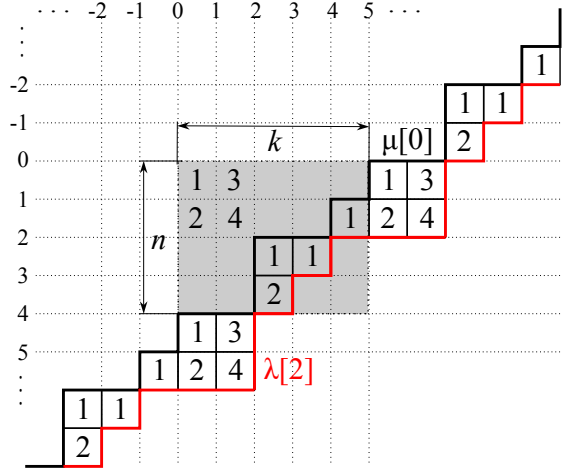}
\end{equation*}%
\caption{Example of a toric tableau. On the left it is indicated how the
toric skew diagram is constructed from the cylindric loops of $\protect%
\lambda $ and $\protect\mu $. The right picture shows how the squares
outside the bounding box are moved to the inside.}
\label{fig:toric_tableau}
\end{figure}

\begin{definition}[quantum Kostka numbers]
The cardinality of the set of all cylindric tableaux $T$ of shape $\nu/d/\mu$
and weight $\alpha$ is denoted by $K_{\nu/d/\mu,\alpha}$.
\end{definition}

As already pointed out in the introduction the \emph{quantum Kostka number} $%
K_{\nu /d/\mu ,\alpha }$ in (\ref{qKostkadef}) equals the number of
semi-standard cylindric tableaux of weight $\alpha $, and specialises for $%
d=0$ to the ordinary Kostka number $K_{\nu /\mu ,\alpha }$.

Below we will also make use of $K_{\nu ^{\prime }/d/\mu ^{\prime },\alpha }$%
, the number of conjugate cylindric and toric diagrams $\nu ^{\prime }/d/\mu
^{\prime }$, which are obtained by interchanging $n$ and $k$. Here $\nu
^{\prime },\mu ^{\prime }\in (k,n)$ denote the conjugate partitions of $\nu
,\mu $.

\begin{definition}[cylindric Schur functions]
Introduce the following generalisation of a skew Schur function,%
\begin{equation}
s_{\nu /d/\mu }(x_{1},x_{2},\ldots ):=\sum_{|T|=\nu /d/\mu
}x^{T}=\sum_{\alpha }K_{\nu /d/\mu ,\alpha }m_{\alpha }(x_{1},x_{2},\ldots
)\,,  \label{cylindricschur}
\end{equation}%
where $m_{\alpha }$ are the monomial symmetric functions in an infinite set
of variables $x_{i}$ and the sum runs over all cylindric tableaux of fixed
shape $\nu/d/\mu$.
\end{definition}

Note that the cylindric Schur function (\ref{cylindricschur}) specialises to
an ordinary skew Schur function for $d=0$.

\begin{definition}[toric Schur functions]
Specialising $x_{r}=0,~r>n$ the cylindric Schur functions are called \emph{%
toric} Schur functions, i.e. the sum in (\ref{cylindricschur}) runs only
over toric tableaux.
\end{definition}

We recall from \cite[Lemma 5.2]{Postnikov} that toric Schur functions are
nonzero if and only if $\nu /d/\mu $ is a toric skew diagram.

\subsection{Lattice configurations and quantum Kostka numbers}

Throughout this section we assume $\nu,\mu\in(n,k)$. Denote by $\Gamma _{\nu
,\mu },\Gamma _{\nu ,\mu }^{\prime }$ the sets of all allowed lattice
configurations $\mathcal{C},\,\mathcal{C}^{\prime }$ on the cylinder for the
vicious and osculating walker models where the values of the lower and upper
vertical lattice edges are fixed by the 01-words $w(\nu )$ and $w(\mu )$,
respectively. It will also be convenient to consider the subsets $\Gamma
_{\nu /d/\mu }\subset \Gamma _{\nu ,\mu }$ of configurations which do have a
fixed number of $2d$ outer horizontal edges with value one (they come in
pairs due to the quasi-periodic boundary conditions in the horizontal
direction), including in particular the special case of $d=0$ ($q=0$) when
there are no outer horizontal edges $\Gamma _{\nu /0/\mu }$. Finally, we
introduce for each $\alpha =(\alpha _{1},\ldots ,\alpha _{n})\in \mathbb{Z}%
_{\geq 0}^{n}$ the subsets 
\begin{equation}
\Gamma _{\nu /d/\mu }(\alpha ):=\{\mathcal{C}\in \Gamma _{\nu /d/\mu }:\func{%
wt}(\mathcal{C})=x^{\alpha }\}\,.
\end{equation}%
Analogously we define for $\beta =(\beta _{1},\ldots ,\beta _{k})\in \mathbb{%
Z}_{\geq 0}^{k}$ the set $\Gamma _{\nu /d/\mu }^{\prime }(\beta )$ as the
lattice configurations $\mathcal{C}^{\prime }$ of the osculating walker
model which have weight $\func{wt}^{\prime }(\mathcal{C}^{\prime })=x^{\beta
}$.

\begin{proposition}
\label{prop:bijection} The set of allowed lattice configurations $\Gamma
_{\nu /d/\mu }$ and $\Gamma _{\nu /d/\mu }^{\prime }$ are in bijection with
the sets of toric skew tableaux of shape $\nu /d/\mu $ and $\nu ^{\prime
}/d/\mu ^{\prime }$, respectively. In particular, 
\begin{equation}
K_{\nu /d/\mu ,\alpha }=|\Gamma _{\nu /d/\mu }(\alpha )|=|\Gamma _{\nu
^{\prime }/d/\mu ^{\prime }}^{\prime }(\alpha )|\,,
\end{equation}%
where $\alpha =(\alpha _{1},\ldots ,\alpha _{n})$ is some weight vector with
non-negative integer entries.
\end{proposition}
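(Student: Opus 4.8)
The plan is to reduce everything to the single-row Pieri rule (\ref{Haction}), whose proof was deferred to here, and then to iterate it over the $n$ lattice rows. First I would analyze one row of the vicious walker model on the cylinder. Reading the vertical edges between consecutive horizontal lattice lines from bottom to top produces a chain of $01$-words $w(\mu)=w^{(0)},w^{(1)},\dots,w^{(n)}=w(\nu)$, one per row boundary; a configuration in $\Gamma_{\nu/d/\mu}$ is exactly this chain together with the assignment of the in-between horizontal edges. Inspecting the five allowed vertices of Figure \ref{fig:QCvertex}, together with the combinatorial descriptions of $A_r$, $\sigma_1^+$ and $\sigma_N^-$ established just before this section, I would show that passing from $w^{(i-1)}$ to $w^{(i)}$ amounts to sliding $1$-letters rightward, i.e. adding a cylindric horizontal strip to the partition $\mu^{(i-1)}$; the occupied horizontal edges in row $i$ are precisely the boxes of that strip, so the row contributes $x_i^{\alpha_i}$ with $\alpha_i$ the strip length. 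The delicate point is the seam of the cylinder: a strip crossing it uses the wrapping vertex (the $\sigma_N^-(\cdots)\sigma_1^+$ term in (\ref{H})), which by the description of $\sigma_N^-$ adds a full column and subtracts a boundary ribbon of length $N$, realizing the shift $\lambda\mapsto\lambda[1]$ and carrying a factor $q$. This is exactly the content of (\ref{Haction}); I would also record that the $2d$ outer horizontal edges come in pairs, one pair per wrap, so the total $q$-degree equals $d$.

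With the per-row statement in hand, the bijection is built by iteration. A chain of cylindric horizontal strips $\mu=\mu^{(0)}\subset\mu^{(1)}\subset\cdots\subset\mu^{(n)}=\nu$ is the same data as a semistandard cylindric tableau $T$ of shape $\nu/d/\mu$ with entries in $\{1,\dots,n\}$: label every box added at step $i$ with the entry $i$. The requirement that each step add a horizontal strip is exactly the semistandardness condition (weakly increasing along rows, strictly increasing down columns), the total number of wraps is the degree $d$, and the content of $T$ is the weight vector $\alpha$, so that the row weight $x^{\alpha}$ of $\mathcal C$ equals $x^{T}$. Because each intermediate shape $\mu^{(i)}$ remains a partition in $(n,k)$ — recall that $H_r$ annihilates $|\mu\rangle$ as soon as $r>k$ — the cylindric tableau produced has at most $k$ boxes in each row, i.e. it is toric in Postnikov's sense \cite{Postnikov}; conversely every toric tableau decomposes uniquely into such a chain. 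This gives the bijection $\Gamma_{\nu/d/\mu}\leftrightarrow\{\text{toric tableaux of shape }\nu/d/\mu\}$ and, refining by content, the identity $|\Gamma_{\nu/d/\mu}(\alpha)|=K_{\nu/d/\mu,\alpha}$.

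For the osculating model I would run the identical argument using (\ref{Eaction}) in place of (\ref{Haction}): each of its rows now adds a cylindric vertical strip, so a configuration corresponds to a chain of vertical strips, i.e. to a semistandard tableau of conjugate toric shape $\nu'/d/\mu'$. Transposing such a tableau interchanges vertical and horizontal strips, which matches the conjugation/level-rank identity $\Theta\circ H_r=E_r\circ\Theta$ of Lemma \ref{levelrank}; combined with the equality $K_{\nu'/d/\mu',\alpha}=K_{\nu/d/\mu,\alpha}$ recalled in the introduction, this yields $|\Gamma'_{\nu'/d/\mu'}(\alpha)|=K_{\nu/d/\mu,\alpha}$ and completes the chain of equalities in the statement.

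I expect the main obstacle to be the single-row analysis at the cylinder seam: one must verify that the wrapping vertex produces exactly Postnikov's shift $\lambda\mapsto\lambda[1]$ by $(n,-k)$ rather than some other boundary move, and that the pairing of outer horizontal edges tracks the degree correctly, so that the $q$-grading of the vertex model coincides cell-for-cell with the degree grading of cylindric tableaux. Once (\ref{Haction}) and (\ref{Eaction}) are secured this way, the remaining passage from strip-chains to semistandard toric tableaux is the standard encoding and requires only bookkeeping.
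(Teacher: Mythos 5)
Your proposal is correct and follows essentially the same route as the paper: reduce to the single-row statement (the affine Pieri rule (\ref{Haction})/(\ref{Eaction})), identify one lattice row with the addition of a cylindric horizontal (resp.\ vertical) strip with the seam vertex accounting for the shift $\lambda\mapsto\lambda[1]$ and the power of $q$, and then encode the resulting chain of strips as a semistandard toric tableau, treating the osculating case via the conjugation $\Theta\circ H_r=E_r\circ\Theta$. The only cosmetic difference is that the paper phrases the bijection path-by-path (each walker contributes one row of the tableau, filled with the lattice-row indices of its horizontal steps) rather than lattice-row-by-lattice-row, but these are two descriptions of the same map.
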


Note that for $d=0$ and $\mu=\emptyset$ the statement specialises to Claim %
\ref{claim} which we used earlier to prove Lemma \ref{lem:Bethe}. 
\begin{figure}[tbp]
\begin{equation*}
\includegraphics[scale=0.29]{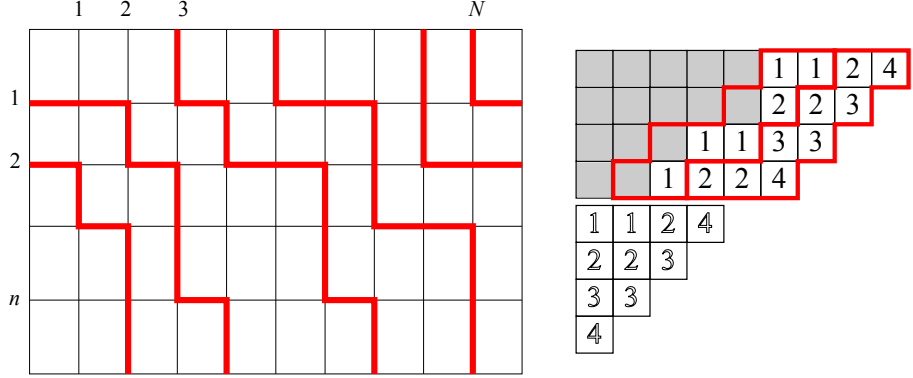}
\end{equation*}%
\caption{Example for constructing a toric tableau from the vicious walker
configuration considered in the introduction with $N=9=n+k=4+5$ and $\protect%
\lambda=(5,5,3,2)=\protect\mu,\,\protect\nu=(5,4,2,1)$. The shaded boxes on
the right show the Young diagram of $\protect\mu$ and removing from the
toric skew diagram the two ($d=2$) framed $N$-ribbons, one obtains the Young
diagram of $\protect\nu$. The boxes below belong to the periodic
continuation of the tableau.}
\label{fig:walk_tableau_bijection}
\end{figure}

\begin{proof}
We concentrate on the vicious walker model, the generalisation to the
osculating walker model will then be obvious. First we state the bijection.
Recall from Figure \ref{fig:QCvertex} that each lattice configuration $%
\mathcal{C}\in \Gamma _{\nu /d/\mu }$ defines an $n$-tuple of
non-intersecting paths $\gamma =(\gamma _{1},\ldots ,\gamma _{n})$ of which $%
d$ cross the boundary. Draw the Young diagram $Y(\mu )$ of $\mu $ in the
bounding box $(n,k)$. Reading the $01$-word of $\mu $ from left to right
take the path which originates at $\ell _{1}(\mu )$, that is from the first
one letter in $w(\mu )$, and note down the lattice rows of each of its
horizontal edges starting from the top, say $i_{1}\leq \cdots \leq i_{r}$.
Add a corresponding row of boxes with entries $i_{1}\leq \cdots \leq i_{r}$
to the bottom row of $Y(\mu )$. If the path has no horizontal edges do not
add any boxes. Continue with the path originating from the second 1-letter
in $w(\mu )$ at $\ell _{2}(\mu )$ and write the corresponding filled boxes
in the row above the bottom row of $Y(\mu )$ starting at the first square
which does not lie in $\mu $. Continue until you have reached the last
1-letter in $w(\mu )$; see Figure \ref{fig:walk_tableau_bijection} for an
example.

That the described map is indeed a bijection follows from the following
lemma.

\begin{lemma}
Each toric skew tableau $T$ of shape $\nu /d/\mu $ can be written as a
sequence of toric horizontal strips, i.e. there is a unique sequence of
cylindric loops $(\mu \lbrack 0]=\nu ^{(0)}[d_{0}],\nu ^{(1)}[d_{1}],\ldots
,\nu ^{(n)}[d_{n}]=\nu \lbrack d])$ such that $\nu
^{(i+1)}/(d_{i+1}-d_{i})/\nu ^{(i)}$ is a horizontal strip and $%
\sum_{i}d_{i}=d$ with $d_{i}=0,1$.
\end{lemma}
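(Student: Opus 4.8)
The plan is to recover the chain of cylindric loops by the classical device of slicing a semistandard tableau along its entries, and then to check that the cylindric and toric constraints behave as claimed. First I would define, for each $i=0,1,\ldots,n$, the sub-tableau $T_{\le i}$ consisting of those cells of $T$ whose entry is $\le i$. Because $T$ is weakly increasing along rows and strictly increasing down columns, the cells of $T_{\le i}$ fill out, from the inner boundary $\mu[0]$ outward, a region whose outer boundary is again a cylindric loop; this loop is the outline of a partition $\nu^{(i)}\in(n,k)$ shifted by some degree $d_i$, so that $T_{\le i}$ has cylindric shape $\nu^{(i)}/d_i/\mu$. The endpoints are forced: $T_{\le 0}$ is empty, giving $\nu^{(0)}=\mu$ and $d_0=0$, while $T_{\le n}=T$ gives $\nu^{(n)}=\nu$ and $d_n=d$. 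Since each $\nu^{(i)}[d_i]$ is determined as the outline of $T_{\le i}$, the sequence is manifestly unique, which settles the uniqueness clause.

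Next I would identify the $i$-th layer. By construction the cells of $T$ with entry exactly $i$ are precisely $T_{\le i}\setminus T_{\le i-1}$, i.e. the cylindric skew diagram $\nu^{(i)}/(d_i-d_{i-1})/\nu^{(i-1)}$. Strict increase down the columns of $T$ means no column contains two cells of the same entry, so this layer has at most one box per column and is therefore a cylindric horizontal strip in the sense of the definition above; toricity (at most $k$ boxes per row) is inherited from $T$, so it is a toric horizontal strip. The telescoping identity $\sum_{i=1}^{n}(d_i-d_{i-1})=d_n-d_0=d$ then accounts for the total degree.

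The one substantive point — and the step I expect to be the main obstacle — is that each increment $e_i:=d_i-d_{i-1}$ lies in $\{0,1\}$, i.e. a single toric horizontal strip crosses the glued seam at most once. This is not a consequence of the horizontal-strip condition alone: a cylindric horizontal strip of higher degree does exist, but it fails to be toric, so toricity must genuinely be used. To exploit it I would pass to the $01$-word (walker) description of (\ref{part2word}): adding the layer $\nu^{(i)}/e_i/\nu^{(i-1)}$ amounts to one lattice row of the vicious walker model, in which each up-step makes a single rightward run and $e_i$ counts the runs that wrap past the seam. The bound $e_i\le 1$ is then the statement that in a single row at most one walker can wrap around, which I would read off from the non-crossing (vicious) vertex rules together with the $\le k$ boxes-per-row constraint; equivalently, it is visible from the explicit expansion (\ref{H}) of $H_r$, whose summands carry the seam operators $(\sigma_N^-)^{\alpha_N}(\sigma_1^+)^{\alpha_N}$ with $\alpha_N\in\{0,1\}$ and hence at most one power of $q$. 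Pinning down this seam-crossing count cleanly — rather than the routine slicing and interlacing — is where the real work lies; once it is in hand, the decomposition and its uniqueness follow as above.
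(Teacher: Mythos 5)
Your slicing construction is exactly the route the paper intends: the paper's own ``proof'' consists of the single remark that a toric tableau is a special semistandard tableau of shape $\nu[d]/\mu$, so that the layering by entries follows Macdonald's classical argument, and all details are omitted. Structurally you therefore agree with the paper, and you are right to single out the bound $d_{i}-d_{i-1}\in\{0,1\}$ on the seam-crossings of each layer as the only point that is not already in the classical argument.

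Your treatment of that one point, however, has two problems. First, it is circular where this lemma sits: the lemma is a step inside the proof of Proposition \ref{prop:bijection}, whose very purpose is to identify single toric horizontal strips with single rows of the vicious walker lattice, so you cannot justify the bound by saying that ``adding the layer amounts to one lattice row'' or by reading off the exponent $\alpha_N\in\{0,1\}$ from the expansion (\ref{H}) of $H_r$ --- that is the correspondence the lemma is needed to establish. Second, your diagnosis of what forces the bound is incorrect: you assert that cylindric horizontal strips of degree $\geq 2$ exist and are excluded only by toricity, but the column condition alone already forbids them. In the particle coordinates of (\ref{part2word}), the requirement that $\lambda[e]/\mu[0]$ have at most one box per column is the interlacing $\mu[0]_{j}\geq\lambda[e]_{j+1}$ for all $j$, i.e.\ $\ell_{i+e-1}(\lambda)<\ell_{i}(\mu)$ for all $i$; taking $i=n$ gives $\ell_{e-1}(\lambda)+N<\ell_{n}(\mu)\leq N$, which is impossible for $e\geq2$ since $\ell_{e-1}(\lambda)\geq1$. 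This one-line computation supplies the ``real work'' you defer, without any appeal to the lattice model; toricity is not what is doing the work here (indeed a cylindric horizontal strip of degree $\leq1$ between partitions in $(n,k)$ is automatically toric, since its row lengths are bounded by $\ell_{i+1}(\lambda)-\ell_{i}(\lambda)-1\leq k$). With that substitution your argument is complete and coincides with the proof the paper leaves to the reader.
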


Since toric tableaux can be seen as a special subset of ordinary
(semi-standard) tableaux of shape $\nu[d]/\mu$ the proof of this lemma
follows along very similar lines as in the case of ordinary tableaux (see
e.g. \cite[Chap I]{Macdonald}) and we therefore omit it.

Thus, it suffices to derive the assertion for $n=1$ in which case $d=0$ or $%
1 $. Then we have the following generalisation of (\ref{part2word}) to
cylindric loops%
\begin{equation*}
\mu \lbrack 0]_{n+1-i}=\ell _{i}(\mu )-i\text{\qquad and\qquad }\lambda
\lbrack 1]_{n+1-i}=\ell _{i+1}(\lambda )-i\;,
\end{equation*}%
where $\ell _{i+n}(\mu )=\ell _{i}(\mu )+N$, $\ell _{i+n}(\lambda )=\ell
_{i}(\lambda )+N$. Analogously, one finds for the conjugate loops, 
\begin{equation*}
\mu \lbrack 0]_{k+1-i}^{\prime }=\mu ^{\prime }[0]_{k+1-i}=\ell _{i}(\mu
^{\prime })-i,\text{\qquad }\lambda \lbrack 1]_{k+1-i}^{\prime }=\lambda
^{\prime }[1]_{k+1-i}=\ell _{i+1}(\lambda ^{\prime })-i
\end{equation*}%
with $\ell _{i+k}(\mu ^{\prime })=\ell _{i}(\mu ^{\prime })+N$, $\ell
_{i+k}(\lambda ^{\prime })=\ell _{i}(\lambda ^{\prime })+N$. Using these
formulae together with the action (\ref{TLrep}), (\ref{affTLrep}), one now
easily verifies that an allowed row configuration of the vicious walker
model, i.e. a non-vanishing matrix element of a monomial in the hopping
operators $f_i$ appearing in (\ref{H}), defines a toric horizontal strip $%
\nu /d/\mu $ with $d=0,1$ and vice versa. In particular, the action of a
consecutive string such as $f_{i}\cdots f_{2}f_{1}f_{N}\cdots f_{j}b_{\mu
}=b_{\lambda }$ can only be nonzero if there are as many consecutive
0-letters in $w(\mu )$ starting at $j$ and ending at $i$ as there are
hopping operators $f_{l}$ in the string. Thus, the horizontal strip has at
most length $k$.

The bijection for the osculating walkers is analogous. Start with the
leftmost path originating at $\ell _{1}(\mu )$ and for each horizontal path
edge in lattice row $i$ add a box labelled $i$ in the rightmost column of
the $k\times n$ bounding box beneath the Young diagram of $\mu ^{\prime }$.
Continue with the second path placing the boxes now in the second column
from the right and so forth. The result is a conjugate toric tableau of
shape $\nu ^{\prime }/d/\mu ^{\prime }$; see Figure \ref%
{fig:walk_tableau_bijection_2} for an example.

The proof that the described map is indeed a bijection employs the expansion
(\ref{E}) and follows closely along similar lines as in the previous case of
vicious walkers. We therefore omit it.
\end{proof}


\begin{figure}[tbp]
\begin{equation*}
\includegraphics[scale=0.29]{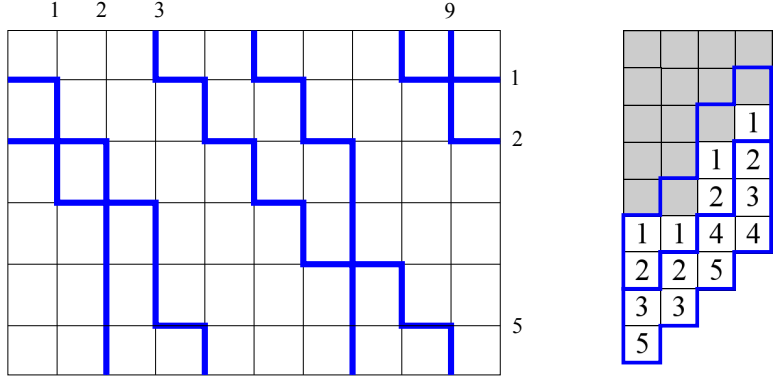}
\end{equation*}%
\caption{Example for constructing a conjugate toric tableau from the
osculating walker configuration considered in the introduction. The shaded
boxes on the right now show the Young diagram of $\protect\mu^{\prime}$ and
removing from the toric skew diagram the two $N$-ribbons ($d=2$) framed in
blue, one obtains the Young diagram $\protect\nu^{\prime}$. }
\label{fig:walk_tableau_bijection_2}
\end{figure}


Number the lattice columns of $\mathbb{L}$ ( $\mathbb{L}^{\prime }$) from
left to right. That is, the $i$th column is the collection of \emph{%
horizontal} lattice edges $(p,p^{\prime })\in \mathbb{E}$ ( $\mathbb{E}%
^{\prime }$) such that $p_{1}=i$ and $p_{1}^{\prime }=i+1$. Let $\mathcal{C}:%
\mathbb{E}\rightarrow \{0,1\}$ ($\mathcal{C}^{\prime }:\mathbb{E}^{\prime
}\rightarrow \{0,1\}$) be a vicious (osculating) walker configuration with
start and end positions $\ell (\nu )$ and $\ell (\mu )$, $\nu ,\mu \in (n,k)$%
. The number of horizontal \emph{path} edges in column $i$ is the sum over
the \emph{values} of the horizontal lattice edges in column $i$ in
configuration $\mathcal{C}$ ($\mathcal{C}^{\prime }$).

\begin{lemma}
The number of horizontal path edges in lattice column $i$ is 
\begin{equation}
\theta _{i}(\nu ,\mu ,d)=d+n_{i}(\mu )-n_{i}(\nu ),\quad \quad 1\leq i\leq
N\;.  \label{theta}
\end{equation}%
In particular, we have that $d\geq d_{\min }(\nu ,\mu ):=\max_{i\in
I}\{n_{i}(\nu )-n_{i}(\mu )\}$.
\end{lemma}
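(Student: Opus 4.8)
The plan is to read (\ref{theta}) as a discrete continuity equation for the edge occupation numbers, coming from a single conservation law obeyed by each allowed vertex, and then to read off the inequality from non-negativity of the counts $\theta_i$. The argument uses no special feature of the paths beyond this conservation law, so it applies to every allowed configuration in $\Gamma_{\nu/d/\mu}$ at once.

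First I would record the local conservation law. Labelling a vertex by $(a,b,c,d)$ for the values of its W, N, E and S edges, inspection of the five allowed configurations in Figure \ref{fig:QCvertex} --- equivalently of the nonzero entries of (\ref{L4x4}), all of which satisfy $L_{cd}^{ab}\neq 0\Rightarrow a+b=c+d$ --- shows that every vertex conserves the number of occupied edges: the number entering from the north and west equals the number leaving to the south and east.

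Next I would sum this identity down a column. Fix a column $i$, write $a_{r,i}$ for the horizontal (W) edge of the vertex in row $r$ and column $i$, and write $e_i^{(0)},\dots,e_i^{(n)}$ for the vertical edges in column $i$ from bottom to top, so that $e_i^{(0)}=w_i(\nu)$ and $e_i^{(n)}=w_i(\mu)$ are the prescribed lower and upper boundary values. Summing $a_{r,i}+e_i^{(r)}=a_{r,i+1}+e_i^{(r-1)}$ over the $n$ rows, the vertical terms telescope, and with $\theta_i=\sum_{r=1}^{n}a_{r,i+1}$ (the number of occupied horizontal edges separating columns $i$ and $i+1$) one obtains the recursion $\theta_i-\theta_{i-1}=e_i^{(n)}-e_i^{(0)}=w_i(\mu)-w_i(\nu)$. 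It then remains to fix the additive constant at the seam: on the cylinder columns $0$ and $N$ coincide, so $\theta_0=\theta_N$ counts the occupied outer horizontal edges, which by the definition of $\Gamma_{\nu/d/\mu}$ number $2d$ and coincide in pairs on the cylinder, whence $\theta_0=d$. Solving the recursion gives $\theta_i=d+\sum_{c=1}^{i}\big(w_c(\mu)-w_c(\nu)\big)=d+n_i(\mu)-n_i(\nu)$ by the definition (\ref{no}) of $n_i$, which is (\ref{theta}); the telescoped sum closes up consistently around the cylinder because $n_N(\mu)=n_N(\nu)=n$. Finally, each $\theta_i$ counts edges and is therefore non-negative, so $d\geq n_i(\nu)-n_i(\mu)$ for every $i\in I$, and taking the maximum over $i$ yields $d\geq d_{\min}(\nu,\mu)$.

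The computation is routine once the conventions are pinned down; the only genuinely delicate point is the bookkeeping at the seam --- correctly identifying the number of boundary-crossing walkers with the common value $\theta_0=\theta_N$ and with the \textquotedblleft $2d$ edges in pairs\textquotedblright\ count --- together with keeping straight which boundary word is $\mu$ and which is $\nu$, since a sign slip there would reverse the final inequality.
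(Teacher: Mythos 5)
Your proof is correct and follows essentially the same route as the paper's: the local conservation law $a+b=c+d$ at each allowed vertex (north plus west equals south plus east), telescoping down a column to get $\theta_i-\theta_{i-1}=w_i(\mu)-w_i(\nu)$, and anchoring the constant by identifying the $d$ boundary-crossing edges at the seam, with the inequality then following from $\theta_i\geq 0$. The only differences are cosmetic: you verify the conservation law from the nonzero entries of the $L$-matrix rather than from the figure, and you spell out the seam bookkeeping and the non-negativity step that the paper leaves implicit.
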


\begin{remark}
The integer $d_{\min }(\nu ^{\vee },\mu )$ is the minimal power appearing in
the quantum product $s_{\mu }\ast s_{\nu }$; see \cite{FW}. In fact, there
exists an interval of integers $[d_{\min },d_{\max }]$ describing all powers
occurring in this product. This was first conjectured in \cite{Yong} and
proved in \cite{Postnikov}.
\end{remark}

\begin{proof}
Denote by $w_{i}^{(j)},v_{i}^{(j)}\in \{0,1)$ the values of the respectively
horizontal and vertical lattice edges in column $i$ and row $j$ in the
configuration $\mathcal{C}$ or $\mathcal{C}^{\prime }$. Since the argument
is completely analogous for both, vicious and osculating walkers, we only
consider the former for the rest of the proof. By construction we have $%
v_{i}^{(0)}=w_{i}(\mu )$ and $v_{i}^{(n)}=w_{i}(\nu )$. Note that the
allowed configurations shown in Figure \ref{fig:QCvertex} preserve the sum
of values on the N and W edge and the E and S edge, that is $%
w_{i-1}^{(j)}+v_{i}^{(j-1)}=v_{i}^{(j)}+w_{i}^{(j)}$. Hence, we compute%
\begin{eqnarray*}
\theta _{i} &=&w_{i}^{(1)}+w_{i}^{(2)}+\cdots +w_{i}^{(n)} \\
&=&v_{i}^{(0)}+w_{i-1}^{(1)}-v_{i}^{(1)}+v_{i}^{(1)}+w_{i-1}^{(2)}-v_{i}^{(2)}+\cdots +v_{i}^{(n-1)}+w_{i-1}^{(n)}-v_{i}^{(n)}
\\
&=&\theta _{i-1}+v_{i}^{(0)}-v_{i}^{(n)}=\theta _{i-1}+w_{i}(\mu )-w_{i}(\nu
)\;.
\end{eqnarray*}%
The assertion now follows by observing that $\theta _{N}=d$, the number of
horizontal path edges on the boundary.
\end{proof}

%
%
%

\begin{lemma}
\label{degree} Let $\lambda\in(n,k)$. $K_{\nu /d/\mu ,\lambda }=0$ unless $%
|\lambda |+|\mu |-|\nu |=dN$.
\end{lemma}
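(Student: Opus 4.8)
The plan is to separate the enumeration from a purely combinatorial area count. By the very definition of the weight vector, a semi-standard cylindric tableau of shape $\nu/d/\mu$ carries exactly one integer in each box, and its weight $\alpha$ records how many boxes receive each value; consequently any tableau of weight $\lambda$ must have precisely $|\lambda|=\sum_i\lambda_i$ boxes. Therefore $K_{\nu/d/\mu,\lambda}=0$ whenever $|\lambda|$ differs from the number of boxes of the cylindric skew diagram $\nu/d/\mu$, and it suffices to prove that this number of boxes equals $|\nu|-|\mu|+dN$.

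First I would fix the fundamental domain given by the rows $i=d+1,\ldots,d+n$ and write the number of boxes as $\sum_{i=d+1}^{d+n}\big(\nu[d]_i-\mu[0]_i\big)$, using that in each row the boxes of $\nu/d/\mu$ are those $\langle i,j\rangle$ with $\mu[0]_i<j\le\nu[d]_i$. The upper boundary contributes immediately: from the explicit form of the cylindric loop one has $\nu[d]_{d+j}=\nu_j+d$ for $j=1,\ldots,n$, whence
\[
\sum_{i=d+1}^{d+n}\nu[d]_i=\sum_{j=1}^n(\nu_j+d)=|\nu|+nd .
\]
For the lower boundary I would start from $\sum_{i=1}^n\mu[0]_i=\sum_{j=1}^n\mu_j=|\mu|$ and exploit the quasi-periodicity $\mu[0]_{i+n}=\mu[0]_i-k$ of the cylindric loop: shifting the window of $n$ consecutive rows one step to the right removes $\mu[0]_\ell$ and inserts $\mu[0]_{\ell+n}=\mu[0]_\ell-k$, lowering the sum by exactly $k$, so after $d$ steps
\[
\sum_{i=d+1}^{d+n}\mu[0]_i=|\mu|-dk .
\]
Subtracting the two displays gives $|\nu|+nd-(|\mu|-dk)=|\nu|-|\mu|+d(n+k)=|\nu|-|\mu|+dN$ for the number of boxes, and the lemma follows.

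An equivalent route, perhaps more in keeping with the preceding material, is to invoke the bijection of Proposition \ref{prop:bijection}: under it the boxes of a toric tableau correspond to the horizontal path edges of the associated walker configuration, so the total number of boxes equals $\sum_{i=1}^N\theta_i$. Summing the identity (\ref{theta}) over all $N$ columns yields $\sum_{i=1}^N\theta_i=dN+\sum_{i=1}^N n_i(\mu)-\sum_{i=1}^N n_i(\nu)$, and a short computation with (\ref{no}) shows $\sum_{i=1}^N n_i(\mu)=n(N+1)-\tfrac{n(n+1)}{2}-|\mu|$; since the constant depends only on $n$ and $N$, the difference collapses to $|\nu|-|\mu|$, reproducing the same count. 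Either way the only real care needed is bookkeeping: keeping the index shift by $d$ and the $(n,-k)$-periodicity of the cylindric loops consistent, which is the one place an off-by-$k$ slip could enter.
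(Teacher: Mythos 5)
Your argument is correct, and your primary route is genuinely different from the paper's. The paper proves this lemma entirely on the lattice-path side: since $\theta_N=d$, the walker starting at $\ell_i(\mu)$ must end at $\ell_{i+d}(\nu)$, and the total number of horizontal steps (which is $|\lambda|$) is then $\sum_{i=1}^n\bigl(\ell_{i+d}(\nu)-\ell_i(\mu)\bigr)=|\nu|-|\mu|+dN$ by the periodicity $\ell_{i+n}=\ell_i+N$. You instead compute the area of the cylindric skew diagram directly from the definition of the cylindric loops, using $\nu[d]_{d+j}=\nu_j+d$ and the quasi-periodicity $\mu[0]_{i+n}=\mu[0]_i-k$; your index bookkeeping checks out and gives the same count $|\nu|-|\mu|+dN$. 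What your route buys is independence from Proposition \ref{prop:bijection} and the walker model altogether -- the lemma becomes a statement purely about cylindric diagrams, provable straight from the definitions in Section 5; what the paper's route buys is that the degree condition is read off as an immediate by-product of the endpoint matching of the non-intersecting paths, which is also used elsewhere. Your second route is essentially a reformulation of the paper's argument (summing (\ref{theta}) over columns rather than tracking endpoints), so it adds little. One small caveat, which you share with the paper: writing the number of boxes as the signed sum $\sum_i\bigl(\nu[d]_i-\mu[0]_i\bigr)$ presupposes the rowwise containment $\mu[0]_i\le\nu[d]_i$, i.e.\ that $\nu/d/\mu$ is a genuine cylindric skew shape; in the degenerate case the terms would need truncation at zero. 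Since the quantum Kostka numbers are only meaningfully nonzero for valid shapes (cf.\ the remark after (\ref{cylindricschur}) citing Postnikov), this does not affect the lemma, but it is worth a sentence.
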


\begin{proof}
Since $\theta _{N}(\nu ,\mu ,d)=d$ and the vicious and osculating paths are
non-intersecting we must have that the paths starting at $\ell _{N+1-d}(\mu
),\ldots ,\ell _{N}(\mu )$ end at $\ell _{1}(\nu ),\ldots ,\ell _{d}(\nu )$,
respectively. Thus, the paths starting at $\ell _{1}(\mu ),\ldots ,\ell
_{N-d}(\mu )$ end at $\ell _{d+1}(\nu ),\ldots ,\ell _{N}(\nu )$. Therefore, 
%
$|\lambda |=\sum_{i=1}^{n}(\ell _{i+d}(\nu )-\ell _{i}(\mu ))=|\nu |-|\mu
|+dN$. 
\end{proof}

\subsection{Path counting and Gromov-Witten invariants}

We now compute the weighted sums over lattice configurations of the vicious
and osculating walker models; these are called \emph{partition functions}.
Consider the following operator products $H(x_{1})\cdots H(x_{n})\in \mathbb{%
Z}[x_{1},\ldots ,x_{n}]\otimes \func{End}(V_{q}^{\otimes N})$ and $%
E(x_{1})\cdots E(x_{k})\in \mathbb{Z}[x_{1},\ldots ,x_{k}]\otimes \func{End}%
(V_{q}^{\otimes N})$. Then by construction we have for the vicious walker
model%
\begin{equation}
\langle \nu |H(x_{1})\cdots H(x_{n})|\mu \rangle =\sum_{\mathcal{C}\in
\Gamma _{\nu ,\mu }}\prod_{\langle i,j\rangle \in \mathbb{L}}\func{wt}(%
\mathrm{v}_{i,j})\in \mathbb{Z}[x_{1},\ldots ,x_{n}],  \label{5vZ}
\end{equation}%
and for the osculating walker model, 
\begin{equation}
\langle \nu |E(x_{1})\cdots E(x_{k})|\mu \rangle =\sum_{\mathcal{C}^{\prime
}\in \Gamma _{\nu ,\mu }^{\prime }}\prod_{\langle i,j\rangle \in \mathbb{L}%
^{\prime }}\func{wt}\nolimits^{\prime }(\mathrm{v}_{i,j})\in \mathbb{Z}%
[x_{1},\ldots ,x_{k}],  \label{5vZ'}
\end{equation}%
where $\langle \nu |X|\mu \rangle $ is shorthand for $\langle b(\nu ),Xb(\mu
)\rangle $ and $\func{wt}(\mathrm{v}_{i,j})$, $\func{wt}^{\prime }(\mathrm{v}%
_{i,j})$ denote the vertex weights of the vicious and osculating walker
models in Figures \ref{fig:QCvertex}, \ref{fig:QCvertex2}.

\begin{corollary}
The partition function (\ref{5vZ}) has the following expansion in toric
Schur polynomials 
\begin{equation}
\langle \nu |H(x_{1})\cdots H(x_{n})|\mu \rangle =\sum_{d\geq
0}q^{d}s_{\nu/d/\mu }(x_{1},\ldots ,x_{n})\,.  \label{ZgeneratesGW}
\end{equation}%
The analogous expansion of the partition function (\ref{5vZ'}) for the
osculating walker model reads%
\begin{equation}
\langle \nu |E(x_{1})\cdots E(x_{k})|\mu \rangle =\sum_{d\geq
0}q^{d}s_{\nu^{\prime}/d/\mu^{\prime}}(x_{1},\ldots ,x_{k})\;.
\label{Z'generatesGW}
\end{equation}%
In particular, for $\lambda\in(n,k)$ the quantum Kostka numbers are given by
the matrix elements 
\begin{equation}
q^dK_{\nu /d/\mu ,\lambda }=\langle \nu |H_{\lambda }|\mu \rangle =\langle
\nu ^{\prime }|E_{\lambda }|\mu ^{\prime }\rangle \;,  \label{Kmatrix}
\end{equation}
where $H_{\lambda }=H_{\lambda _{1}}\cdots H_{\lambda _{n}}$, $E_{\lambda
}=E_{\lambda _{1}}\cdots E_{\lambda _{n}}$ and $d=(|\lambda|+|\mu|-|\nu|)/N$.
\end{corollary}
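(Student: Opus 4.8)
The plan is to prove the generating-function identity (\ref{ZgeneratesGW}) first and then obtain (\ref{Kmatrix}) by extracting a single monomial coefficient. Starting from the defining expression (\ref{5vZ}), the matrix element $\langle\nu|H(x_1)\cdots H(x_n)|\mu\rangle$ is the sum of $\func{wt}(\mathcal{C})$ over all admissible cylinder configurations $\mathcal{C}\in\Gamma_{\nu,\mu}$. I would first partition $\Gamma_{\nu,\mu}=\bigsqcup_{d\geq 0}\Gamma_{\nu/d/\mu}$ according to the number $d$ of walkers crossing the glued boundary. The key bookkeeping point is that, because each row transfer matrix is $H(x_i)=\func{Tr}_0 q^{(\sigma^+\sigma^-)_0}L_{0N}(x_i)\cdots L_{01}(x_i)$, the quasi-periodic boundary operator contributes one factor of $q$ for every row whose outer horizontal edge is occupied; a configuration in $\Gamma_{\nu/d/\mu}$ has exactly $d$ such rows (the $2d$ boundary edges come in pairs), so its total weight is $q^d x^{\alpha(\mathcal{C})}$, where $\alpha(\mathcal{C})$ records the number of horizontal path edges in each lattice row. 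This matches the two-term action (\ref{Haction}), in which the $q$-term is precisely the contribution of a boundary ribbon.

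Next I would invoke Proposition~\ref{prop:bijection}: for fixed $d$ the map constructed there identifies $\Gamma_{\nu/d/\mu}(\alpha)$ with the toric skew tableaux of shape $\nu/d/\mu$ and weight $\alpha$, so $\sum_{\mathcal{C}\in\Gamma_{\nu/d/\mu}}x^{\alpha(\mathcal{C})}=\sum_\alpha K_{\nu/d/\mu,\alpha}x^\alpha$. Since the operators $H(x_i)$ commute pairwise, the matrix element is symmetric in $x_1,\dots,x_n$, and as $q$ is an independent variable each coefficient of $q^d$ is separately symmetric; hence $\sum_\alpha K_{\nu/d/\mu,\alpha}x^\alpha=\sum_\beta K_{\nu/d/\mu,\beta}m_\beta(x)=s_{\nu/d/\mu}(x_1,\dots,x_n)$ by the definition (\ref{cylindricschur}). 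Summing over $d$ yields (\ref{ZgeneratesGW}). The osculating identity (\ref{Z'generatesGW}) follows in the same way from (\ref{5vZ'}) and the osculating half of Proposition~\ref{prop:bijection} (shape $\nu'/d/\mu'$), or alternatively from level-rank duality, Lemma~\ref{levelrank}.

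Finally, for (\ref{Kmatrix}) I would extract the coefficient of the monomial $x^\lambda=x_1^{\lambda_1}\cdots x_n^{\lambda_n}$, $\lambda\in(n,k)$, from both sides of (\ref{ZgeneratesGW}). On the left, expanding $H(x_i)=\sum_r x_i^r H_r$ shows this coefficient is $H_{\lambda_1}\cdots H_{\lambda_n}=H_\lambda$, giving $\langle\nu|H_\lambda|\mu\rangle$; on the right it is $\sum_{d\geq 0}q^d K_{\nu/d/\mu,\lambda}$, using that $s_{\nu/d/\mu}$ is symmetric so that $x^\lambda$ occurs only through $m_\lambda$ with coefficient $K_{\nu/d/\mu,\lambda}$. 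By Lemma~\ref{degree} the Kostka number vanishes unless $|\lambda|+|\mu|-|\nu|=dN$, so exactly one value $d=(|\lambda|+|\mu|-|\nu|)/N$ survives and $\langle\nu|H_\lambda|\mu\rangle=q^d K_{\nu/d/\mu,\lambda}$. The equality with $\langle\nu'|E_\lambda|\mu'\rangle$ then follows from Lemma~\ref{levelrank}: since $\Theta H_r=E_r\Theta$ and $\Theta$ is the self-adjoint involution $|w\rangle\mapsto|w'\rangle$ sending $|\mu\rangle$ to $|\mu'\rangle$, one gets $E_\lambda|\mu'\rangle=\Theta H_\lambda|\mu\rangle$, whence $\langle\nu'|E_\lambda|\mu'\rangle=\langle\nu'|\Theta H_\lambda|\mu\rangle=\langle\nu|H_\lambda|\mu\rangle$. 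I expect the only genuinely delicate point to be the correct accounting of the $q^d$ weight against the degree $d$ of a cylinder configuration; everything else is the assembly of (\ref{5vZ}), Proposition~\ref{prop:bijection}, the symmetry coming from commutativity of the transfer matrices, and Lemma~\ref{degree}.
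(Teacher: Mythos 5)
Your proposal is correct and uses exactly the same ingredients as the paper's proof — the bijection of Proposition~\ref{prop:bijection}, the symmetry of the partition function coming from $H(x)H(y)=H(y)H(x)$, the combinatorial definition (\ref{cylindricschur}) of toric Schur functions, Lemma~\ref{degree} for the degree selection, and Lemma~\ref{levelrank} for the osculating case — merely assembled in the reverse order (you partition configurations by degree first and then symmetrise, whereas the paper expands in monomial symmetric functions first and then identifies the coefficients $\langle\nu|H_\lambda|\mu\rangle$ as $q^d K_{\nu/d/\mu,\lambda}$ via the bijection). This is the same proof in substance.
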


\begin{proof}
Let $m_\lambda$ be the monomial symmetric function. We have the identities 
\begin{eqnarray}
\langle \nu |H(x_{1})\cdots H(x_{n})|\mu \rangle &=&
\sum_{\lambda\in(n,k)}\langle \nu |H_{\lambda }|\mu \rangle m_{\lambda
}(x_{1},\ldots ,x_{n})  \label{Cauchy1} \\
&=& \sum_{\lambda\in(n,k)}q^{d(\lambda)}K_{\nu /d/\mu ,\lambda }m_{\lambda
}(x_{1},\ldots ,x_{n}) ,  \label{Cauchy1a}
\end{eqnarray}%
where (\ref{Cauchy1}) is a direct consequence of the expansion (\ref{H}) and
the fact that the partition function (\ref{5vZ}) must be symmetric in the $%
x_{i}$'s due to $H(x)H(y)=H(y)H(x)$. The asserted identity (\ref{Cauchy1a})
with $d(\lambda)N=|\lambda|+|\mu|-|\nu|$ follows from the bijection
described in the proof of Prop \ref{prop:bijection} and Lemma \ref{degree}.
Recalling that the monomial symmetric functions form a basis in the ring of
symmetric function the last equality is proved. The argument is completely
analogous for the osculating walkers.
\end{proof}

We now recover the following result from \cite[Thm 5.3]{Postnikov} and \cite[%
Thm 10.8]{KS}.

\begin{corollary}
\label{cor:toric} Let $n\neq 0,N$. One has the following expansion of toric
Schur functions,%
\begin{equation}
s_{\nu /d/\mu }(x_{1},\ldots ,x_{n})=\sum_{|\lambda |+|\mu |-|\nu
|=dN}C_{\lambda \mu }^{\nu ,d}s_{\lambda }(x_{1},\ldots ,x_{n})\,,
\label{toricschur}
\end{equation}%
where $C_{\lambda \mu }^{\nu ,d}=\langle \nu |S_{\lambda }|\mu \rangle $ and 
$S_{\lambda }=\det (H_{\lambda _{i}-i+j})_{1\leq i,j\leq n}$.
\end{corollary}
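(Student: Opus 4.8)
The plan is to re-expand the operator product $H(x_1)\cdots H(x_n)$ in the Schur basis rather than the monomial basis used in (\ref{Cauchy1}), and then to match the outcome against the partition-function identity (\ref{ZgeneratesGW}). The crucial input is that the $H_r$ commute pairwise (the integrability corollary), so they behave exactly like a family of commuting complete-homogeneous symbols. First I would establish the operator identity
\begin{equation*}
H(x_1)\cdots H(x_n)=\sum_{\lambda}s_\lambda(x_1,\ldots,x_n)\,S_\lambda,\qquad S_\lambda=\det(H_{\lambda_i-i+j})_{1\le i,j\le n},
\end{equation*}
which is the lift to the commuting symbols $\{H_r\}$ of the Cauchy identity $\prod_{i,j}(1-x_iy_j)^{-1}=\sum_\lambda s_\lambda(x)s_\lambda(y)$ combined with the dual Jacobi--Trudi formula $s_\lambda=\det(h_{\lambda_i-i+j})$. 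Since only $n$ variables $x_1,\ldots,x_n$ are present, only partitions with at most $n$ parts survive, matching the size of the determinant.

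Sandwiching this identity between $\langle\nu|$ and $|\mu\rangle$ and comparing with (\ref{ZgeneratesGW}) gives
\begin{equation*}
\sum_{d\ge0}q^d s_{\nu/d/\mu}(x)=\sum_\lambda\langle\nu|S_\lambda|\mu\rangle\,s_\lambda(x).
\end{equation*}
Because the $s_\lambda(x_1,\ldots,x_n)$ are linearly independent, I can read off coefficients. The powers of $q$ are sorted by degree: the wrapping part of each $H_r$ in (\ref{H}) carries exactly one factor of $q$, so $\langle\nu|S_\lambda|\mu\rangle$ is homogeneous and, by Lemma \ref{degree}, supported on the single power $q^d$ with $d=(|\lambda|+|\mu|-|\nu|)/N$; that is, $\langle\nu|S_\lambda|\mu\rangle=q^dC_{\lambda\mu}^{\nu,d}$, matching the form of (\ref{Kmatrix}). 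Matching the coefficient of $q^d s_\lambda(x)$ then yields $s_{\nu/d/\mu}(x)=\sum_{\lambda}C_{\lambda\mu}^{\nu,d}s_\lambda(x)$.

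It remains to identify $\langle\nu|S_\lambda|\mu\rangle$ with the Gromov--Witten invariant and to confirm that the sum is supported on $\lambda\in(n,k)$. Here I would invoke the algebra isomorphism $\mathcal{A}_{n,k}\cong qH^*(\func{Gr}_{n,n+k})\otimes_{\mathbb{Z}}\mathbb{C}$ of Corollary \ref{cor:iso}, under which $H_j\mapsto h_j$ and hence $S_\lambda=\det(H_{\lambda_i-i+j})\mapsto\det(h_{\lambda_i-i+j})=s_\lambda$, the Schubert class, by the quantum Giambelli formula \cite{Bertram}. Thus $S_\lambda$ acts as quantum multiplication by $s_\lambda$, and $\langle\nu|S_\lambda|\mu\rangle$ is precisely the structure constant of $s_\lambda\ast s_\mu$ in (\ref{GWinv}), i.e.\ the $3$-point genus-$0$ Gromov--Witten invariant $C_{\lambda\mu}^{\nu,d}$. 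Since $\{s_\lambda\}_{\lambda\in(n,k)}$ is a basis of $qH^*$, coefficients with $\lambda\notin(n,k)$ reduce via the relation (\ref{Ipoly0})/(\ref{TQ_eqn}) together with the vanishing $H_r|V_{n,q}=0$ for $r>k$, so the surviving sum runs over $\lambda\in(n,k)$ as claimed. The main obstacle is exactly this bookkeeping step: proving the Schur-expansion identity cleanly as operators on $V_{n,q}$ and verifying the truncation to the bounding box along with the correct alignment of $q$-powers; the Gromov--Witten identification itself is immediate once the ring isomorphism of Corollary \ref{cor:iso} is in hand.
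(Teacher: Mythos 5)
Your argument is correct, and its first half is essentially the paper's proof in different packaging: the operator identity $H(x_{1})\cdots H(x_{n})=\sum_{\lambda }s_{\lambda }(x)S_{\lambda }$ is exactly the basis change from $m_{\alpha }$ to $s_{\lambda }$ that the paper performs via the inverse Kostka matrix $L_{\alpha \lambda }$ and the expansion $S_{\lambda }=\sum_{w}(-1)^{w}H_{w(\lambda +\rho )-\rho }$, and your $q$-bookkeeping via Lemma \ref{degree} and (\ref{Kmatrix}) matches the paper's. Where you genuinely diverge is in identifying $\langle \nu |S_{\lambda }|\mu \rangle $ with the Gromov--Witten invariant: the paper routes this through the Bethe-vector machinery, namely the product expansion (\ref{Schurprod1}), the Bertram--Vafa--Intriligator formula (\ref{Schurprod2}), the eigenvalue equations (\ref{Hspec})--(\ref{Espec}) and Theorem \ref{completeness}, whereas you invoke Corollary \ref{cor:iso} together with the quantum Giambelli formula. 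Your route is shorter, but note that the algebra isomorphism of Corollary \ref{cor:iso} alone does not identify matrix elements with structure constants; you additionally need that $|\mu \rangle \mapsto s_{\mu }$ intertwines $V_{n,q}$ with the regular representation of $qH_{n,k}^{\ast }$, which holds because the action (\ref{Haction}) is precisely Bertram's quantum Pieri rule --- this should be said explicitly. Two smaller points: the truncation of the sum to $\lambda \in (n,k)$ does not need (\ref{Ipoly0}) or (\ref{TQ_eqn}) at all --- for $\lambda _{1}>k$ every monomial in the expansion of $\det (H_{\lambda _{i}-i+j})$ contains a factor $H_{r}$ with $r>k$, which by commutativity may be moved to act first on $|\mu \rangle $ and kills the term, while for $\ell (\lambda )>n$ the polynomial $s_{\lambda }(x_{1},\ldots ,x_{n})$ already vanishes; and your normalisation $\langle \nu |S_{\lambda }|\mu \rangle =q^{d}C_{\lambda \mu }^{\nu ,d}$ is in fact the more careful statement of what the corollary asserts. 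Each approach buys something: yours makes the logical dependence on Bertram's quantum Pieri and Giambelli formulas explicit, while the paper's keeps the identification internal to the integrable-model formalism it has already built.
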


\begin{proof}
Recall from the ring of symmetric functions the familiar expansion $%
m_{\alpha }=\sum_{\lambda }L_{\alpha \lambda }s_{\lambda }$, where $%
L_{\alpha \lambda }=\sum_{w}^{\prime }(-1)^{w}$ with the sum ranging over
all permutations $w\in S_{n}$ such that $w(\lambda +\rho )-\rho \in
S_{n}\alpha $. Here $\rho =(n-1,\ldots ,2,1,0)$ is the Weyl vector.
Employing this identity together with $S_{\lambda
}=\sum_{w}(-1)^{w}H_{w(\lambda +\rho )-\rho }$, one proves that the
expansion coefficients of the partition function (\ref{ZgeneratesGW}) into
Schur functions are the matrix elements $\langle \nu |S_{\lambda }|\mu
\rangle $. That the latter equal the Gromov-Witten invariants can be derived
from Cor \ref{cor:iso} and is a consequence of (\ref{Schurprod1}), (\ref%
{Schurprod2}) together with the eigenvalue equations (\ref{Hspec}), (\ref%
{Espec}) and Theorem \ref{completeness}.
\end{proof}

Let $c(s)=j-i$ be the \emph{content} and $h(s)=\lambda _{i}+\lambda
_{j}^{\prime }-i-j+1$ the \emph{hook-length} of a square $s=(i,j)\in \lambda 
$. As a special case of the last corollary we obtain the following solution
to the counting problem of non-intersecting paths on the cylinder which is a
refinement of the one stated in the introduction.

\begin{corollary}
Setting $x_{1}=\cdots =x_{n}=1$ we have that 
\begin{equation}
|\Gamma _{\nu /d/\mu }|=\sum_{\substack{ \lambda \in (n,k)  \\ |\lambda
|+|\mu |-|\nu |=dN}}C_{\lambda \mu }^{\nu ,d}\prod_{s\in \lambda }\frac{%
n+c(s)}{h(s)}=|\Gamma _{\nu ^{\prime }/d/\mu ^{\prime }}^{\prime }|\;.
\label{Zcardinality}
\end{equation}%
%
%
%
%
%
%
%
%
%
%
%
%
%
%
%
%
%
%
%
%
%
%
%
%
%
%
%
%
%
%
\end{corollary}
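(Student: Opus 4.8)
The plan is to assemble the identity from the toric Schur expansion of the vicious walker partition function together with the classical hook-content formula for the principal specialisation of a Schur function; the two outer equalities then follow from Corollary~\ref{cor:toric} and Proposition~\ref{prop:bijection}, respectively.

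First I would specialise all spectral parameters to $1$ in the partition function (\ref{ZgeneratesGW}). By construction (\ref{5vZ}) the matrix element $\langle \nu |H(x_{1})\cdots H(x_{n})|\mu \rangle$ is the weighted sum over all configurations in $\Gamma _{\nu ,\mu }$, where a configuration of degree $d$ and weight $\alpha $ contributes $q^{d}x^{\alpha }$ (the power of $q$ counting the boundary crossings, as recorded by the operator $q^{(\sigma ^{+}\sigma ^{-})_{0}}$ under the trace in (\ref{5vH2})). Setting $x_{1}=\cdots =x_{n}=1$ collapses every monomial $x^{\alpha }$ to $1$, so the coefficient of $q^{d}$ on the left becomes $\sum _{\alpha }|\Gamma _{\nu /d/\mu }(\alpha )|=|\Gamma _{\nu /d/\mu }|$, while by (\ref{ZgeneratesGW}) the same coefficient equals $s_{\nu /d/\mu }(1,\ldots ,1)$. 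Comparing coefficients of $q^{d}$ gives $|\Gamma _{\nu /d/\mu }|=s_{\nu /d/\mu }(1,\ldots ,1)$ with $n$ variables set to $1$.

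Next I would feed in Corollary~\ref{cor:toric}, which expands $s_{\nu /d/\mu }=\sum _{\lambda }C_{\lambda \mu }^{\nu ,d}s_{\lambda }$ with the sum restricted to $|\lambda |+|\mu |-|\nu |=dN$ by Lemma~\ref{degree}. Evaluating both sides at $x_{1}=\cdots =x_{n}=1$ and invoking the hook-content formula for the principal specialisation,
\begin{equation*}
s_{\lambda }(\underbrace{1,\ldots ,1}_{n})=\prod _{s\in \lambda }\frac{n+c(s)}{h(s)}
\end{equation*}
(see e.g. \cite[Chapter I]{Macdonald}), reproduces the middle expression and hence the first equality. Here $N=n+k$ as throughout, and the restriction of the summation range is exactly the one imposed by Lemma~\ref{degree}.

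The remaining equality $|\Gamma _{\nu /d/\mu }|=|\Gamma _{\nu ^{\prime }/d/\mu ^{\prime }}^{\prime }|$ requires no new computation: it is obtained by summing the weight-refined bijection $|\Gamma _{\nu /d/\mu }(\alpha )|=|\Gamma _{\nu ^{\prime }/d/\mu ^{\prime }}^{\prime }(\alpha )|$ of Proposition~\ref{prop:bijection} over all weight vectors $\alpha \in \mathbb{Z}_{\geq 0}^{n}$. I do not expect a genuine obstacle, since the statement is a corollary that merely specialises results already in hand; the only points demanding care are the correct bookkeeping of the $q$-grading when passing from configurations to the coefficient of $q^{d}$, and citing the principal specialisation formula in the right number ($n$) of variables, matching the $n$ in the numerator $n+c(s)$.
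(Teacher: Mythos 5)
Your argument is correct and is essentially the paper's own proof: both rest on specialising the partition-function identity (\ref{ZgeneratesGW}) at $x_{1}=\cdots =x_{n}=1$, applying the hook-content formula $s_{\lambda }(1,\ldots ,1)=\prod_{s\in \lambda }\frac{n+c(s)}{h(s)}$ together with Corollary \ref{cor:toric}, and deducing the equality with the osculating count from the already-established bijection/level-rank duality (the paper phrases this last step as $\langle \nu |H^{n}|\mu \rangle =\langle \nu ^{\prime }|E^{k}|\mu ^{\prime }\rangle$ at $q=1$, you as a sum over the weight-refined bijection of Proposition \ref{prop:bijection}). Your explicit extraction of the coefficient of $q^{d}$ is, if anything, slightly more careful than the paper's $q=1$ shortcut, but the route is the same.
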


\begin{proof}
Trivial consequence of the known identity $s_{\lambda }(1,\ldots
,1)=\prod_{s\in \lambda }\frac{n+c(s)}{h(s)}$ for Schur functions, see e.g. 
\cite[Chapter I, Ex 1(a), page 26]{Macdonald}. The above result then simply
states that setting $q=1$ we have%
\begin{equation}
|\Gamma _{\nu ,\mu }|=\langle \nu |H^{n}|\mu \rangle =\sum_{d\geq 0}s_{\nu
/d/\mu }(1,\ldots ,1)=\langle \nu ^{\prime }|E^{k}|\mu ^{\prime }\rangle
=|\Gamma _{\nu ^{\prime },\mu ^{\prime }}^{\prime }|\;
\end{equation}%
where $H=H(1)$ and$\ E=E(1)$.
\end{proof}

\begin{remark}
For $q=0$ we recover the result for configurations on the finite open strip
by applying the Gessel-Viennot Theorem; see e.g. \cite{Forrester,Forrester2}%
, \cite{BrakOw} and \cite{Guttmann1,Guttmann2,Guttmann3}. This yields the
familiar determinant (Jacobi-Trudi) formula for skew Schur functions. 
For general $q$ the only known expressions involve sums of determinants; see 
\cite[Equation (6.4), page 299]{McNamara} and \cite[Prop 1]%
{GesselKrattenthaler}.
\end{remark}

\section{Toric Schur function identities}

\label{Sec:Demazure}

We are now in the position to derive the formula (\ref{toricDemazure}) and
the generating function (\ref{toric_generating}) stated in the introduction.
We start with proving the identity (\ref{cop2}).

\begin{proof}[Proof of Proposition \protect\ref{Prop:cop}]
It follows from our previous results (\ref{ZgeneratesGW}) and (\ref{Bethev1}%
) that we have the following expansion,%
\begin{equation*}
\langle \lambda |H(y_{n})\cdots H(y_{1})B(x_{n})\cdots B(x_{1})|0\rangle
=x^{\delta _{n}}\sum_{\substack{ \mu \in (n,k)  \\ d\geq 0}}q^{d}s_{\lambda
/d/\mu }(y)s_{\mu }(x)\;.
\end{equation*}%
But according to Prop \ref{Frob2toric} the right hand side - apart from the
monomial factor $x^{\delta _{n}}$ - is the image of the Schubert class $%
s_{\lambda }$ under the coproduct of the Verlinde algebra. The analogous
result holds true for osculating walkers using Lem \ref{levelrank}.
\end{proof}

We now invoke the commutation relations of the Yang-Baxter algebra (\ref%
{YBA1}) to provide a closed formula for the above matrix element in terms of
the divided difference operators (\ref{nilHeckerep}).

\begin{proof}[Proof of Corollary \protect\ref{Cor:Demazure}]
Recall Lemma \ref{Lem:ybaDemazure}, then it follows from the definition $%
H(x;q)=A(x)+qD(x)$ that $H(x_{i+1};q)B(x_{i})f=\partial
_{i}B(x_{i+1})H(x_{i};-q)f$ for any $f\in \mathbb{C}[x_{i},x_{i+1},q]\otimes
V^{\otimes N}$ which is symmetric in $x_{i},x_{i+1}$. For our purposes it
suffices to make the stronger assumption that $f$ does not depend on $%
x_{i},x_{i+1}$ and, thus, we simply write $H(x_{i+1};q)B(x_{i})=\partial
_{i}B(x_{i+1})H(x_{i};-q)$ as an operator identity. We prove the following
formula by induction,%
\begin{equation*}
H(x_{n};q)B(x_{n-1})\cdots B(x_{1})=\partial _{n-1}\cdots \partial
_{2}\partial _{1}B(x_{n})\cdots B(x_{2})H(x_{1};(-1)^{n-1}q)\;.
\end{equation*}%
As just discussed Lemma \ref{Lem:ybaDemazure} tells us that the assertion is
true for $n=2$ by setting $i=1$. Assume it holds true for some $n\geq 2$.
Then it follows from Lemma \ref{Lem:ybaDemazure} that%
\begin{multline*}
H(x_{n+1};q)B(x_{n})B(x_{n-1})\cdots B(x_{1})= \\
\partial _{n}B(x_{n+1})H(x_{n};-q)B(x_{n-1})\cdots B(x_{1})= \\
\partial _{n}\partial _{n-1}\cdots \partial _{1}B(x_{n})B(x_{n-1})\cdots
B(x_{2})H(x_{1};(-1)^{n}q)
\end{multline*}%
which completes the proof by induction.

Setting $y_{i}=x_{i+n}$ and $\nabla _{i}=\partial _{n-1+i}\cdots \partial
_{i+1}\partial _{i}$ one now computes 
\begin{multline*}
H(y_{n};q)\cdots H(y_{1};q)B(x_{n})\cdots B(x_{1})= \\
H(x_{2n};q)\cdots H(x_{n+2};q)\nabla_{1}B(x_{n+1})\cdots
B(x_{2})H(x_{1};(-1)^{n}q)= \\
\nabla _{1}H(x_{2n};q)\cdots H(x_{n+2};q)B(x_{n+1})\cdots
B(x_{2})H(x_{1};(-1)^{n}q)= \\
\cdots \\
=\nabla _{1}\cdots \nabla _{n}B(x_{2n})\cdots
B(x_{n+1})H(x_{n};(-1)^{n}q)\cdots H(x_{1};(-1)^{n}q)\;.
\end{multline*}%
Taking the matrix element $\langle \lambda |\ldots |0\rangle $ on both sides
of the last identity and exploiting that $H(x_{i};(-1)^{n}q)|0\rangle
=(1+qx_{i}^{N})|0\rangle $ we find the asserted generating function (\ref%
{toric_generating}) and the identity (\ref{toricDemazure}).
\end{proof}

\begin{example}
Consider $qH^{\ast }(\limfunc{Gr}_{2,4})$. There are six basis elements each
corresponding to a partition in the $2\times 2$ bounding box. Pick any $%
\lambda $ with Young diagram inside the $2\times 2$ bounding box. Invoking (%
\ref{toric_generating}) and (\ref{toricDemazure}) one first computes $%
x^{-\delta _{2}}D_{1}D_{2}\cdots D_{n}y^{\delta _{2}}F_{\lambda }(x;y)$ via (%
\ref{nilHeckerep}) before expanding the result into Schur functions in the $%
x $-variables,%
\begin{equation*}
\begin{tabular}{c||c|c|c|c|c|c|}
$\lambda $ & $\emptyset $ & $(1,0)$ & $(1,1)$ & $(2,0)$ & $(2,1)$ & $(2,2)$
\\ \hline
$s_{\lambda }$ & $1$ & $x_{1}+x_{2}$ & $x_{1}x_{2}$ & $%
x_{1}^{2}+x_{1}x_{2}+x_{2}^{2}$ & $x_{1}^{2}x_{2}+x_{1}x_{2}^{2}$ & $%
x_{1}^{2}x_{2}^{2}$%
\end{tabular}%
\;.
\end{equation*}%
Since the latter form a basis this amounts to solving a linear system of
equations. These steps can be readily implemented on a computer using one's
favourite symbolic computation package such as \emph{Mathematica} or \emph{%
Maple}. The solution yields the toric Schur functions in the $y$-variables
on the left hand side of (\ref{toricDemazure}) which in a similar manner can
be expanded into Schur functions. The table below lists the expansion of the
toric Schur functions $\sum_{d\geq 0}q^{d}s_{\lambda /d/\mu }(y)$ for the
given values of $\lambda $ and $\mu $.%
\begin{equation*}
\begin{tabular}{||c|c|c|c|c|c|}
\hline
$\left. \lambda \right\backslash \mu $ & $(1,0)$ & $(1,1)$ & $(2,0)$ & $%
(2,1) $ & $(2,2)$ \\ \hline
$\emptyset $ & \multicolumn{1}{||c|}{$qs_{2,1}$} & $qs_{2,0}$ & $qs_{1,1}$ & 
$qs_{1,0}$ & $q^{2}s_{2,2}$ \\ \hline
$(1,0)$ & \multicolumn{1}{||c|}{$1+qs_{2,2}$} & $qs_{2,1}$ & $qs_{2,1}$ & $%
qs_{2,0}+qs_{1,1}$ & $qs_{1,0}$ \\ \hline
$(1,1)$ & \multicolumn{1}{||c|}{$s_{1,0}$} & $1$ & $qs_{2,2}$ & $qs_{2,1}$ & 
$qs_{2,0}$ \\ \hline
$(2,0)$ & \multicolumn{1}{||c|}{$s_{1,0}$} & $qs_{2,2}$ & $1$ & $qs_{2,1}$ & 
$qs_{1,1}$ \\ \hline
$(2,1)$ & \multicolumn{1}{||c|}{$s_{2,0}+s_{1,1}$} & $s_{1,0}$ & $s_{1,0}$ & 
$1+qs_{2,2}$ & $qs_{2,1}$ \\ \hline
$(2,2)$ & \multicolumn{1}{||c|}{$s_{2,1}$} & $s_{1,1}$ & $s_{2,0}$ & $%
s_{1,0} $ & $1$ \\ \hline
\end{tabular}%
\end{equation*}%
Since $s_{\lambda /d/\mu }=\sum_{\nu \in (n,k)}C_{\mu \nu }^{\lambda
,d}s_{\nu }$ one can easily read off the Gromov-Witten invariants and the
above table contains the complete multiplication table for $qH^{\ast }(%
\limfunc{Gr}_{2,4})$. For instance, the table entry $\lambda =(1,0)$ and $%
\mu =(2,1)$ reads $qs_{2,0}+qs_{1,1}$ from which we infer $%
C_{(2,1),(2,0)}^{(1,0),1}=1$ and $C_{(2,1),(1,1)}^{(1,0),1}=1$. For more
complicated examples with $n+k>4$ Gromov-Witten invariants greater than one
will occur.
\end{example}

\begin{corollary}
Assume $N=n+k$ with $k\geq n$ and $0<r\leq n$. For any $\lambda ,\mu \in
(n,k)$ we have the following identities for toric Schur functions%
\begin{multline*}
\sum_{\substack{ d+d^{\prime }=r  \\ \mu \in (n,k)}}s_{\lambda /d/\mu
}(x_{1},\ldots ,x_{n})s_{\mu ^{\prime }/d^{\prime }/\nu ^{\prime
}}(-x_{1},\ldots ,-x_{n},x_{n+1},\ldots ,x_{k})= \\
\sum_{d=0}^{r}(-1)^{n(r-d)}e_{r-d}(x_{1}^{N},\ldots ,x_{n}^{N})s_{\lambda
^{\prime }/d^{\prime }/\nu ^{\prime }}(x_{n+1},\ldots ,x_{k},0,\ldots ,0)
\end{multline*}%
which for $n=k$ specialise to the statement in the introduction.
\end{corollary}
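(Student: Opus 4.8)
The plan is to recognise both sides as the coefficient of $q^{r}$ in a single matrix element of a product of transfer matrices, and then to collapse that product using the functional relation of Proposition \ref{TQ}. Throughout I read the free partitions as $\lambda,\nu\in(n,k)$, the summation as running over $\mu\in(n,k)$ with $d+d'=r$, and the conjugate shape on the right as $s_{\lambda'/d/\nu'}$ (matching the outer index $d$).

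First I would invoke the generating identities (\ref{ZgeneratesGW}) and (\ref{Z'generatesGW}). The factor $s_{\lambda/d/\mu}(x_{1},\ldots,x_{n})$ is the coefficient of $q^{d}$ in $\langle\lambda|H(x_{1})\cdots H(x_{n})|\mu\rangle$, while $s_{\mu'/d'/\nu'}(-x_{1},\ldots,-x_{n},x_{n+1},\ldots,x_{k})$ is the coefficient of $q^{d'}$ in $\langle\mu|E(-x_{1})\cdots E(-x_{n})E(x_{n+1})\cdots E(x_{k})|\nu\rangle$. Since $\{|\mu\rangle\}_{\mu\in(n,k)}$ is an orthonormal basis of $V_{n}$, the sum over $\mu$ is a resolution of the identity, so the left-hand side is exactly the coefficient of $q^{r}$ in
\[
\langle\lambda|H(x_{1})\cdots H(x_{n})\,E(-x_{1})\cdots E(-x_{n})\,E(x_{n+1})\cdots E(x_{k})|\nu\rangle .
\]

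Next I would use that all the operators $H(x_{i})$ and $E(y)$ commute pairwise (the corollary following (\ref{YBE3})), so I may reorder the product to pair each $H(x_{i})$ with $E(-x_{i})$. On $V_{n}$ the operator $\prod_{j}\sigma_{j}^{z}$ acts as the scalar $(-1)^{n}$, so Proposition \ref{TQ} reduces to the scalar identity $H(x_{i})E(-x_{i})=1+(-1)^{n}qx_{i}^{N}$ on $V_{n}$. Hence the matrix element factorises as
\[
\Big(\prod_{i=1}^{n}\big(1+(-1)^{n}qx_{i}^{N}\big)\Big)\,\langle\lambda|E(x_{n+1})\cdots E(x_{k})|\nu\rangle .
\]
Because $A'(0)=1$ and $D'(0)=0$ one has $E(0)=1$, so I can append $n$ trivial factors $E(0)$ and apply (\ref{Z'generatesGW}) with $k$ spectral parameters to get $\langle\lambda|E(x_{n+1})\cdots E(x_{k})|\nu\rangle=\sum_{d'\geq0}q^{d'}s_{\lambda'/d'/\nu'}(x_{n+1},\ldots,x_{k},0,\ldots,0)$.

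Finally I would expand $\prod_{i=1}^{n}(1+(-1)^{n}qx_{i}^{N})=\sum_{j=0}^{n}(-1)^{nj}q^{j}e_{j}(x_{1}^{N},\ldots,x_{n}^{N})$ and read off the coefficient of $q^{r}$: the constraint $j+d'=r$ forces $j=r-d'$, which reproduces the right-hand side after renaming $d'\mapsto d$. The step I expect to require the most care is the scalar reduction $H(x_{i})E(-x_{i})=1+(-1)^{n}qx_{i}^{N}$ on the sector $V_{n}$ together with the harmless padding by $E(0)=1$; the pairwise commutativity of the full set $\{H(x_{i}),E(y)\}$ is exactly what legitimises the pairing, and in the case $n=k$ the variables $x_{n+1},\ldots,x_{k}$ are absent, recovering the identity stated in the introduction.
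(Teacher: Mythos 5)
Your proposal is correct and follows essentially the same route as the paper: both expand the matrix element $\langle \lambda |H(x_{1})\cdots H(x_{n})E(-x_{1})\cdots E(-x_{n})E(x_{n+1})\cdots E(x_{k})|\nu \rangle$ in two ways, once via the generating identities (\ref{ZgeneratesGW}), (\ref{Z'generatesGW}) with a resolution of the identity over $\mu$, and once via the functional relation of Proposition \ref{TQ}, then compare coefficients of $q^{r}$. Your added details (the scalar $(-1)^{n}$ from $\prod_{j}\sigma_{j}^{z}$ on $V_{n}$ and the padding by $E(0)=1$) are correct and merely make explicit what the paper leaves implicit.
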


\begin{proof}
Consider the matrix element $\langle \lambda |H(x_{1})\cdots
H(x_{n})E(y_{1})\cdots E(y_{k})|\nu \rangle $. Then according to (\ref%
{ZgeneratesGW}) and (\ref{Z'generatesGW}) we have%
\begin{equation*}
\langle \lambda |H(x_{1})\cdots H(x_{n})E(y_{1})\cdots E(y_{k})|\nu \rangle
=\sum_{\substack{ \mu \in (n,k)  \\ d,d^{\prime }\geq 0}}q^{d+d^{\prime
}}s_{\lambda /d/\mu }(x)s_{\mu ^{\prime }/d^{\prime }/\nu ^{\prime }}(y)\;.
\end{equation*}%
On the other hand setting $y_{i}=-x_{i}$ for $i=1,2,\ldots ,n$ we obtain
from the functional relation (\ref{TQ_eqn}) that%
\begin{equation*}
\langle \lambda |H(x_{1})\cdots H(x_{n})E(y_{1})\cdots E(y_{k})|\nu \rangle
=\langle \lambda |E(y_{n+1})\cdots E(y_{k})|\nu \rangle
\prod_{i=1}^{n}(1+(-1)^{n}qx_{i}^{N})\;.
\end{equation*}%
Equating powers of $q$ on both sides we find the claimed identities.
\end{proof}

\end{document}